\documentclass[11pt]{article}

\usepackage{style}
\usepackage{shortcuts}

%% Comment commands.
\definecolor{amethyst}{rgb}{0.6, 0.4, 0.8}

\title{A Truly Subcubic Combinatorial Algorithm \\ for Induced $4$-Cycle Detection
% \thanks{An earlier version of this work appeared in the \textit{Proceedings of the 2026 Annual ACM-SIAM Symposium on Discrete Algorithms} (SODA 2026), pp. 3562-3599 \cite{AAF2026-conf}.}
}

\author{
Amir Abboud\thanks{Weizmann Institute of Science, \url{amir.abboud@gmail.com}. }
% \\ 
\and 
Shyan Akmal\orcidlink{0000-0002-7266-2041}\thanks{Max Planck Institute for Informatics, \url{shyan.akmal@gmail.com}. }
% \\INSAIT
\and
Nick Fischer\orcidlink{0009-0001-0909-3296}\thanks{Max Planck Institute for Informatics, \url{nfischer@mpi-inf.mpg.de}.  }
% \\INSAIT
}
% \author{Anonymous Author(s)}
\date{\vspace{-3ex}}

\begin{document}

\maketitle

\vspace{-4ex}
\begin{abstract}
\noindent
In this paper, we present the first truly subcubic, combinatorial algorithm for detecting an induced $4$-cycle in a graph. The running time is $O(n^{2.84})$ on $n$-node graphs, thus separating the task of detecting induced $4$-cycles from detecting triangles, which requires $n^{3-o(1)}$ time combinatorially under the popular Boolean Matrix Multiplication hypothesis.

Significant work has gone into characterizing the 
exact time complexity of induced subgraph detection, relative to the complexity of detecting cliques of various sizes. Prior work identified the question of whether induced $4$-cycle detection is triangle-hard as the only remaining case towards completing the lowest level of the classification, dubbing it a \emph{curious case} [\textcolor{ForestGreen}{Dalirrooyfard, Vassilevska W., FOCS 2022}]. Our result can be seen as a negative resolution of this question.

Our algorithm deviates from previous techniques in the large body of subgraph detection algorithms and employs 
the trendy topic of graph decomposition that has hitherto been restricted to more global problems (as in the use of expander decompositions for flow problems) or to shaving subpolynomial factors (as in the application of graph regularity lemmas). While our algorithm is slower than the (non-combinatorial) state-of-the-art $\tilde{O}(n^{\omega})$ time algorithm based on polynomial identity testing [\textcolor{ForestGreen}{Vassilevska W., Wang, Williams, Yu, SODA 2014}], combinatorial advancements often come with other benefits. In particular, we give the first nontrivial \emph{deterministic} algorithm for detecting induced $4$-cycles.
\end{abstract}

\vspace{1cm}

\paragraph*{Acknowledgements}
We thank Virginia Vassilevska Williams for answering our questions about existing lower bounds for subgraph detection, and Jakob Nogler for pointing out writing mistakes in an earlier version of this paper.
This work is partially funded by the Ministry of Education and Science of Bulgaria
(support for INSAIT, part of the Bulgarian National Roadmap for Research Infrastructure). This work is part of the project CONJEXITY that has received funding from the European Research Council (ERC) under the European Union's Horizon Europe research and innovation programme (grant agreement No.~101078482).
This work received funding from the Klaus Tschira Boost Fund, a joint
initiative of GSO Guidance, Skills \& Opportunities e.V. and the Klaus Tschira Stiftung.

\pagenumbering{gobble}
\newpage
\tableofcontents
\newpage
\pagenumbering{arabic}

\section{Introduction}
\label{sec:intro}

Detecting small patterns in large graphs arises naturally 
across the sciences.
This task 
 plays a particularly important role in multiple fields of theoretical computer science, where it has been explored from every imaginable angle, making it one of the most extensively studied problems in algorithmic graph theory.
In this paper, we focus on one of its most basic forms.
For a fixed pattern graph $H$, in the \emph{induced $H$ detection} problem we are given a host graph $G$
on~$n$ vertices, 
and are tasked with determining if $G$ contains  $H$ as an induced\footnote{An induced subgraph is obtained by taking a subset of nodes and all edges among them.
Thus, e.g., the $4$-cycle is a non-induced subgraph of the $4$-clique, but it is not an induced subgraph.} subgraph.

Significant work \cite{corneil1985linear,nevsetvril1985complexity,Olariu1988,KloksKratschMuller2000,eisenbrand2004complexity,KowalukLingasLundell2013,VassilevskaWangWilliamsYu2014,floderus2015induced,FloderusKLL15,BlaserKomarathSreenivasaiah2018,DalirrooyfardVassilevskaVuong2019,DalirrooyfardVassilevska2022} has gone into characterizing the 
exact time complexity of induced $H$ detection based on 
the structural properties of the patterns $H$,
with researchers attempting to order the complexity of these problems into a hierarchy relative to the complexity of detecting cliques of various sizes.
A pattern $H$ is
placed in the $k^{\text{th}}$ level of the hierarchy 
if solving induced $H$ detection has the same
time complexity 
as detecting a $k$-clique in an $n$-node graph. 
This is done by (1) presenting an algorithm running in the same time as $k$-clique detection, and (2) showing a reduction from $k$-clique. 

Ignoring subpolynomial factors, the longstanding upper bounds for detecting $k$-cliques in $n$-node graphs
 are $O(n^{k})$ using ``combinatorial'' algorithms (by brute force) 
 and 
 $O(n^{\omega k/3})$ in general~\cite{nevsetvril1985complexity},\footnote{To be precise, the upper bound $O(n^{\omega k/3
 })$ only applies when $k$ is divisible by $3$. 
 The best-known upper bound in general is $O(n^{\omega(\lfloor k/3\rfloor, \lceil k/3\rceil, \lceil(k-1)/3\rceil)})$~\cite{eisenbrand2004complexity}, where $\omega(\cdot, \cdot, \cdot)$ denotes the exponent of rectangular matrix multiplication.}
 where  $\omega \le 2.3713$
  is the exponent of matrix multiplication.
  In particular, the bounds for triangle detection, which correspond to the first nontrivial level of the hierarchy ($k=3$), are $O(n^3)$ combinatorially and $O(n^{\omega})$ in general.
    The term ``combinatorial'' intuitively refers to algorithms that avoid the use of fast matrix multiplication; we refer the reader to \cite[Section 1.1]{AbboudFischerKelleyLovettMeka2024} for an extensive discussion on the motivations behind seeking such algorithms and for operational definitions of this concept.
Under the popular $k$-Clique conjecture from fine-grained complexity \cite{AbboudBackursVassilevska2015}, these longstanding bounds cannot be improved by a polynomial factor. Moreover, a truly subcubic combinatorial algorithm for triangle detection refutes the central Boolean Matrix Multiplication (BMM) conjecture \cite{VassilevskaWilliams2018}.
Thus under these conjectures, placing all patterns $H$ into this hierarchy would characterize the time complexity of induced $H$ detection both for combinatorial algorithms and in general.

 In this paper, we aim to complete the lowest level of the hierarchy, which 
 corresponds to 
 classifying which patterns can be solved in $O(n^2)$ time and which are as hard as triangle detection. 
 That is, we would like to know which patterns can be solved in linear time (in the input size) and which cannot.
As has been shown in prior work (and explained below), all that remains is \emph{the curious case of $4$-cycle} asking the simple but infamous open question (see e.g., \cite{Eschen2011,DalirrooyfardVassilevska2022}):

    \begin{center}
        \textit{Is detecting induced 4-cycles as hard as detecting triangles?}
    \end{center}

Let us now explain why $4$-cycle is the only unclassified case.
    First, since our interest is only in the complexity in terms of $n$ (disregarding the number of edges in $G$), we may assume that detecting an induced copy of $H$ is equivalent to detecting an induced copy of its complement $\bar{H}$, just by complementing the input graph.
    Second, it is known that if $H$ contains a triangle
    or its complement as an induced subgraph, the induced $H$ detection problem is at least as hard as triangle detection \cite[Theorem 1.1]{DalirrooyfardVassilevskaVuong2019}.
    Therefore, any pattern that contains a $K_3$ or $\bar{K_3}$ is already classified as being in the third level of the hierarchy (or higher).
    By classic bounds on Ramsey numbers (see e.g., \cite[Theorem 1.4]{LiLin2022}), this immediately classifies all patterns $H$ with at least six vertices as triangle hard.
    By inspection, all pattern graphs $H$ on three to five vertices also contain $K_3$ or $\bar{K_3}$ as a subgraph, except for the $5$-cycle 
    $C_5$, the $4$-cycle $C_4$, the $3$-edge-path~$P_3$, and the $2$-edge-path $P_2$ (as well as their complements).
    The latter two patterns fall into the lowest level of the hierarchy: a graph free of induced $P_2$ must be a disjoint union of cliques and can be recognized with a connected components algorithm, while a graph free of $P_3$ is called a cograph and can be recognized by a simple linear-time algorithm as well \cite{corneil1985linear}.
    This leaves us with~$C_5$ and~$C_4$.
    A folklore reduction shows that induced 5-cycle detection is triangle hard \cite[Section 1.1]{DahlgaardKnudsenStockel2017}.
    Designing such a reduction for 4-cycles would answer the above question positively.

    On the algorithms front, one can always reduce induced $H$ detection to $k$-clique detection where $k$ is the number of nodes in $H$, meaning that the $4$-cycle is in the $4^{\text{th}}$ level of the hierarchy or below.
    In a well-known paper~\cite{VassilevskaWangWilliamsYu2014}, Vassilevska Williams, Wang, Williams, and Yu proved that the 4-cycle is in the $3^{\text{rd}}$ level of the hierarchy: it can be detected in $\tilde{O}(n^3)$ time combinatorially and in $\tilde{O}(n^\omega)$ time via fast matrix multiplication (in fact, they prove this holds for \emph{all} four-node patterns except for the clique and independent set on four vertices). Beating this triangle-detection runtime (either the combinatorial or the general one) would answer the above question negatively.

    Up to this work, the prevailing intuition has been that induced $4$-cycle detection is a hard problem, leading to an obsession with the search for a reduction from triangle detection (see e.g.,\ early attempts at partial progress in this area \cite[Corollary 5]{floderus2015induced}, as well as more recent results trying to shed light on what such a reduction should look like \cite[Theorem 2.4]{DalirrooyfardVassilevska2022}).
    Let us mention some of the reasons for this.
    First, unlike triangle, the simple algorithm for induced $4$-cycle  has time complexity corresponding to the time complexity of $4$-clique; even a sub-quartic combinatorial algorithm requires the heavy machinery of pattern polynomials and polynomial identity testing (which may not even be called combinatorial under a more conservative definition). In fact, as we discuss below, the deterministic time complexity of induced $4$-cycle from prior work is \emph{super-cubic}.
    Another natural setting in which $4$-cycle appears to be much harder than triangle is that of counting: by a known reduction, counting the number of induced $4$-cycles is not only triangle-hard, but is even as hard as counting \emph{$4$-cliques} \cite{KloksKratschMuller2000}.
    Still, the only success at proving a lower bound under fine-grained complexity hypotheses is a recent proof by Dalirrooyfard and Vassilevska Williams that on graphs with $\Theta(n^{3/2})$ edges, detecting induced 4-cycles requires at least $n^{2-o(1)}$ time \cite[Theorem 2.4]{DalirrooyfardVassilevska2022}.
    This demonstrates that induced 4-cycle detection is unlikely to be solvable in linear time (in the number of edges), but does not tell us whether the existing triangle detection runtime bound for this problem is tight or not.
    Notably, proving such a conditional lower bound for triangle detection (in sparse graphs) itself is a big open question, giving yet another setting in which $4$-cycle appears harder. 
    In fact, as a function of the number of edges $m$ the best bound for induced $4$-cycle detection is \smash{$\tilde{O}(m^{\frac{4\omega-1}{2\omega+1}})$} time~\cite[Corollary 4.1]{VassilevskaWangWilliamsYu2014}, which is higher than the best~\smash{$O(m^{\frac{2\omega}{\omega+1}})$}~time bound \cite{AlonYusterZwick1997} known for triangle detection.

    Meanwhile, perhaps the only reason to think that $4$-cycle is easier than triangle comes from the analogy with the \emph{non-induced} case, where detecting a (non-induced) $4$-cycle has a classic $O(n^2)$ time combinatorial algorithm.
    At some level, the easiness of non-induced $4$-cycle comes from the fact that a dense graph cannot be $C_4$-free.
    While this no longer applies for induced $4$-cycles, it is still true that dense induced $C_4$-free graphs possess a lot of structure.
    What this structure is and how we might capitalize on it algorithmically is far from obvious; such a technique may have far-reaching consequences. 
    One should, of course, be careful with such analogies, since the induced case is often much harder than the non-induced case (e.g.,\ non-induced $k$-path \mbox{is in $2^{k}\poly(n)$ time} \cite{williams2009finding}, while induced $k$-path requires $n^{\Omega(k)}$ time under fine-grained hypotheses \cite{DalirrooyfardVassilevska2022}).

\newpage
% \medskip

Let us take a step back and motivate the above main question from a different perspective. 
Fine-grained complexity aims to reveal how complexity arises by identifying the atomic computational tasks that cannot be solved in linear time.
Triangle detection, the task of finding three objects that are in a pairwise relationship with each other, has been clearly established as such an atomic hard problem. 
Induced 4-cycle detection has a different flavor, asking for four objects satisfying two kinds of constraints (edges and non-edges), and has hitherto seemed to embody a hard task that is not explainable by a reduction from triangle. In particular, from the perspective of combinatorial algorithms, it is one of the simplest problems that cannot be solved combinatorially in subcubic time, and yet is not known to be triangle-hard.
Does this mean that a new conjecture is due, highlighting another atomic hard problem?

\paragraph{Our Results.}

The main result of this paper is a new algorithm for induced 4-cycle detection that breaks the cubic barrier with a \emph{combinatorial} algorithm.
Under the BMM hypothesis, this gives a counterintuitive separation between induced 4-cycle detection and triangle detection.

    \begin{restatable}{theorem}{deterministic}
        \label{thm:deterministic}
        There is a deterministic, combinatorial algorithm solving induced 4-cycle detection on graphs with $n$ vertices in $\tilde{O}(n^{3-1/6}) \le O(n^{2.84})$ time.
    \end{restatable}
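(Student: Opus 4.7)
The plan is to use the following equivalent formulation: $G$ has an induced $C_4$ if and only if there exist non-adjacent $b, d$ such that the induced subgraph $G[N(b) \cap N(d)]$ is not a clique, i.e., contains a non-edge. I would fix a degree threshold $t = n^{\alpha}$ (to be tuned) and call a vertex \emph{low} if its degree is at most $t$, \emph{high} otherwise; the set $H$ of high-degree vertices satisfies $|H| \le 2m/t \le n^2/t$. Every induced $C_4$ then either contains at least one low-degree vertex or lies entirely in $H$, which structures the algorithm into two cases.

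For the case that some $C_4$ vertex $b$ is low-degree, I would iterate over such $b$; for each, enumerate the at most $t^2$ pairs $(a,c) \in N(b) \times N(b)$ with $ac \notin E$, and test in $\tilde O(n)$ time via bitwise operations on the adjacency rows whether some $d \in N(a) \cap N(c) \setminus N[b]$ completes the induced $C_4$. Summed over all low-degree choices of $b$, this case runs in time $\tilde O(n^2 t^2)$.

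The main challenge is the ``all-high'' case in which every $C_4$ vertex lies in $H$. A naive cubic subroutine on $G[H]$ runs in $\tilde O(|H|^3) = \tilde O(n^6 / t^3)$, which together with the low-degree bound $\tilde O(n^2 t^2)$ does not balance below $n^3$. This is where the paper's new graph-decomposition idea must enter: I would decompose the dense graph $G[H]$ into structured pieces on which induced $C_4$-detection becomes substantially faster than cubic. A natural template is an iterative peeling scheme: repeatedly identify a vertex $v \in H$ whose local structure (e.g., the common-neighborhood pattern of $v$ with respect to other vertices of $H$, or a near-twin relationship with another vertex) either directly witnesses an induced $C_4$ or allows us to eliminate $v$ with a controlled amount of work charged against a decreasing potential. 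The target is a total cost of the form $\tilde O(|H|^{\gamma})$ with $\gamma$ strictly less than $3$ by enough margin that, combined with $\tilde O(n^2 t^2)$, the exponents balance at $17/6$.

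The principal obstacle will be designing and analyzing this decomposition, which is the genuinely novel content of the paper. What is needed is a combinatorial \emph{dichotomy lemma} showing that every sufficiently dense induced-$C_4$-free graph forces exploitable structure at some vertex, so that a peeling step with bounded amortized cost is always available. Once such a structural theorem is in place, tuning $\alpha$ to equate the easy-case cost $n^2 t^2$ with the decomposition's hard-case cost yields the claimed $\tilde O(n^{3 - 1/6}) \le O(n^{2.84})$ bound. Since every step is purely combinatorial (no fast matrix multiplication, no polynomial identity testing) and deterministic (no sampling or color-coding), this simultaneously delivers the first nontrivial deterministic algorithm for induced $C_4$-detection.
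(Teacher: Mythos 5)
There is a genuine gap: the entire algorithmic core is missing. Your low-degree case is routine (and fine as stated), but the ``all-high'' case is exactly the hard dense case, and there you only postulate that some ``dichotomy lemma'' and ``peeling scheme'' with amortized cost $\tilde O(|H|^{\gamma})$, $\gamma<3$, must exist --- without specifying what structure is forced at a vertex, what the peeling step does, or why its cost can be charged to a potential. Note also that your degree split buys essentially nothing for the balance you claim: to make the easy case $\tilde O(n^2t^2)$ equal $n^{17/6}$ you need $t=n^{5/12}$, but in a dense (say, near-regular) graph every vertex is high-degree, so $H$ is all of $V$ and your hypothesized subroutine would have to achieve $\tilde O(n^{17/6})$ on arbitrary dense graphs by itself. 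In other words, the proposal reduces the theorem to an unproved statement that is essentially the theorem.

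For comparison, the paper does not split by degree at all. Its structural engine is the Gy\'arf\'as--Hubenko--Solymosi fact that an induced-$C_4$-free graph with average degree $d$ contains a clique of size $\Omega(d^2/n)$ (with a win/win twist: either extract such a clique or exhibit an induced $C_4$ from a non-clique common neighborhood). Iterating this gives a layered decomposition of $V$ into disjoint cliques (``clusters'') of sizes $\Theta(n/2^{\ell})$ with bounded common neighborhoods in the remainder. The second engine is a structural characterization of cluster pairs: two cliques span no induced $C_4$ iff their inter-clique edges are \emph{ordered} (nested neighborhoods encoded by integer functions $f,g$ with adjacency $\Leftrightarrow f(a)\le g(b)$), and this ordering correlates across triples. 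Detection then proceeds by casework on how many clusters the cycle touches (2, 3, or 4), implemented with orthogonal range query data structures in $\tilde O(\text{cluster size})$ per tuple, plus sparsity-based enumeration using the bounded common-neighborhood guarantee; the exponent $17/6$ arises from balancing these clustered cases (the bottleneck being two nodes in $\sqrt n$-size cliques and two in $n^{1/3}$-size cliques), not from a low/high degree threshold. None of this machinery --- the clique extraction, the ordering lemma, the correlation of orderings, or the level casework --- is present or replaced in your proposal, so the claimed bound is not established.
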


From the perspective of combinatorial algorithms, our result answers the above question negatively, showing that induced 4-cycle detection is strictly easier than triangle detection and hence does not belong in the $3^{\text{rd}}$ level of the hierarchy.
What this means depends on whether its complexity will end up improving all the way down to $\tilde{O}(n^2)$ or not.
If it does, and we find this likely, then it simply means that induced $4$-cycle is an easy pattern that belongs to the lowest level of the hierarchy.
If, on the other hand, one discovers a super-quadratic lower bound, it would leave induced $4$-cycle as an intermediate problem in the hierarchy and show that a hierarchy based on the relationship to $k$-clique cannot be complete.

Our algorithm deviates significantly 
from all previous techniques in the literature of induced subgraph detection and is based on a clique decomposition for induced $4$-cycle-free graphs.
It follows the direction hinted at above, in which we identify an interesting property of dense induced $4$-cycle-free graphs that can be exploited algorithmically; for us, one such structural property is the existence of large cliques and a corresponding clique decomposition. Figuring out how to exploit this decomposition algorithmically is the hard part, and we do this by extracting much more structure from induced $4$-cycle freeness. We refer to Section~\ref{sec:overview} for a detailed technical overview.
To our knowledge, this is the first work in the large body of subgraph detection algorithms that employs 
the trendy topic 
of graph decompositions, akin to the use of expander decompositions for flow problems (see e.g.,~\cite{Saranurak21}).
While the latter technique is natural for global problems, it has not been meaningfully applied to local problems such as subgraph detection before  (putting aside other computational models such as in distributed computing where it has been used in breakthrough triangle detection algorithms \cite{chang2019distributed}).
Perhaps the most similar to our work is the use of decompositions based on regularity lemmas for triangle detection \cite{bansal2009regularity,AbboudFischerKelleyLovettMeka2024}; so far these techniques have only given subpolynomial improvements.

Finally, a strong motivation for seeking combinatorial algorithms is that they often come with added benefits, even when they are outperformed in runtime by non-combinatorial methods.
For example, one disadvantage of the $\tilde O(n^{\omega})$ algebraic algorithm from \cite{VassilevskaWangWilliamsYu2014} is that it is randomized,
    and appears difficult to derandomize because of its use of polynomial identity testing.
    If we restrict to deterministic algorithms, then no polynomial improvement over the trivial 4-clique runtime for induced 4-cycle detection was known.
    Meanwhile, as stated in Theorem~\ref{thm:deterministic}, our technique can be implemented deterministically, leading to the first truly subcubic \emph{deterministic} algorithm for detecting induced $4$-cycles (even among non-combinatorial algorithms).

\subsection{Related Work}
Detecting induced patterns has also received attention for many larger pattern graphs $H$. To name a few specific examples, the $5$-cycle ($H = C_5$) is well-understood (namely, triangle equivalent)~\cite{BlaserKomarathSreenivasaiah2018}, but for all larger cycles ($H = C_k$ for $k \geq 6$) there are gaps between lower and upper bounds: the fastest known algorithms either take $k$-clique time or $O(n^{k-2})$ time combinatorially~\cite{BlaserKomarathSreenivasaiah2018}, and the current best lower bound only shows that induced $H$ detection requires $\lfloor3k / 4\rfloor - \Theta(1)$-clique time~\cite[Theorem 2.3]{DalirrooyfardVassilevska2022}. 
The state of the art for paths ($H = P_k$) is similar~\cite{BlaserKomarathSreenivasaiah2018,DalirrooyfardVassilevska2022}.
More generally, it is known that \emph{any} $k$-node pattern graph $H$ requires $\Omega(\sqrt{k})$-clique detection time conditioned on Hadwiger's conjecture~\cite[Corollary 1.1]{DalirrooyfardVassilevskaVuong2019}, and $\Omega(k^{1/4})$-clique detection time unconditionally~\cite[Corollary 2.4]{DalirrooyfardVassilevska2022}. For a random $k$-node graph $H$, this lower bound improves to $\Omega(k / \log k)$-clique hardness~\cite[Corollary 1.2]{DalirrooyfardVassilevskaVuong2019}. Finally,~\cite{manurangsi2020strongish} gives $n^{\Omega(k)}$-hardness for all patterns conditioned on a stronger, less standard hypothesis.

The problems of detecting and counting induced subgraphs $H$ are also important in the field of parameterized complexity. Here the size $k = |H|$ is not fixed, but is viewed as a growing parameter. 
For this reason, the natural problem formulation is to fix a family of graphs~$\Phi$ (also called a \emph{graph property}) and to regard $H \in \Phi$ as part of the input. The typical goal is to achieve parameterized classification results that characterize the properties $\Phi$ for which the problem is FPT (i.e., can be solved in $f(k) \cdot \poly(n)$ time for some function $f$) versus those for which it is $\text{W}[1]$-hard (i.e., the problem is as hard as $g(k)$-clique detection for some function $g$). For induced subgraph detection, such a classification was achieved by Khot and Raman~\cite{KhotRaman2002} for the class of \emph{hereditary} properties: properties~$\Phi$ that are closed under taking induced subgraphs (see also~\cite{ChenThurleyWeyer2008,Eppstein2021}). 
Achieving such classifications for the counting problem has attracted even more attention \cite{JerrumMeeks2015, JerrumMeeks2015b, JerrumMeeks2016, Meeks2016, CurticapeanDellMarx2017, RothSchmitt2020, DorflerRothSchmittWellnitz2021, RothSchmittWellnitz2020, FockeRoth2022, DoringMarxWellnitz2024, CurticapeanNeuen2025, DoringMarxWellnitz2025}.

Beyond induced subgraph detection, there is endless literature on detecting subgraphs that are not necessarily induced; for a survey see \cite{MarxPilipczuk2014}. In particular, the non-induced $4$-cycle detection, counting, and listing problems play an important role in various reductions from fine-grained complexity \cite{DudekG19,DudekG20,AbboudBKZ22,AbboudBF23,JinX23,ChanX24}.

\section{Technical Overview} \label{sec:overview}
In this section we give a detailed overview of our algorithm for induced 4-cycle detection. Given an input graph $G = (V,E)$ on $n$ vertices,
our algorithm works in three steps:
\begin{enumerate}
    \item We first partition 
    the vertices
    $V = (\bigsqcup_{X \in \mathcal X} X) \sqcup R$ into a collection of \emph{large cliques} $\mathcal X$ together with a \emph{sparse remainder} $R$.
    \item We detect all induced $4$-cycles among the cliques in $\mathcal X$.
    \item We detect all induced $4$-cycles overlapping with the remainder $R$.
\end{enumerate}
We discuss these three steps individually in the following subsections.

A noteworthy recurring theme is that in several steps we design algorithms with a \emph{win/win} framework: 
either we directly win by finding an induced $4$-cycle, or we win by learning some new \emph{structure} in the input graph. 
We thereby accumulate more and more structural knowledge which we crucially exploit in the subsequent steps.

\subsection{Decomposition into Clusters}
    \label{subsec:overview-decomp}
Our starting point is the following observation: All relevant constructions of graphs without induced $4$-cycles---e.g., the graphs constructed in the fine-grained reduction of~\cite[Section 6]{DalirrooyfardVassilevska2022-arXiv} or in~\cite[Construction 1]{GyarfasHubenkoSolymosi2002}---consist of many large cliques. 
Could this be necessary? 
And could we possibly even obtain a decomposition theorem 
partitioning the input graph into many large cliques? 
A natural first instinct is to be skeptical since after all, the related expander decompositions~\cite{KannanVempalaVetta2004,SpielmanTeng2004,SaranurakWang2019} and regularity decompositions~\cite{Szemeredi1975,FriezeKannan1999} similarly partition a graph into some structured ``clusters'' together with a sparse  remainder, yet they achieve much weaker structural conditions for their respective clusters. 
However, recall that we have the freedom to settle for a win/win decomposition: 
either we achieve a good decomposition \emph{pretending} that the graph does not have induced $4$-cycles, or this pretense fails and we can immediately report an induced $4$-cycle. 
It indeed turns out that graphs $G$ avoiding induced $4$-cycles are structured enough to permit such an argument.

\paragraph{Structural Insight 1: Large Cliques.}
For instance, take any two non-adjacent nodes $x$ and~$y$ in~$G$, and consider their set of common neighbors $N(x) \cap N(y)$. This set must be a clique, as otherwise there exists a non-edge $\grp{z, w}$ in $N(x) \cap N(y)$, 
which then forces $(x, z, y, w)$ 
to be an induced $4$-cycle 
in~$G$. 
Consequently, if $G$ has two non-adjacent nodes $x, y$ whose common neighborhood~\makebox{$N(x) \cap N(y)$} is large, then we have identified a large clique. 
Otherwise, all of these common neighborhoods are small.
In this case, we intuitively expect the graph $G$ to be sparse. This  suggests that any graph without induced $4$-cycles is either \emph{sparse} or contains a \emph{large clique}. 
This suspicion turns out to be correct as proven by Gy\'{a}rf\'{a}s, Hubenko, and Solymosi~\cite{GyarfasHubenkoSolymosi2002}:

\begin{theorem}[{\cite[Theorem 1]{GyarfasHubenkoSolymosi2002}}] \label{thm:4cycle-free-large-clique}
Any $n$-node graph with average degree $d$ that contains no induced $4$-cycle
must have a clique of size $\Omega(d^2 / n)$.
\end{theorem}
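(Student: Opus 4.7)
The plan is to prove the contrapositive by induction on $n$: assuming $G$ has maximum clique size $\omega$, the claim becomes $d = O(\sqrt{\omega n})$, which rearranges to the desired $\omega = \Omega(d^2/n)$.

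The starting point is the structural observation already highlighted in the preceding paragraph: for any non-adjacent pair $\{x, y\}$, the common neighborhood $N(x) \cap N(y)$ is a clique, so $|N(x) \cap N(y)| \le \omega$. The main step is a double count of the triples $(x, y, v)$ with $x \ne y$, $xy \notin E$, and $v \in N(x) \cap N(y)$. Grouping by the pair $\{x, y\}$ bounds the count by $\binom{n}{2} \cdot \omega$, while grouping by the middle vertex $v$ re-expresses it as $\sum_v \bar{e}(N(v))$, where $\bar{e}(N(v))$ denotes the number of non-edges inside the induced subgraph $G[N(v)]$.

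The heart of the argument is establishing a matching lower bound $\sum_v \bar{e}(N(v)) = \Omega(n d^2)$, which combined with the upper bound above immediately yields $\omega = \Omega(d^2/n)$. For this I would appeal to the inductive hypothesis inside each neighborhood: since $G[N(v)]$ is induced $C_4$-free with clique number at most $\omega - 1$ (any clique in $N(v)$ plus $v$ itself is a clique of $G$), induction gives that its average degree is $O(\sqrt{\omega \cdot d(v)})$, so $e(N(v)) = O(d(v)^{3/2} \sqrt{\omega})$. Whenever $d(v) \gg \omega$, this edge count is negligible compared to $\binom{d(v)}{2}$, and hence $\bar{e}(N(v)) = \Omega(d(v)^2)$; summing over $v$ and invoking convexity $\sum_v d(v)^2 \ge n d^2$ yields the required lower bound.

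The main obstacle is the degenerate regime where many vertices have degree too small for the induction to extract enough non-edges. Fortunately this case is essentially trivial: if $d = O(\omega)$, then $\omega = \Omega(d) \ge \Omega(d^2/n)$ via the crude bound $d \le n$, so the theorem already holds outright. The substantive work is therefore confined to the regime $d \gg \omega$, where a standard pigeonhole (e.g., restricting the double count to the ``heavy'' vertices with $d(v) \ge d/2$, which carry a constant fraction of the total degree by $\sum_v d(v)^2 \ge n d^2$) ensures that a constant fraction of the degree mass comes from vertices to which the inductive estimate applies, and the double count closes.
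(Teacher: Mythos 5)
Your double-counting skeleton is fine: common neighborhoods of non-adjacent pairs are cliques of size at most $\omega$, so the number of triples $(x,y,v)$ with $xy\notin E$ and $v\in N(x)\cap N(y)$ is at most $\binom{n}{2}\omega$, and regrouping by $v$ gives $\sum_v \overline{e}(N(v))\le\binom{n}{2}\omega$; the heavy-vertex pigeonhole ($\sum_{d(v)\ge d/2}d(v)^2\ge \tfrac12\sum_v d(v)^2\ge \tfrac12 nd^2$) is also correct. The gap is in the inductive bootstrapping and, specifically, in the claim that the degenerate regime is ``essentially trivial.'' Write the inductive statement with its constant, say $d\le A\sqrt{\omega n}$. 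Inside a neighborhood the hypothesis gives $e(G[N(v)])\le \tfrac{A}{2}\sqrt{\omega}\,d(v)^{3/2}$, and this term is dominated by $\binom{d(v)}{2}$ only once $d(v)\ge \gamma A^{2}\omega$ for some absolute $\gamma>1$ (you need the subtracted term to be at most a proper fraction of $\binom{d(v)}{2}$). So the regime in which your induction yields nothing is not ``$d=O(\omega)$'' with an absolute constant, but $d\le \gamma A^{2}\omega$, where $A$ is the very constant you are trying to prove. In that regime your fallback gives $\omega\ge d/(\gamma A^{2})\ge d^{2}/(\gamma A^{2}n)$, which is the theorem with constant $1/(\gamma A^{2})$ --- strictly worse than the constant $1/A^{2}$ assumed in the inductive hypothesis. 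Since $\gamma$ is an absolute constant independent of $A$, enlarging $A$ never repairs this, and the induction does not close for any choice of constants. The failure is not a bookkeeping artifact: it occurs exactly for dense graphs ($d=\Theta(n)$, clique number a constant fraction of $n$), where the crude bound $d\le n$ is essentially tight; but ``dense induced-$C_4$-free graphs have linear-size cliques'' is precisely the main content of the Gy\'arf\'as--Hubenko--Solymosi theorem, so it cannot be waved off as the trivial case.

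For contrast, the paper's proof is non-inductive and avoids this entirely: take a \emph{maximal} independent set $I$, observe that every common neighborhood $N(x)\cap N(y)$ for distinct $x,y\in I$ is a clique (as in your argument) and that every ``private neighbor'' set $U(x)=\{v: N(v)\cap I=\{x\}\}$ is a clique (else swapping $x$ for a non-edge in $U(x)$ would enlarge $I$), and then average over the covering of a min-degree-$\Omega(d)$ subgraph by $I$, the sets $U(x)$, and the pairwise common neighborhoods to find one clique of size $\Omega(d^{2}/n)$ or $\Omega(d)$ directly, with fixed absolute constants in one pass. To salvage your approach you would need either a non-self-referential source for the sparsity of $G[N(v)]$ (Tur\'an in the complement only yields $\overline{e}(N(v))\ge d(v)^2/(2\omega)-d(v)/2$, which gives the weaker $\omega=\Omega(d/\sqrt{n})$), or a separate argument for the dense regime --- at which point you are essentially reproving the theorem by other means.
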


The proof of \cref{thm:4cycle-free-large-clique} is simple and elegant.
We quickly sketch the main idea: Let $I$ be a maximum-size independent set in $G$. Then for each pair of distinct nodes $x, y \in I$, their set of common neighbors $N(x) \cap N(y)$ is a clique by the same argument as before.
In addition, for each node $x \in I$, the set $U(x)$ consisting of all nodes in $G$ for which $x$ is the \emph{unique} neighbor in $I$ must be a clique---otherwise, if $U(x)$ contains a non-edge $\set{z, w}$, then $\grp{I \setminus \set{x}} \cup \set{z, w}$ would be an independent set of larger size than $I$. Finally, an averaging argument shows that there is a distinct pair~\makebox{$x, y \in I$} with $|N(x) \cap N(y)| \geq \Omega(d^2 / n)$ or there is a node $x \in I$ with $|U(x)| \geq \Omega(d) = \Omega(d^2 / n)$. In either case we find a clique as claimed.

\paragraph{Decomposition Algorithm.}
Our first contribution is to refine the structural result due to 
Gy\'{a}rf\'{a}s, Hubenko, and Solymosi
\cite{GyarfasHubenkoSolymosi2002}, and turn it into an efficient decomposition algorithm:

\begin{restatable}[Large Cluster Decomposition]{lemma}{lemlargeclusterdecomp} \label{lem:cluster-decomp-large}
Let $G = (V, E)$ be the input graph and let $\Delta \geq 1$. There is a deterministic $O(n^3 / \Delta)$-time algorithm that either detects an induced 4-cycle in $G$, or computes a decomposition
\begin{equation*}
    V = \left(\bigsqcup_{X \in \mathcal X} X\right) \sqcup R,
\end{equation*}
where each $X \in \mathcal X$ is a clique of size $\Theta(\Delta)$ in $G$, and $G[R]$ has at most $O(n^{3/2} \Delta^{1/2})$ edges.
\end{restatable}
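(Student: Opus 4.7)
The plan is to iteratively extract large cliques from $G$ until the remaining subgraph becomes sparse. Set a threshold $m^\star = \Theta(n^{3/2}\sqrt{\Delta})$, initialize $R = V$ and $\mathcal X = \emptyset$, and while $|E(G[R])| > m^\star$, find a clique $X \subseteq R$ of size $\Theta(\Delta)$ in $G[R]$ (or report an induced $C_4$) and move it from $R$ into $\mathcal X$. Throughout the loop, the average degree of $G[R]$ is $\Omega(\sqrt{n\Delta})$, so Theorem~\ref{thm:4cycle-free-large-clique} guarantees a clique of size $\Omega(\Delta)$ in $G[R]$ (which we truncate to exactly $\Theta(\Delta)$) whenever $G$ is induced $C_4$-free. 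Because each iteration removes $\Theta(\Delta)$ vertices, the loop terminates after $O(n/\Delta)$ rounds and leaves $G[R]$ with $O(n^{3/2}\sqrt{\Delta})$ edges, as required.

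The core subroutine is therefore an algorithmic version of Theorem~\ref{thm:4cycle-free-large-clique}: given an induced $C_4$-free graph $G'$ with many edges, produce a clique of size $\Omega(\Delta)$. I would follow the constructive skeleton of the known proof. Maintain an independent set $I$ of $G'$ (starting from any greedily maximal one) and consider two families of candidate cliques: $U(x) = \{v \notin I : N(v) \cap I = \{x\}\}$ for each $x \in I$, and $N(x) \cap N(y)$ for each non-adjacent pair $x, y \in I$. An averaging argument analogous to the sketch of Theorem~\ref{thm:4cycle-free-large-clique} shows that once $|E(G')| > m^\star$, one of these candidates has size $\Omega(\Delta)$. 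Candidates of the second kind are automatically cliques under induced $C_4$-freeness; any non-edge discovered inside one directly exposes an induced $C_4$ to report. Candidates of the first kind are cliques only when $I$ is \emph{maximum}, which is NP-hard to compute; instead, I would augment $I$ on the fly: whenever $U(x)$ contains a non-edge $\{z, w\}$, swap $x$ out of $I$ in favor of both $z$ and $w$, growing $|I|$ by one. Such augmentations can occur at most $n$ times across the entire algorithm.

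The principal obstacle will be matching the $O(n^3 / \Delta)$ running time. With $O(n/\Delta)$ iterations, the amortized budget per iteration is only $O(n^2)$, which rules out recomputing $I$, the labels $S_v = N(v) \cap I$, and the candidate cliques from scratch each round. The plan is to carry $I$ and its supporting data across iterations, updating them only locally: removing a clique $X$ touches only the $U(x)$ and common-neighborhood entries involving vertices of $X$, and augmenting $I$ via a swap $x \mapsto \{z, w\}$ touches only $U(x)$ and the common neighborhoods involving $x$. Because the total number of augmentations and the total size of extracted cliques are each $O(n)$, these local updates amortize into the budget, and the cost of verifying a candidate $U(x)$ sums to $\sum_x |U(x)|^2 \le n \cdot \max_x |U(x)| = O(n^2)$ per iteration by a standard convexity bound. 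The delicate point, which I expect to be the technical crux, is organizing the verification of the type~(ii) candidates and the maintenance of the common-neighborhood counts $|N(x) \cap N(y)|$ so that all amortized work fits within the advertised bound, since naive enumeration over all pairs in $\binom{I}{2}$ would be too expensive.
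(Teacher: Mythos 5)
Your outer loop coincides with the paper's: while $G[R]$ has more than $\Theta(n^{3/2}\Delta^{1/2})$ edges, extract a clique of size $\Theta(\Delta)$ and delete it, giving $O(n/\Delta)$ rounds. The gap is precisely in the part you flag as the crux and leave open: you never establish that a single extraction fits the $O(n^2)$ amortized budget, and the incremental plan you sketch has concrete problems. Maintaining the codegree counts $|N(x)\cap N(y)|$ over all pairs of $\binom{I}{2}$ under clique deletions and swaps is not obviously affordable: a single swap $x \mapsto \{z,w\}$ forces you to compute the codegrees of the new vertices against all of $I$ (cost $\Theta(|I|\, n)$), and with up to $n$ swaps and no upper bound on $|I|$ this can exceed $n^3/\Delta$. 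Moreover, the averaging step is not simply ``analogous to the sketch of \cref{thm:4cycle-free-large-clique}'': with only a swap-stable maximal independent set, the claim that some $U(x)$ or some common neighborhood has size $\Omega(\Delta)$ needs an upper bound of roughly $O(n/d)$ on $|I|$ (otherwise the total count of two-fold covered vertices is spread over too many pairs of $I$), and nothing in your proposal controls $|I|$ or substitutes for that control.

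The paper resolves this without any cross-round amortization, by making each extraction call (\Cref{lem:clique-extraction}) run in $O(n^2)$ from scratch via three devices you are missing. First, it passes to a subgraph $\tilde G$ of minimum degree greater than $d/2$. Second, it augments $I$ only while $|I| < 4|\tilde V|/d$, so one call performs at most $O(n/d)$ swaps, each followed by a greedy re-maximalization costing $O(|I|\,n) = O(n^2/d)$. Third, it searches for a high-codegree pair only once, at the moment of termination: either all $U(x)$ are small while $|I| < 4|\tilde V|/d$, in which case the averaging argument produces a pair inside $I$ and the naive pair enumeration costs $O(|I|^2 n) \le O(n^3/d^2) \le O(n^2)$ since $d \ge \Omega(\sqrt n)$ in this regime; or $|I| \ge 4|\tilde V|/d$, in which case the minimum-degree condition plus Bonferroni's inequality (\cref{prop:bonferroni}) guarantees a pair of codegree at least $\Delta$ inside an arbitrary subset $S \subseteq I$ of size $4|\tilde V|/d$, so only $O(|S|^2) = O(n)$ codegree computations are needed. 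The minimum-degree reduction, the cap on $|I|$, and deferring the pair search to a single terminal step are exactly what let the paper avoid maintaining $|N(x)\cap N(y)|$ altogether; without them (or a worked-out replacement), your proposal does not yet yield the claimed $O(n^3/\Delta)$ bound.
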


In summary, \cref{lem:cluster-decomp-large} gives a subcubic-time algorithm that computes a decomposition into large cliques~$\mathcal X$ plus some remainder $R$ such that the total number of edges in $R$ is small.
In fact,  the construction in~\cite[Construction 1]{GyarfasHubenkoSolymosi2002} shows that this sparsity bound is \emph{optimal}.\footnote{\label{footnote:polar}We follow the construction in~\cite{GyarfasHubenkoSolymosi2002}. It is well-known~\cite{Brown1966,ErdosRenyiSos1966} that there are graphs (so-called \emph{polarity graphs}) with $N$ vertices and~$\Theta(N^{3/2})$ edges with no $4$-cycles (induced or otherwise). Additionally, these graphs are bipartite and thus do not contain triangles. Take any such graph on \smash{$N = 3n/\Delta$} vertices, replace each node $v$ by a \smash{($\Delta/3$)}-size clique~$C_v$ and replace each edge $\set{u, v}$ by a biclique between~$C_u$ and $C_v$. This yields a graph $G$ on $n$ vertices and $\Theta(N^{3/2} \cdot \Delta^2) = \Theta(n^{3/2} \Delta^{1/2})$ edges,
whose largest clique  has size \smash{$2\Delta/3 < \Delta$}. Hence, the decomposition from \Cref{lem:cluster-decomp-large} cannot find any cliques in $G$ of size at least $\Delta$, and thus must return  the trivial partition where $R$ consists of all nodes. Hence, the sparsity bound $\tilde O(n^{3/2} \Delta^{1/2})$ in \Cref{lem:cluster-decomp-large} is the best  possible.}

There are two technicalities to
moving from \Cref{thm:4cycle-free-large-clique} to
\cref{lem:cluster-decomp-large}.
First, the proof of \cref{thm:4cycle-free-large-clique} does not directly yield a polynomial-time
algorithm,
since computing a maximum-size independent set in a graph is \NP{}-hard in general \cite[Theorem 2.15]{AroraBarak2009}. 
To obtain an efficient algorithm,
we instead compute an independent set $I$ that is \emph{maximal}, i.e., cannot be extended to $I \cup \set{z}$, and that additionally cannot be extended by simple node-exchanges of the form $\grp{I \setminus \set{x}} \cup \set{z, w}$, for any vertices $x,z,w$ in the graph.
Second, \cref{thm:4cycle-free-large-clique} only guarantees the existence of \emph{one} large clique in $G$.
To obtain the full decomposition in  \Cref{lem:cluster-decomp-large}, we iteratively apply \Cref{thm:4cycle-free-large-clique}.

\subsection{Detecting Induced 4-Cycles in Clustered Instances}
    \label{subsec:overview-clustered}
By \cref{lem:cluster-decomp-large}, we can assume  the input graph is decomposed into at most $n / \Delta$ cliques of size $\Theta(\Delta)$, plus some sparse remainder $R$ (here we use the fact that the cliques returned by \Cref{lem:cluster-decomp-large} are disjoint). 
We refer to these cliques as \emph{clusters}.
For concreteness let us set $\Delta = \sqrt{n}$ throughout this overview. 
The next big step is to test if there is an induced $4$-cycle among the clusters (ignoring the remainder for now). What could such an induced $4$-cycle look like? 
As illustrated in \Cref{fig:clustered}, there are three options: 
the $4$-cycle spans either two, three, or four clusters, respectively (a $4$-cycle cannot be contained entirely in a single cluster, since each cluster is a clique.) 
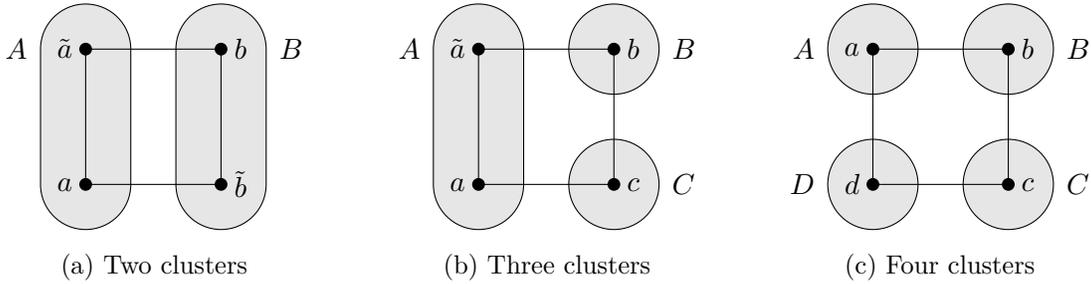
\begin{figure}[t]
\caption{Any induced 4-cycle contained entirely in the clusters spans either exactly (a) two, (b) three, or (c) four clusters.} \label{fig:clustered}
\bigskip%
\tikzset{
    vtx/.style={inner sep=0.6mm, circle, fill},
    edg/.style={},
    clu/.style={fill=gray!20!white, rounded corners=\clupad},
    lab/.style={},
}%
\def\vtxsep{18mm}%
\def\clupad{6mm}%
\def\labsep{0.4mm}%
\centering%
\hfill%
\begin{subfigure}[t]{.31\linewidth}%
\centering%
\begin{tikzpicture}%
    \draw[clu] (-\clupad, \vtxsep+\clupad) rectangle (\clupad, -\clupad);
    \path (-\clupad, \vtxsep) node[lab, left=\labsep] {$A$};
    \draw[clu] (\vtxsep-\clupad, \vtxsep+\clupad) rectangle (\vtxsep+\clupad, -\clupad);
    \path (\vtxsep+\clupad, \vtxsep) node[lab, right=\labsep] {$B$};
    \path (0, \vtxsep) node[vtx] (a1) {} node[lab, left=\labsep] {$\tilde a$};
    \path (0, 0) node[vtx] (a2) {} node[lab, left=\labsep] {$a$};
    \path (\vtxsep, \vtxsep) node[vtx] (b1) {} node[lab, right=\labsep] {$b$};
    \path (\vtxsep, 0) node[vtx] (b2) {} node[lab, right=\labsep] {$\tilde b$};
    \draw[edg] (a1.center) -- (b1.center) -- (b2.center) -- (a2.center) -- cycle;
\end{tikzpicture}%
\smallskip%
\caption{Two clusters}\label{fig:clustered:sub:two}%
\end{subfigure}%
\hfill%
\begin{subfigure}[t]{.31\linewidth}%
\centering%
\begin{tikzpicture}%
    \draw[clu] (-\clupad, \vtxsep+\clupad) rectangle (\clupad, -\clupad);
    \path (-\clupad, \vtxsep) node[lab, left=\labsep] {$A$};
    \draw[clu] (\vtxsep-\clupad, \vtxsep+\clupad) rectangle (\vtxsep+\clupad, \vtxsep-\clupad);
    \path (\vtxsep+\clupad, \vtxsep) node[lab, right=\labsep] {$B$};
    \draw[clu] (\vtxsep-\clupad, \clupad) rectangle (\vtxsep+\clupad, -\clupad);
    \path (\vtxsep+\clupad, 0) node[lab, right=\labsep] {$C$};
    \path (0, \vtxsep) node[vtx] (a1) {} node[lab, left=\labsep] {$\tilde a$};
    \path (0, 0) node[vtx] (a2) {} node[lab, left=\labsep] {$a$};
    \path (\vtxsep, \vtxsep) node[vtx] (b) {} node[lab, right=\labsep] {$b$};
    \path (\vtxsep, 0) node[vtx] (c) {} node[lab, right=\labsep] {$c$};
    \draw[edg] (a1.center) -- (b.center) -- (c.center) -- (a2.center) -- cycle;
\end{tikzpicture}%
\smallskip%
\caption{Three clusters}\label{fig:clustered:sub:three}%
\end{subfigure}%
\hfill%
\begin{subfigure}[t]{.31\linewidth}%
\centering%
\begin{tikzpicture}%
    \draw[clu] (-\clupad, \vtxsep+\clupad) rectangle (\clupad, \vtxsep-\clupad);
    \path (-\clupad, \vtxsep) node[lab, left=\labsep] {$A$};
    \draw[clu] (\vtxsep-\clupad, \vtxsep+\clupad) rectangle (\vtxsep+\clupad, \vtxsep-\clupad);
    \path (\vtxsep+\clupad, \vtxsep) node[lab, right=\labsep] {$B$};
    \draw[clu] (\vtxsep-\clupad, \clupad) rectangle (\vtxsep+\clupad, -\clupad);
    \path (\vtxsep+\clupad, 0) node[lab, right=\labsep] {$C$};
    \draw[clu] (-\clupad, \clupad) rectangle (\clupad, -\clupad);
    \path (-\clupad, 0) node[lab, left=\labsep] {$D$};
    \path (0, \vtxsep) node[vtx] (a) {} node[lab, left=\labsep] {$a$};
    \path (0, 0) node[vtx] (d) {} node[lab, left=\labsep] {$d$};
    \path (\vtxsep, \vtxsep) node[vtx] (b) {} node[lab, right=\labsep] {$b$};
    \path (\vtxsep, 0) node[vtx] (c) {} node[lab, right=\labsep] {$c$};
    \draw[edg] (a.center) -- (b.center) -- (c.center) -- (d.center) -- cycle;
\end{tikzpicture}%
\smallskip%
\caption{Four clusters}\label{fig:clustered:sub:four}%
\end{subfigure}%
\hspace*{\fill}%
\end{figure}

\subsubsection{Two Clusters}
Our first goal is to detect induced $4$-cycles contained in two clusters. We enumerate all $(n / \Delta)^2$ pairs of clusters $A, B \in \mathcal X$, and for each such pair we run an algorithm to determine if there is an induced $4$-cycle in $G[A \sqcup B]$ in time $O(\Delta^2)$.
In particular, the total running time of this procedure is~$O(n^2)$. 
Our $O(\Delta^2)$-time algorithm is based on the following structural insight:

\paragraph{Structural Insight 2: Ordered Clusters.}
We call a pair of clusters $(A, B)$ \emph{ordered} if there are functions \makebox{$f \colon A \to \mathbb{Z}$} and $g \colon B \to \mathbb{Z}$ such that each pair $(a, b) \in A \times B$ is an edge in graph $G$ if and only if $f(a) \leq g(b)$. Then $G[A \sqcup B]$ contains no induced $4$-cycle \emph{if and only if} $(A, B)$ is ordered. 

\medskip
To see why this is true, first assume that $G[A \sqcup B]$ has no induced $4$-cycle. 
We define
\begin{equation}
\label{eq:orderings-def}
    g(b) = \deg_A(b) \qquad\text{and}\qquad f(a) = \min_{b \in N_B(a)} g(b)
\end{equation}
for all $b\in B$ and $a\in A$ respectively. On the one hand, whenever there is an edge~$(a, b)$ then by definition we  have $f(a) \leq g(b)$. On the other hand, assume that $(a, b)$ is a non-edge and suppose that $f(a) \leq g(b)$. Then by definition there is some node $\tilde b \in B$ adjacent to $a$ such that $\deg_A(\tilde b) \leq \deg_A(b)$. 
Since $a$ is adjacent to $\tilde b$ but not to $b$, the degree condition implies there is a node $\tilde a$ adjacent to $b$ but not to $\tilde b$. This then forces $(a, \tilde a, b, \tilde b)$ to be an induced $4$-cycle,
contradicting our initial assumption.

Conversely, suppose that $(A, B)$ is ordered as witnessed by some functions $f\colon A\to\ZZ, g\colon B\to\ZZ$. 
Each induced $4$-cycle in~$G[A \sqcup B]$ must take the form $(a, \tilde a, b, \tilde b)$ for $a, \tilde a \in A$ and $b, \tilde b \in B$ as depicted in \cref{fig:clustered:sub:two}. 
In particular, $f(a) > g(b)$ and $f(\tilde{a}) > g(\tilde{b})$,
yet $f(a) \le g(\tilde{b})$ and $f(\tilde{a}) \le g(b)$,
so we have 
\[f(a) + f(\tilde a) > g(b) + g(\tilde b) \geq f(\tilde a) + f(a)\]
which again yields a contradiction. 
For the full details of this proof, see \cref{lem:cluster-pair}.

\paragraph{Algorithmic Implications.}
The above insight provides an easy way to test if $G[A \sqcup B]$ has an induced $4$-cycle in $O(\Delta^2)$ time: Compute the functions $g(b) = \deg_A(b)$ and $f(a) = \min_{b \in N_B(a)} g(b)$ as in \cref{eq:orderings-def}, and verify the order condition naively in time $O(\Delta^2)$.

However, this structural insight offers much more! 
This algorithm can again be interpreted as implementing a win/win strategy: Either we have detected an induced $4$-cycle, and we are done, or we have learned for all the following steps that the inter-cluster edges are \emph{highly structured}. 
In particular, we only require $\tilde O(\Delta)$ bits to specify the edges between any two clusters $A, B$. 
Moreover, 
testing adjacency between nodes in two clusters reduces to an arithmetic comparison of their associated function values.
This will let us, perhaps surprisingly, leverage geometric data structures to efficiently check for patterns among the clusters. 

\subsubsection{Three Clusters}
Next, we test if there is an induced $4$-cycle in three distinct clusters $A, B, C$. Again, we enumerate all $(n / \Delta)^3$ distinct triples $A, B, C \in \mathcal X$, and design an algorithm to detect if $G[A \sqcup B \sqcup C]$ contains an induced $4$-cycle in  \smash{$\tilde O(\Delta)$} time. For $\Delta = \sqrt{n}$, the total running time is again \smash{$\tilde O(n^2)$}.

\paragraph{Warm-Up: Orthogonal Range Queries.}
As a warm-up, and to give a simple demonstration of how geometric data structures are useful in this context, we begin by describing a simpler algorithm running in  $\tilde O(\Delta^2)$ time. 
Without loss of generality, we may assume each induced $4$-cycle in $G[A \sqcup B \sqcup C]$ takes the form $(a, \tilde a, b, c)$ for~\makebox{$a, \tilde a \in A,\, b \in B,\, c \in C$}, as depicted in \cref{fig:clustered:sub:three}. From the two-cluster case, we may assume that all pairs of clusters $(A, B)$, $(A, C)$, $(B, C)$ are ordered, and that we have access to functions $f_{AB}, g_{AB}, f_{AC}, g_{AC}, f_{BC}, g_{BC}$ describing the inter-cluster edges (i.e., $f_{AB}(a) \leq g_{AB}(b)$ if and only if $(a, b) \in A \times B$ is an edge, and similarly for the other cluster pairs). Consider the following set of points in $\ZZ^4$:
\begin{equation*}
    P = \set{\pair{f_{AB}(a), f_{AB}(\tilde a), f_{AC}(a), f_{AC}(\tilde a)} \mid a, \tilde a \in A, a \neq \tilde a}.
\end{equation*}
We process $P$ into a 4-dimensional orthogonal range query data structure (i.e., a structure with the property that whenever we query it with a 4-dimensional axis-aligned box, we can determine in polylogarithmic time if the given box contains a point in $P$)
in $\tilde{O}(|P|) \le \tilde{O}(\Delta^2)$ time. Then we enumerate all edges $(b, c) \in B \times C$, and for each query if $P$ intersects the box
\begin{equation*}
    (g_{AB}(b), \infty) \times (-\infty, g_{AB}(b)] \times (-\infty, g_{AC}(c)] \times (g_{AC}(c), \infty).
\end{equation*}
It is easy to verify that this box intersects $P$ if and only if there is an induced $4$-cycle $(a, \tilde a, b, c)$. 
Since we make at most $\Delta^2$ queries, 
this procedure takes at most $\tilde O(\Delta^2)$ time as claimed.

While this algorithm is sufficiently fast to detect induced $4$-cycles spanning up to three clusters (in total time $\tilde O( (n/\Delta)^3 \cdot \Delta^2) \le \tilde O(n^3 / \Delta) \le \tilde O(n^{5/2})$), this argument does not efficiently carry over to the upcoming four-cluster case. For this reason, we invest additional effort to solve the three-cluster case in $\tilde O(\Delta)$ time, and along the way discover more structural insights.

\paragraph{Structural Insight 3: Comparable Neighborhoods.}
Let $(b,c)\in B\times C$ be an edge,
and let $a,\tilde{a}\in A$ be distinct vertices.
Our next observation is that the quadruple $(a, \tilde a, b, c)$ is not an induced $4$-cycle if and only if $N_A(b) \subseteq N_A(c)$ or $N_A(c) \subseteq N_A(b)$.
If the sets $N_A(b)$ and $N_A(c)$ satisfy either of these inclusions, we say they are \emph{comparable}.

To see this, note that whenever $(b, c)$ is an edge and $N_A(b)$ and $N_A(c)$ are not comparable then there are nodes $a \in N_A(c) \setminus N_A(b)$ and $\tilde a \in N_A(b) \setminus N_A(c)$, and thus $(a, \tilde a, b, c)$ is an induced $4$-cycle. Conversely, if there is an induced $4$-cycle $(a, \tilde a, b, c)$ then clearly $N_A(b)$ and $N_A(c)$ are not comparable. This is really just a convenient reformulation of what it means for there to be no induced $4$-cycle in $G[A \sqcup B \sqcup C]$ (see \Cref{obs:triple-comparable-neighborhood} for the full details).

\paragraph{An Improved Algorithm.}
We leverage this insight algorithmically as follows. For $b \in B$, define
\begin{equation*}
    h_\low(b) = \min_{a \in A\setminus N_A(b)} f_{AC}(a) 
    \quad
    \text{and}
    \quad 
    h_\high(b) = \max_{a \in N_A(b)} f_{AC}(a).
\end{equation*}
These values are chosen such that for any edge $(b, c)\in B\times C$, 
the sets $N_A(b)$ and $N_A(c)$ are incomparable if and only if $\smash{h_\low(b) \leq g_{AC}(c) < h_\high(b)}$. Indeed, if $N_A(b)$ and $N_A(c)$ are incomparable then there is some $a \in N_A(c) \setminus N_A(b)$ witnessing the inequality $h_\low(b) \leq f_{AC}(a) \leq g_{AC}(c)$, and some $\tilde a \in N_A(b) \setminus N_A(c)$ witnessing the inequality $g_{AC}(c) < f_{AC}(\tilde a) \leq h_\high(b)$.
Chaining these inequalities together,
we get $h_\low(b) \le g_{AC}(c) < h_\high(b)$.
Similar reasoning proves the converse.

Therefore, to test if there is an induced $4$-cycle in $G[A \sqcup B \sqcup C]$ it suffices to compute the values $h_\low(b)$ and $h_\high(b)$ for all $b \in B$, and then test if there is a pair $(b, c) \in B \times C$ of vertices such that $h_\low(b) \leq g_{AC}(c) < h_\high(b)$ 
(enforcing that $N_A(b)$ and $N_A(c)$ are incomparable)
and $f_{BC}(b) \leq g_{BC}(c)$ (enforcing that $(b,c)$ is an edge). 
It turns out that each of these steps can be performed in  $\tilde O(\Delta)$ time using a 2-dimensional orthogonal range query data structure,
which yields the desired algorithm.

\subsubsection{Four Clusters}
The most difficult case remains: Detecting if there is an induced $4$-cycle spanning four different clusters $A, B, C, D \in \mathcal X$. There can be up to $(n / \Delta)^4 = n^2$ quadruples of such clusters, so in order to obtain a truly subcubic  $n^{3-\Omega(1)}$ runtime,
we  have to design an algorithm that runs in truly subquadratic  $\Delta^{2-\Omega(1)}$ time per quadruple. With considerable technical effort, we obtain an algorithm that runs in $\tilde O(\Delta)$ time, thus leading to an $\tilde O(n^{5/2})$ time algorithm overall for $\Delta = \sqrt{n}$. 

Conceptually, our challenge is that, despite our strong knowledge that edges between two clusters are nicely ordered, we do not yet know how two separate orderings $(A, B)$ and $(A, C)$ relate to one another. 
It seems reasonable to expect that the orderings are \emph{correlated} in some way, unless there is an induced $4$-cycle in $G[A \sqcup B \sqcup C]$ (which we would have detected in the three-cluster case). 
This turns out  to be true, and is captured by the following statement:

\paragraph{Structural Insight 4: Orderings Correlate.}
For each node $b \in B$ (and similarly for each node $d \in D$) there are values $k(b) < K(b)$ and $L(b) < \ell(b)$ such that:
\begin{itemize}[itemsep=0pt]
    \item all vertices $a \in A$ with $f_{AC}(a) \leq k(b)$ are neighbors of $b$, and\newline all vertices $c \in C$ with $g_{AC}(c) \geq \ell(b)$ are neighbors of $b$; and 
    \item the only other neighbors $a \in A$ of $b$ satisfy that $f_{AC}(a) = K(b)$, and\newline the only other neighbors $c \in C$ of $b$ satisfy that $g_{AC}(c) = L(b)$.
\end{itemize}
The proof of this statement relies on repeated applications of our previous structural insights. It is on the technical side, and we omit the details here,
deferring the argument to \Cref{lem:splitconn:computation}. 
Instead, we elaborate on how this insight aids in detecting induced $4$-cycles in $G[A \sqcup B \sqcup C \sqcup D]$.

\paragraph{Algorithmic Implications.}
It is clear that each induced $4$-cycle falls into one of two categories:
\begin{itemize}[itemsep=0pt]
    \item \emph{Ordinary:} $f_{AC}(a) \leq k(b)$ and $f_{AC}(a) \leq k(d)$ and $g_{AC}(c) \geq \ell(b)$ and $g_{AC}(c) \geq \ell(b)$.
    \item \emph{Exceptional:} $f_{AC}(a) = K(b)$ or $f_{AC}(a) = K(d)$ or $g_{AC}(c) = L(b)$ or $g_{AC}(c) = L(b)$.
\end{itemize}
We design two different algorithms for these two types of $4$-cycles, starting with the ordinary case. 

When seeking an induced $4$-cycle $(a, b, c, d)$ in  the ordinary case, the main benefit we obtain by the ordinary assumption is that we no longer have to test if the edges $(a, b), (b, c), (a, d), (d, c)$ are present,
because these edges are guaranteed by the structural insight. 
Thus, we merely need to check if there is a quadruple $(a, b, c, d)$ in the ordinary case such that $(a, c)$ and $(b, d)$ are not edges.
With a little care,\footnote{This step requires a certain \emph{conciseness} condition on the values $f_{AC}, g_{AC}$ and $k, K, L, \ell$ which we defer to \cref{sec:cluster}.} this can be tested by detecting a pair $(b, d)$ satisfying the two conditions \makebox{$\max(\ell(b), \ell(d)) < \min(k(b), k(d))$} and $f_{BD}(b) > g_{BD}(d)$. Finally, these last conditions are efficiently testable in  $\tilde O(\Delta)$ time, by  employing orthogonal range query data structures once again.

Next, we focus on seeking an induced $4$-cycle in the exceptional case.
Without loss of generality, suppose 
we seek a solution with
$f_{AC}(a) = K(b)$. Here we have another benefit: Looking only at $b$, we can directly infer information about the edges between clusters $A$ and $C$. Specifically, there is an induced $4$-cycle of this type if and only if there is a pair $(b, d)$ such that 
\begin{enumerate}[itemsep=0pt]
    \item 
        there is a 2-path $(b, a, d)$ with $f_{AC}(a) = K(b)$, 
    \item 
        there is a 2-path $(b, c, d)$ with $g_{AC}(c) < K(b)$, and
    \item 
        \makebox{$(b, d)$ is not an edge}.
\end{enumerate}
Crucially, we do not have to test if $(a, c)$ is a non-edge,
because we enforce this property ``for free''
by employing the assumption that $f_{AC}(a) = K(b)$ in condition 1 (permitted because we are in the exceptional case),
and restricting to $c\in C$ with $g_{AC}(c) < K(b)$ in condition 2.
Condition 3  can  be expressed as $f_{BD}(b) > g_{BD}(d)$. 
Conditions 1 and 2 above can be expressed as $i(b) \leq g_{AD}(d)$ and $j(b) \leq g_{CD}(d)$, for appropriate functions $i, j\colon B\to\ZZ$ (e.g., let $i(b)$ be the minimum value of $f_{AD}(a)$ among all $a \in N_A(b)$ with~\makebox{$f_{AC}(a) = K(b)$}). 
Since these are all simple arithmetic comparisons, we can test these three conditions in  $\tilde O(\Delta)$ time using an orthogonal range query
data structure.

Combining these algorithms for the ordinary and exceptional cases solves the four-cluster case.

\subsection{Dealing with the Sparse Remainder}
In \Cref{subsec:overview-decomp} we outlined an algorithm to decompose the input graph into cliques of size $\Theta(\Delta)$, plus some remainder $R$ with at most $\tilde O(n^{3/2} \Delta^{1/2}) \le \tilde O(n^{7/4})$ edges, and
then in \Cref{subsec:overview-clustered}
we presented an 
$\tilde{O}(n^{5/2})$-time algorithm for detecting an
induced $4$-cycle among the cliques. It remains to test if there are induced $4$-cycles involving the remainder set $R$.

Unfortunately, while the algorithm so far has arguably been quite clean, dealing with the remainder becomes somewhat messy. A conceptually similar phenomenon appears for related problems such as detecting directed $4$-cycles, or counting $4$-cycles in sparse graphs; for both these problems it seems reasonable to expect triangle-detection-time \smash{$O(m^{2\omega/(\omega+1)})$} algorithms~\cite{AlonYusterZwick1997}, but the state of the art is \smash{$O(m^{(4\omega-1)/(2\omega+1)})$}~\cite{yuster2004detecting,VassilevskaWangWilliamsYu2014},
where $m$ denotes the number of edges in the input graph.

Our concrete goal is to test if there is an induced $4$-cycle that involves at least one node from~$R$. It is much simpler to test if there is an induced $4$-cycle with all four nodes in $R$. For instance we could run the $O(m^{11/7})$-time combinatorial (albeit randomized) algorithm due to~\cite{VassilevskaWangWilliamsYu2014}, which in our case takes subcubic time $O((n^{7/4})^{11/7}) = O(n^{11/4})$. But how can we deal with the induced $4$-cycles with some nodes in the cliques $\mathcal X$ and some nodes in the remainder $R$? Note that for an induced $4$-cycle with three nodes among the cliques and just one node in $R$, it seems hard to exploit the sparsity condition in $R$ at all.

Our solution is to \emph{further decompose} $R$ into smaller cliques. More specifically, setting $L = \log n$, we
extend the decomposition from \Cref{lem:cluster-decomp-large} to obtain a partition of the vertices 
\[V = \bigsqcup_{\ell = L/2}^{L} V_\ell\]
into $(1/2)(\log n)$ \emph{levels}, where level $V_\ell$ is a disjoint union of cliques of size $\Theta(n / 2^\ell)$, while ensuring the stronger sparsity condition $|N_{V_\ell}(x) \cap N_{V_\ell}(y)| \leq O(n / 2^\ell)$ for all non-adjacent nodes $x, y$. See \cref{thm:clique-decomp} for the formal statement of this decomposition.

Now any induced $4$-cycle in the graph can be viewed as having nodes belonging to the different cliques comprising the $V_\ell$ levels.
However, these cliques can have very different sizes.
We perform casework on the relative sizes of these cliques.

For
some clique sizes, it is efficient enough to run the structure-based algorithm outlined in \Cref{subsec:overview-clustered}.
Intuitively, this works whenever enough nodes of the induced $4$-cycle belong to very large cliques.
For other sizes, it is more efficient to employ a sparsity-sensitive algorithm. 
For example, to test if there is an induced $4$-cycle among the constant-size cliques in $V_{L}$, we  simply enumerate all non-edges~$(x, y)$ and all choices of common neighbors $z, w \in N_{V_{L}}(x) \cap N_{V_{L}}(y)$. 
By the sparsity condition there are only a constant number of such common neighbors for each choice of~$x,y$, so detecting an induced $4$-cycle takes time~$O(n^2)$ in this case.

A full description of this casework is presented in \cref{sec:full}.
The bottom line is that with an appropriate trade-off between these two approaches (and en route some additional structural insights such as \Cref{obs:neighborhood-size}) we achieve an algorithm that runs in $\tilde O(n^{17/6})$ time overall. 
The bottleneck of our algorithm lies in detecting an induced $4$-cycle with two nodes in ``large'' cliques of size $\sqrt{n}$ and two nodes in ``moderate'' cliques of size~$n^{1/3}$.

\subsection{Relation to the Erd\H{o}s-Hajnal Conjecture}
In a nutshell, our algorithm is based on the fact that we can identify large cliques and exploit that the edges between pairs, triples, 
and quadruples of these cliques must be highly structured. In particular, the larger the cliques we find, the more structure we can infer, and the better our algorithm performs. A related approach would be to extract \emph{independent sets}, which similarly admit some (seemingly weaker) structural properties. 
This suggests that one way to improve our algorithm would be to consider a more general decomposition into cliques \emph{and} independent sets.

This potential approach to solving induced 4-cycle detection is related to the \emph{Erd\H{o}s-Hajnal Conjecture}~\cite{ErdosHajnal1977}. 
This conjecture postulates that for any fixed pattern graph $H$, there exists a constant $\eps = \eps(H) > 0$ such that every $n$-node graph not containing an induced copy of $H$ must have a clique or an independent set of size at least~$\Omega(n^\eps)$. This conjecture has been proven for certain simple classes of pattern graphs, but remains wide open in general (see e.g., \cite{BucicNguyenScottSeymour2024}).

In our setting it would be interesting to obtain tight quantitative bounds on $\eps(H)$ when $H = C_4$, i.e., what is the largest $\eps$ such that each induced $C_4$-free graph contains a clique or independent set of size~$\Omega(n^\eps)$? To our knowledge, this question is open. 
The current best lower bound appears to be $\eps \geq 1/3$ as follows from \cref{thm:4cycle-free-large-clique}.\footnote{Specifically, if $G$ has more than $n^{5/3}$ edges then \cref{thm:4cycle-free-large-clique} implies the existence of a clique of size $\Omega(n^{1/3})$. If~$G$ has less than $n^{5/3}$ edges then it contains an independent set of size $\tilde\Omega(n^{1/3})$ by a greedy construction.} The current best upper bound is $\eps \leq 2/5$ which follows from a graph constructed by the probabilistic method~\cite[Theorem 3.1]{Spencer1977}.\footnote{We sketch the argument here. We first construct an $N$-node graph $\tilde{G}$ that has no $4$-cycles (induced or otherwise) such that the largest independent set in $\tilde{G}$ has size at most $\tilde{O}(N^{2/3})$. To this end, take a random graph~$\tilde{G}$ on $N$ that contains each edge uniformly and independently with probability $p = N^{-2/3} / 2$. 
With high probability, this graph will have $\Theta(N^{4/3})$ edges. Moreover, for each edge $e$, the probability that $e$ is involved in a $4$-cycle is at most $p^3 N^2 \leq 1/2$, so in expectation we can afford to remove all edges involved in $4$-cycles and still keep $\Theta(N^{4/3})$ edges. It can also be verified that the largest independent set in the resulting graph has size $\tilde O(N^{2/3})$ (as would be expected from a truly random graph). Now let $G$ be the graph on $n$ nodes where we replace each node in $\tilde{G}$ by a clique of size $n / N$ and each edge by a biclique (as in \cref{footnote:polar}). Choosing~$N = n^{3/5}$, the largest independent set in $G$ is $\tilde O(N^{2/3}) \le \tilde{O}(n^{2/5})$, and the largest clique has size $n / N = n^{2/5}$ as well.} 
By closing this gap and understanding the structure of the corresponding extremal graphs, we could potentially learn of interesting instances for induced 4-cycle detection that could inspire new algorithmic insights. 
\section{Preliminaries}
    \label{sec:prelim}

For a positive integer $a$, we let $[a] = \set{1, \dots, a}$ denote the set of the first  $a$ positive integers.
For a vector $\vec{v}$ and an index $i$, we let $\vec{v}[i]$ denote the $i^{\text{th}}$ coordinate of $v$. 
We say two sets $S$ and $T$ are \emph{comparable} if either $S\sub T$ or $T\sub S$.
We say that $S$ and $T$ are \emph{incomparable} if neither set contains the other. 
We let $\im(f)$ denote the image of a function $f$. 
By convention, the minimum and maximum over an empty set are $\infty$ and $-\infty$ respectively.

    \begin{proposition}[Bonferroni's Inequality]
        \label{prop:bonferroni}
        Given a family of finite sets $\cal{S}$, we have 
            \[\abs{\bigcup_{S\in\cal{S}} S} \ge \sum_{S\in\cal{S}} |S| - 
            \sum_{\substack{S,T\in\cal{S} \\ S\neq T }} |S\cap T|.\]
    \end{proposition}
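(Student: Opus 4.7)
The plan is to establish the inequality via a double-counting argument, rewriting both sides as sums over the elements of $\bigcup_{S \in \cal{S}} S$ and then verifying the resulting bound pointwise. Concretely, for each element $x \in \bigcup_{S \in \cal{S}} S$, let $d(x) = |\{S \in \cal{S} : x \in S\}|$ denote the number of sets containing $x$; note that $d(x) \geq 1$ by construction.

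First I would swap the order of summation on each of the three quantities in the inequality. The left-hand side trivially satisfies $|\bigcup_{S \in \cal{S}} S| = \sum_{x} 1$. For the first sum on the right, each element $x$ is counted once for every set containing it, yielding $\sum_{S \in \cal{S}} |S| = \sum_x d(x)$. For the second sum, each element $x$ contributes to $|S \cap T|$ precisely when both $S$ and $T$ contain $x$, and the number of ordered pairs $(S, T)$ of distinct sets in $\cal{S}$ with $x \in S \cap T$ is exactly $d(x)(d(x) - 1)$. Hence
\[
\sum_{\substack{S, T \in \cal{S} \\ S \neq T}} |S \cap T| \;=\; \sum_{x} d(x)(d(x) - 1).
\]

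Combining these identities, the inequality reduces to showing that
\[
\sum_{x} 1 \;\geq\; \sum_{x} \bigl( d(x) - d(x)(d(x) - 1) \bigr) \;=\; \sum_{x} d(x)(2 - d(x)),
\]
where both sums range over $x \in \bigcup_{S \in \cal{S}} S$. It suffices to prove the pointwise inequality $1 \geq d(x)(2 - d(x))$ for every such $x$. This is immediate: if $d(x) = 1$ both sides equal $1$, and if $d(x) \geq 2$ the right-hand side is non-positive while the left-hand side is $1$. Summing over $x$ yields the claim.

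There is no serious obstacle here: the entire argument is a one-line double count followed by a trivial pointwise check. The only care to take is in the bookkeeping of ordered versus unordered pairs in the second sum; with the ordered convention implied by the notation $S \neq T$, the factor $d(x)(d(x) - 1)$ is correct, and the pointwise inequality then holds comfortably (with equality only when $d(x) = 1$).
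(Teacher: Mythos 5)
Your proof is correct. The paper states \cref{prop:bonferroni} as a standard fact without proof, so there is nothing to compare against; your double-counting argument (reduce to the pointwise inequality $1 \ge d(x)\bigl(2 - d(x)\bigr)$ for the multiplicity $d(x)$ of each element of the union) is a clean and complete justification. One small remark on the ordered/unordered bookkeeping you flag: you prove the version where $\sum_{S \neq T}$ ranges over ordered pairs, which is the weaker of the two readings since it subtracts each intersection twice; the paper's later use in the proof of \cref{lem:clique-extraction} (where the averaging step divides by $\binom{|S|}{2}$) implicitly treats the sum as over unordered pairs, i.e.\ the standard Bonferroni bound $\abs{\bigcup S} \ge \sum \abs{S} - \sum_{\{S,T\}} \abs{S \cap T}$. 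Your argument covers that stronger version too with no extra work: the per-element count becomes $\binom{d(x)}{2}$ instead of $d(x)\bigl(d(x)-1\bigr)$, and the pointwise check $1 \ge d(x) - \binom{d(x)}{2}$ holds for all $d(x) \ge 1$ (with equality at $d(x) \in \{1,2\}$).
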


\subsubsection*{Graph Notation}

    Throughout, we let $G$ denote the input graph on $n$ vertices.
    We let $V$ and $E$ denote the vertex and edge sets of $G$ respectively.
    Given a node $v\in V$ and subset of vertices $R\sub V$,
    we let $N_R(v)$ denote the set of vertices in $R$ adjacent to $v$ in $G$,
    $\deg_R(v) = |N_R(v)|$ denote the degree of $v$ in $R$, and $\codeg_R(v,w) = |N_R(v)\cap N_R(w)|$ denote the number of common neighbors of vertices $v$ and $w$ in $R$.
    In the case where $R = V$ is the whole vertex set,
    we omit the subscript $R$.
    We let $G[R]$ denote the induced subgraph of $G$ restricted to the vertices in $R$. 

    We let $C_4$ denote the cycle on four vertices.
    We say a tuple $(w,x,y,z)$ forms an induced 4-cycle if $(w,x)$, $(x,y)$, $\grp{y,z}$, and $\grp{z,w}$ are edges,
    but $\grp{w,y}$ and $\grp{x,z}$ are not edges.
    We say a tuple $(u,v,w)$ of distinct vertices forms a 2-path
    if $\grp{u,v}$ and $\grp{v,w}$ are both edges.
    We call the tuple an induced 2-path
    if in addition $\grp{u,w}$ is not an edge.

\subsubsection*{Data Structures}

    Our algorithms make extensive use of data structures for orthogonal range searching.
    We refer the reader to \cite[Chapter 5]{deBergCheongvanKreveldOvermars2008}
     for a primer on this topic. 

    \begin{proposition}[Orthogonal Range Queries]
        \label{prop:range-search}
        Let $d$ be a fixed positive integer. 
        Given a set of $n$ points $S \subset \ZZ^d$,
        we can in $O(n (\log n)^d)$ time construct a data structure that, when given any $d$-dimensional axis-parallel box $B$ as a query,
        returns the value $|S \cap B|$ and a point in $S\cap B$, if any exists,  in $O((\log n)^d)$ time.
    \end{proposition}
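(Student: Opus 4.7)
The plan is to use the standard range tree construction, proceeding by induction on the dimension $d$. I would first handle the base case $d = 1$: store the points of $S$ in a sorted array by their single coordinate, which takes $O(n \log n)$ preprocessing time. A query for an interval $[\alpha, \beta]$ then uses two binary searches to find the first and last indices whose coordinate lies in $[\alpha, \beta]$ in $O(\log n)$ time; the count $|S \cap B|$ is the difference of these indices plus one, and a witness point is simply the point at either endpoint (or none if the interval contains no point of $S$).

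For the inductive step, I would build a balanced binary search tree $T$ over the points of $S$ sorted by their first coordinate. At each node $v$, I would store a secondary $(d-1)$-dimensional data structure, built recursively, over the points in the subtree rooted at $v$, projected onto coordinates $2, \ldots, d$. Since each point of $S$ appears in $O(\log n)$ such subtrees (one per level of $T$), the preprocessing cost satisfies the recurrence $T(n, d) = O(n \log n) + O(\log n) \cdot T(n, d-1)$ with $T(n, 1) = O(n \log n)$, which unwinds to the desired $O(n (\log n)^d)$ bound.

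To answer a query for an axis-aligned box $B = [\alpha_1, \beta_1] \times \cdots \times [\alpha_d, \beta_d]$, I would first locate the $O(\log n)$ canonical subtrees of $T$ whose disjoint union contains exactly the points whose first coordinate lies in $[\alpha_1, \beta_1]$. For each such subtree rooted at $v$, I would recursively query its $(d-1)$-dimensional structure with the reduced box $[\alpha_2, \beta_2] \times \cdots \times [\alpha_d, \beta_d]$, obtaining a count and, when nonzero, a witness point in $O((\log n)^{d-1})$ time by the inductive hypothesis. Summing the counts gives $|S \cap B|$, and returning any nonempty witness gives a point of $S \cap B$. The query time obeys $Q(d) = O(\log n) \cdot Q(d-1)$, solving to $O((\log n)^d)$.

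There is no substantive obstacle here: this construction is textbook material, see \cite{deBergCheongvanKreveldOvermars2008}. The only minor care needed is to have each recursive data structure return a witness point alongside its count, so that a representative can be propagated back through all $d$ levels of recursion, which is immediate from the inductive hypothesis. (One could apply fractional cascading to shave a $\log n$ factor off the query time, but this is unnecessary for the stated bound.)
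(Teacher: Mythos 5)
Your construction is correct and is exactly the standard layered range tree argument that the paper itself defers to, citing \cite[Chapter 5]{deBergCheongvanKreveldOvermars2008} rather than giving a proof. The recurrences you state (construction $O(n(\log n)^d)$ via the fact that each point lies in $O(\log n)$ canonical subtrees, query $O((\log n)^d)$ via $O(\log n)$ canonical nodes per level) match the claimed bounds, and propagating a witness point through the recursion handles the ``return a point in $S \cap B$'' requirement as you note.
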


\subsubsection*{Search to Decision Reduction}

    For the sake of simplicity, we describe our algorithm in \Cref{thm:deterministic} 
    as \emph{detecting} the presence of an induced 4-cycle in the input graph instead of returning one when it exists. 
    This turns out to be without of loss of generality, because any algorithm for detecting induced 4-cycles in  $n$-node graphs can be converted into an algorithm for \emph{finding} induced 4-cycles in $n$-node graphs with only an $O(\log n)$ overhead.
    In the statement below,
    recall that a function $T$ is \emph{subadditive}
    if for all $a,b$ we have $T(a+b) \le T(a) + T(b)$.

    \begin{proposition}[Search to Detection Reduction]
        If there is an algorithm $\mathcal A$ that can decide if an $n$-node graph contains an induced 4-cycle in time $T(n)$ (for some subadditive function $T$),
        then there is an algorithm that can \emph{find} an induced 
        4-cycle
        (if it exists) in time $O(T(n))$. 
    \end{proposition}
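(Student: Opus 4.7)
The plan is a standard search-to-decision reduction based on self-reducibility: invoke the detection oracle $\mathcal{A}$ on carefully chosen induced subgraphs to pin down the four vertices of an induced $C_4$ one at a time.

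First, I would invoke $\mathcal{A}$ on $G$ itself; if the answer is negative, we are done. Otherwise, fix an arbitrary ordering $v_1, \ldots, v_n$ of $V$ and binary-search for the smallest index $i$ such that $\mathcal{A}(G[\{v_1, \ldots, v_i\}])$ detects an induced $C_4$. By the minimality of $i$, the prefix $G[\{v_1, \ldots, v_{i-1}\}]$ contains no induced $C_4$, and so every induced $C_4$ in $G[\{v_1, \ldots, v_i\}]$ must use the vertex $v_i$. Having pinned $v_i$, I would restrict attention to $G_1 := G[\{v_1, \ldots, v_i\}]$, order its other vertices arbitrarily as $u_1, \ldots, u_{i-1}$, and repeat the trick: a second binary search finds the smallest $j$ for which $G_1[\{v_i, u_1, \ldots, u_j\}]$ contains an induced $C_4$, and by the same minimality argument combined with the invariant that every induced $C_4$ in $G_1$ uses $v_i$, any $C_4$ in this smaller subgraph must contain both $v_i$ and $u_j$. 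Two more iterations isolate the remaining two vertices of some induced $C_4$ in $G$.

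The procedure consists of four binary searches, each making $O(\log n)$ oracle invocations on induced subgraphs of at most $n$ vertices; a naive analysis therefore yields total running time $O(T(n) \log n)$. The main obstacle, which I expect to be the hard part, is shaving the extra $\log n$ factor to match the advertised $O(T(n))$ bound in the statement. The plan to handle this is to replace each plain binary search with an exponential-then-binary scheme whose query sizes form a geometric series $1, 2, 4, \ldots$, so that the aggregate size of all queried subgraphs is $O(n)$; the assumed subadditivity of $T$ can then be used to bound the cumulative cost of the oracle calls across the entire reduction by $O(T(n))$ rather than $O(T(n) \log n)$.
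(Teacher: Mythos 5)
Your binary-search scheme does correctly pin down the four vertices one at a time (the prefix predicate is monotone in the prefix length, and an induced $C_4$ of an induced subgraph is an induced $C_4$ of the original graph), but it only yields $O(T(n)\log n)$, and the plan you sketch for removing the $\log n$ factor does not work. First, an exponential-then-binary search does not make the aggregate size of the queried subgraphs $O(n)$: once the galloping phase brackets the threshold $i^*$ inside an interval $(2^{k-1},2^k]$ with $2^k=\Theta(i^*)$, the binary phase still makes $\Theta(\log i^*)$ queries, each on a prefix of $\Theta(i^*)$ vertices, so the aggregate is $\Theta(i^*\log i^*)$; and nothing in your reduction forces the thresholds to be small---all four searches can run on graphs of $\Theta(n)$ vertices, so in the worst case you are back to $\Theta(\log n)$ calls of cost up to $T(n)$ each. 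Second, even granting an $O(n)$ aggregate, subadditivity points the wrong way: $T(a+b)\le T(a)+T(b)$ bounds $T\bigl(\sum_k n_k\bigr)$ by $\sum_k T(n_k)$, not the reverse, so it cannot convert a bound on the total size of the queried subgraphs into a bound on the total running time (for instance $T(n)=\sqrt{n}$ is subadditive, yet $n$ queries of size $1$ have aggregate size $n$ and total cost $n\gg T(n)$). So the stated $O(T(n))$ bound is not established by your argument.

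The paper gets the clean $O(T(n))$ bound by shrinking the instance geometrically instead of binary-searching over a fixed-size instance: partition $V$ into eight parts of size at most $\lceil n/8\rceil$, run $\mathcal A$ on the induced subgraph $G[V_{i_1}\cup V_{i_2}\cup V_{i_3}\cup V_{i_4}]$ for every choice of four parts (any induced $C_4$ has its four vertices in at most four parts, so some union is positive whenever $G$ contains one), and recurse into a positive union, which has at most $n/2+O(1)$ vertices. Each level costs a constant number of calls to $\mathcal A$ on the current instance, and the instance sizes halve, giving $S(n)\le O(T(n))+S(n/2+O(1))\le O(T(n))$. To salvage your approach you would need each successive search to operate on a graph of geometrically decreasing size; as written, you should either adopt a recursion of this kind or settle for $O(T(n)\log n)$, which is weaker than the proposition you were asked to prove (though it would still suffice for the paper's application).
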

    \begin{proof}
        The algorithm is recursive. Given any graph $G = (V, E)$, partition the vertices arbitrary into eight parts $V_1, \dots, V_8$, each of size at most $\lceil n/8\rceil$. 
        For each choice of $i_1, i_2, i_3, i_4 \in [8]$, we run $\mathcal A$ on the induced subgraph $G[V_{i_1}, V_{i_2}, V_{i_3}, V_{i_4}]$ to test if it contains an induced 4-cycle. If in none of these instances we find an induced 4-cycle, we can safely report that $G$ has  no induced 4-cycle.
        Otherwise, if we succeed for some $i_1, i_2, i_3, i_4$, then we recursively search for an induced 4-cycle in the subgraph $G[V_{i_1}, V_{i_2}, V_{i_3}, V_{i_4}]$. This is a graph on at most $(n/2) + O(1)$ nodes, so the search algorithm takes time $S(n) \le O(T(n)) + S(n/2 + O(1)) \le O(T(n))$ (since $T$ is subadditive).
    \end{proof}
\section{Detection on Clusters}
    \label{sec:cluster}

    Our algorithm for induced 4-cycle detection works by partitioning the vertex set of the input graph into collections of cliques of various sizes.
    We call these cliques \emph{clusters} in the graph.
    In this section, we present algorithms for finding induced 4-cycles with vertices contained in specific sets of clusters. 
    The subroutines we introduce here will later be combined with additional ideas to construct our final induced 4-cycle detection algorithm.
    
    In the rest of this section, a cluster simply refers to the vertex set of a clique in a graph,
    and the \emph{size} of a cluster is the number of vertices it contains.

    \subsection{Cluster Pairs}
    
    The notion of \emph{ordered clusters} is a key idea underlying our algorithm for detecting induced 4-cycles.
    Intuitively, two clusters are ordered if the edges between them determine nested neighborhoods. 

    \begin{definition}[Ordered Clusters]
        \label{def:ordered-clusters}
        We say two clusters $X$ and $Y$ are \emph{ordered} if there exist functions $f_{XY}\colon X\to \ZZ$ and $g_{XY}\colon Y\to\ZZ$ such that the pair $(x,y)\in X\times Y$ is an edge in the underlying graph precisely when $f_{XY}(x)\le g_{XY}(y)$.
        We refer to $f_{XY}$ and $g_{XY}$ as \emph{orderings} for the cluster pair $(X,Y)$.
        We say the orderings are \emph{concise} if they satisfy the additional property
        that for any choice of  $x,\tilde{x}\in X$, if $f_{XY}(x)\neq f_{XY}(\tilde{x})$ then 
        $N_Y(x)\neq N_Y(\tilde{x})$,
        and similarly for any   $y,\tilde{y}\in Y$, if $g_{XY}(y)\neq g_{XY}(\tilde{y})$ then $N_X(y)\neq N_X(\tilde{y})$.
    \end{definition}

    If a pair of clusters is ordered, then the adjacency information between the clusters can be succinctly represented using the orderings described above. 
    The following result shows that if a pair of clusters does not contain an induced 4-cycle, then those clusters must be ordered.

    \begin{lemma}[Detection on Cluster Pairs]
        \label{lem:cluster-pair}
        Given a graph $H$ on clusters $A$ and $B$ of sizes $s$ and $t$ respectively,
        there is an $O(st)$-time algorithm that either
        \begin{itemize}
            \item 
                reports that $H$ has an induced 4-cycle, or 
            \item 
                determines that $H$ has no induced 4-cycle,
                verifies that $A$ and $B$ are ordered, and returns concise orderings for the pair $(A,B)$ with range in $\set{0,\dots, s+1}$.
            \end{itemize}
    \end{lemma}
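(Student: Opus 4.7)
The plan is to implement the construction from \Cref{subsec:overview-clustered} in a black-box manner. First, I compute $g_{AB}(b) := \deg_A(b)$ for every $b \in B$, and then $f_{AB}(a) := \min_{b \in N_B(a)} g_{AB}(b)$ for every $a \in A$, using the convention $f_{AB}(a) := s+1$ when $N_B(a)$ is empty. Both functions take values in $\{0,\ldots,s+1\}$ and are computable in $O(st)$ time via a single scan over the bipartite adjacency between $A$ and $B$. Afterwards, I iterate over all $(a,b) \in A \times B$ and verify that $f_{AB}(a) \le g_{AB}(b)$ holds if and only if $(a,b)$ is an edge of $H$. If every pair passes the check, I return $(A,B)$ as ordered with witnesses $(f_{AB},g_{AB})$.

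If the check fails at some pair $(a,b)$, I extract an induced 4-cycle. The direction ``edge implies $f_{AB}(a) \le g_{AB}(b)$'' holds by the definition of $f_{AB}$, so the violating pair must be a non-edge $(a,b)$ with $f_{AB}(a) \le g_{AB}(b)$. Selecting $\tilde{b} \in N_B(a)$ that attains $g_{AB}(\tilde{b}) = f_{AB}(a)$ yields $\deg_A(\tilde{b}) \le \deg_A(b)$, and since $a \in N_A(\tilde{b}) \setminus N_A(b)$ a simple pigeonhole argument on $A$-degrees supplies some $\tilde{a} \in N_A(b) \setminus N_A(\tilde{b})$. Then $(a, \tilde{a}, b, \tilde{b})$ is an induced 4-cycle: the edges $(a,\tilde{a})$ and $(b,\tilde{b})$ are present because $A$ and $B$ are cliques of $H$; the edges $(\tilde{a},b)$ and $(\tilde{b},a)$ are witnessed by the neighborhood choices above; the non-edge $(a,b)$ is the violating pair itself; and the non-edge $(\tilde{a},\tilde{b})$ holds by the choice of $\tilde{a}$. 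This extraction takes only $O(s+t)$ further time.

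It remains to confirm that the returned orderings are \emph{concise}. For $g_{AB}$, distinct values immediately yield distinct $A$-degrees, forcing $N_A(b) \neq N_A(\tilde{b})$. For $f_{AB}$, if $f_{AB}(a) < f_{AB}(\tilde{a})$ and $b \in N_B(a)$ achieves $g_{AB}(b) = f_{AB}(a)$, then $b \notin N_B(\tilde{a})$, since otherwise we would get the contradiction $f_{AB}(\tilde{a}) \le g_{AB}(b) = f_{AB}(a) < f_{AB}(\tilde{a})$; hence $N_B(a) \neq N_B(\tilde{a})$. I do not foresee any serious obstacle; the only conceptual content is the observation that a violated ordering inequality on a non-edge forces neighborhood witnesses in both $A$ and $B$, which together close an induced 4-cycle thanks to the clique structure within each cluster, and every other step is a direct computation within the $O(st)$ budget.
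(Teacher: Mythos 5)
Your algorithm, the witness functions $g_{AB}(b) = \deg_A(b)$ and $f_{AB}(a) = \min_{b \in N_B(a)} g_{AB}(b)$ (with the $s+1$ convention), the extraction of an induced 4-cycle from a violating non-edge via the degree/pigeonhole argument, and the conciseness check are all exactly the paper's proof, and those parts are correct, including the $O(st)$ accounting.

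However, there is one required output you never justify: when every pair passes the check, the lemma asks the algorithm to \emph{determine that $H$ has no induced 4-cycle}, not merely to certify that $(A,B)$ is ordered, and your proposal stops at ``I return $(A,B)$ as ordered.'' Passing the check only verifies orderedness; to report induced-$C_4$-freeness you still need the converse implication ``ordered $\Rightarrow$ no induced 4-cycle,'' which is the other half of the equivalence the lemma rests on and which you neither state nor prove (your closing remark even identifies the extraction step as ``the only conceptual content''). The missing argument is short but genuine: since $A$ and $B$ are cliques and $C_4$ contains no triangle, any induced 4-cycle in $H$ must have exactly two vertices in each cluster, say $a,\tilde a \in A$ and $b,\tilde b \in B$ with $a$ adjacent to $b$ but not $\tilde b$ and $\tilde a$ adjacent to $\tilde b$ but not $b$; orderedness then gives $g_{AB}(\tilde b) < f_{AB}(a) \le g_{AB}(b)$ and $g_{AB}(b) < f_{AB}(\tilde a) \le g_{AB}(\tilde b)$, a contradiction (equivalently, summing the four inequalities yields $f_{AB}(a)+f_{AB}(\tilde a) > g_{AB}(b)+g_{AB}(\tilde b) \ge f_{AB}(a)+f_{AB}(\tilde a)$). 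Appealing ``black-box'' to the overview does not close this, since the overview explicitly defers the full proof of exactly this equivalence to the present lemma. With that paragraph added, your proof coincides with the paper's.
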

    \begin{proof}
        Define the function $g\colon B\to\ZZ$ by setting $g(b) = \deg_A(b)$ for all vertices $b\in B$.
        Then define the function $f\colon A\to\ZZ$ by setting 
            \begin{equation}
            \label{eq:fdef-fromg}
            f(a) = \min_{b\in N_B(a)} g(b)
            \end{equation}
        to be the minimum degree of a node $b\in B$ adjacent to $a$, for all vertices $a\in A$.
        We can compute the functions $f$ and $g$ in $O(st)$ time by going over the neighborhoods of each vertex in the graph. 
        If $a$ has no neighbors in $B$, we instead set $f(a) = s+1$ to be greater than $g(b)$ for all $b\in B$.      

        We now go over all pairs $(a,b)\in A\times B$ that are not edges in the graph,
        and check whether they all satisfy $f(a) > g(b)$. 
        This takes $O(st)$ time, because we spend constant time per pair in $A\times B$.

        Suppose first that we find some non-edge pair $(a,b)$ with $f(a) \le g(b)$.
        By \cref{eq:fdef-fromg}, this means there exists a neighbor $\tilde{b}\in B$
        of $a$ such that 
            \[\deg_A(\tilde{b}) = g(\tilde{b}) = f(a) \le g(b) = \deg_A(b).\]
        Since $a$ is adjacent to $\tilde{b}$ but not  to $b$,
        the above equation implies that $b$ has a neighbor $\tilde{a}\in A$ that is not adjacent to $\tilde{b}$.
        Consequently, in this case we can report that $H$ contains an induced 4-cycle, because  $(a,\tilde{b},b,\tilde{a})$ forms an induced 4-cycle in the graph.

        Otherwise, our procedure verifies that for all non-edges $(a,b)\in A\times B$
        we have $f(a)> g(b)$.
        By \cref{eq:fdef-fromg}, for every edge $(a,b)\in A\times B$ we have $f(a)\le g(b)$.  
        So by \Cref{def:ordered-clusters}, the functions $f$ and $g$ are valid orderings for the cluster pair $(A,B)$.

            We claim that in this case, $H$ has no induced 4-cycle.
            Suppose to the contrary that $H$ has an induced 4-cycle.
            Since $A$ and $B$ are cliques, and 
            an induced 4-cycle cannot contain a triangle,
            $H$ must have an induced 4-cycle with exactly two nodes in each of $A$ and $B$.
            The nodes within each cluster must be adjacent,
            so without loss of generality the induced 4-cycle is of the form
            $(a,\tilde{a},\tilde{b},b)$ for some $a,\tilde{a}\in A$ 
            and $b,\tilde{b}\in B$ such that (1) $a$ as adjacent to $b$ but not $\tilde{b}$,
            and (2) $\tilde{a}$ is adjacent to $\tilde{b}$ but not $b$.
            Condition (1) implies that $g(\tilde{b}) < f(a) \le g(b)$.
            This contradicts condition (2),
            which implies that $g(b) < f(\tilde{a}) \le g(\tilde{b})$.
            Thus $H$ cannot have an induced 4-cycle, as claimed.

        Finally, we prove that the orderings $f$ and $g$ are concise.
        Let $b,\tilde{b}\in B$ be vertices such that $g(b)\neq g(\tilde{b})$.
        Then $\deg_A(b) \neq \deg_A(\tilde{b})$,
        so $N_A(b)\neq N_A(\tilde{b})$.
        Similarly, 
        let $a,\tilde{a}\in A$ be vertices such that
        $f(a)\neq f(\tilde{a})$.
        Then \cref{eq:fdef-fromg} implies that $N_B(a)\neq N_B(\tilde{a})$.
        Thus the orderings are concise,
        and we can return $f_{AB} = f$ and $g_{AB} = g$.
    \end{proof}

    Given a collection of clusters in a graph, we can repeatedly apply \Cref{lem:cluster-pair} to check if any pair of these clusters contains an induced 4-cycle. 
    If we find no induced 4-cycle in this fashion, then \Cref{lem:cluster-pair} will have verified that the clusters are pairwise ordered, and provided us orderings for each cluster pair that certify this fact.
    This is a very strong condition that enables the design of fast algorithms on these clusters, because questions about adjacencies between clusters can be reduced to arithmetic comparisons of the outputs of their orderings.
    These comparisons can then be efficiently implemented using the range query data structure provided by \Cref{prop:range-search}.

        \subsection{Cluster Triples}

    The next natural step is to check if any triple of clusters contains an induced 4-cycle.
    The following observation helps with this.

    \begin{observation}[Comparable Neighborhoods in Triples]
        \label{obs:triple-comparable-neighborhood}
        Let $H$ be a graph on clusters $X,Y,Z$.
        Then $H$ contains an induced 4-cycle with exactly two
        nodes in $X$ and one node in each of $Y$ and $Z$ if and only if there exists an edge $(y,z)\in Y\times Z$ such that $N_X(y)$ and $N_X(z)$ are incomparable.
    \end{observation}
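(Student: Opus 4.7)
The plan is to prove both directions of the biconditional by unpacking the definition of an induced 4-cycle and leveraging the fact that $X$ is a clique.

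For the forward direction, I would start with an induced 4-cycle having two vertices $x, \tilde x \in X$ and one vertex each in $Y$ and $Z$, say $y$ and $z$. Since $X$ is a clique, the pair $\{x, \tilde x\}$ is an edge, so $x$ and $\tilde x$ cannot be diagonally opposite in the cycle (the two diagonals of an induced 4-cycle are precisely its non-edges). Hence they must appear consecutively, and up to swapping the roles of $x$ and $\tilde x$, the cyclic order is $(x, \tilde x, y, z)$. Reading off the edges and non-edges of the cycle directly gives $\tilde x \in N_X(y)$, $x \notin N_X(y)$, $x \in N_X(z)$, and $\tilde x \notin N_X(z)$, together with $(y,z) \in E$. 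Thus $\tilde x \in N_X(y) \setminus N_X(z)$ and $x \in N_X(z) \setminus N_X(y)$, showing that $N_X(y)$ and $N_X(z)$ are incomparable.

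For the reverse direction, I would use the incomparability of $N_X(y)$ and $N_X(z)$ to select vertices $\tilde x \in N_X(y) \setminus N_X(z)$ and $x \in N_X(z) \setminus N_X(y)$; these are automatically distinct, since they are differentiated by adjacency to $y$. Because $X$ is a clique, $\{x, \tilde x\}$ is an edge, so the four pairs $(x, \tilde x), (\tilde x, y), (y, z), (z, x)$ are all edges of $H$, while the two remaining pairs $(x, y)$ and $(\tilde x, z)$ are non-edges by the choice of $x$ and $\tilde x$. This is exactly the edge structure of the induced 4-cycle $(x, \tilde x, y, z)$ of the required type.

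There is no substantive obstacle: the observation is really a direct reformulation of what an induced 4-cycle must look like when two of its vertices lie in a common clique. The only subtle point is ensuring that the two $X$-vertices sit consecutively on the cycle rather than opposite one another, which is forced by combining the clique property of $X$ with the fact that the diagonals of an induced 4-cycle are non-edges.
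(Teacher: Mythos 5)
Your proof is correct and follows essentially the same route as the paper's: both directions hinge on the fact that the clique edge inside $X$ forces the two $X$-vertices to be consecutive on the cycle, and the reverse direction picks one witness from each set difference to assemble the cycle directly. No gaps.
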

    \begin{proof}
        Suppose there exists an edge $(y,z)\in Y\times Z$ such that $N_X(y)$ and $N_X(z)$ are incomparable. 
        Then we can select distinct nodes $x_y\in N_X(y)\setminus N_X(z)$ and $x_z\in N_X(z)\setminus N_X(y)$.
        By definition, we get that $(y,x_y,x_z,z)$ forms an induced 4-cycle in $H$. 

        Conversely, suppose we are given $y\in Y$, $z\in Z$, and $x_1,x_2\in X$ forming an induced 4-cycle in $H$.
        Since $X$ is a cluster, $x_1$ and $x_2$ are adjacent.
        These nodes must have degree two in the 4-cycle,
         so $x_1$ and $x_2$ are each adjacent to unique, distinct nodes in $\set{y,z}$.
        This then implies that $N_X(y)$ and $N_X(z)$ are incomparable, because each of these neighborhoods contains a unique, distinct node from $\set{x_1, x_2}$.
        Finally, vertices $y$ and $z$ must have degree two in the induced 4-cycle, so $(y,z)$ is an edge.
        This proves the desired result.
    \end{proof}

    We now leverage \Cref{obs:triple-comparable-neighborhood} to efficiently detect induced 4-cycles on triples of clusters that are pairwise ordered. 

    \begin{lemma}[Detection on Cluster Triples]
        \label{lem:cluster-triple}
        Given a graph $H$ on  pairwise ordered clusters $A,B,C$ of size $s$ each, together with orderings for each cluster pair,
        we can determine in $\tilde{O}(s)$ time whether $H$ contains an induced 4-cycle.
    \end{lemma}
    
    \begin{proof}
        For each choice of distinct clusters $X,Y\in\set{A,B,C}$, we let $(f_{XY}, g_{XY})$ denote the orderings for the cluster pair $(X,Y)$, as in \Cref{def:ordered-clusters}.
        
        Following \Cref{obs:triple-comparable-neighborhood},
        we try to determine if there are
        adjacent nodes $b\in B$ and $c\in C$ whose neighborhoods $N_A(b)$ and $N_A(c)$ in $A$ are incomparable.
        To check this comparability condition,
        for each node $b\in B$ we will compute some thresholds $h_{\low}(b)$ and $h_{\high}(b)$.
        These thresholds will
        intuitively record information about the largest and smallest neighborhoods in $A$ from nodes in $C$ that ``sandwich'' $N_A(b)$.

        Formally, for each node $b\in B$ we define 
            \begin{equation}
                \label{eq:hlowhigh-def}
                h_\low(b) = \min_{a\in A\setminus N_A(b)} f_{AC}(a)
                    \quad
                    \text{and}
                    \quad
                h_\high(b) = \max_{a\in N_A(b)} f_{AC}(a).
            \end{equation}

        \begin{claim}
            \label{claim:hlowhigh-comp}
            We can compute $h_\low(b)$
            and $h_\high(b)$ for all $b\in B$ in $\tilde{O}(s)$
            time. 
            \end{claim}
        \begin{claimproof}
            Let $S\sub \ZZ^2$ be the set of points
                \[S = \set{\pair{f_{AB}(a), f_{AC}(a)}\mid a\in A}.\]
            By \Cref{prop:range-search},
            we can in $\tilde{O}(s)$ time insert all points of $S$ into a range query data structure.
            For each vertex $b\in B$, we binary search over the range of $f_{AC}(a)$ and make $O(\log s)$ queries to this data structure to find a vertex $a\in A$ that minimizes the value of $f_{AC}(a)$ subject to the condition  
                \[f_{AB}(a) > g_{AB}(b).\]
            By \Cref{def:ordered-clusters}, the above inequality holds precisely when $a$ is not adjacent to $b$. 
            Thus we can compute $h_\low(b) = f_{AC}(a)$
            for the vertex $a$ returned by this procedure
            (if the data structure reports that no $a\in A$ satisfies the above inequality, we instead set $h_\low(a) = \infty$).
            This takes $\tilde{O}(s)$ time because we make $O(\log s)$ queries for each of the $s$ vertices in $B$.

            Similar reasoning lets us compute the $h_\high(b)$ values in the same time bound.
        \end{claimproof}

        We apply \Cref{claim:hlowhigh-comp} to compute $h_\low(b)$ and $h_\high(b)$ values for all $b\in B$ in $\tilde{O}(s)$ time.
        The next claim shows how we can use these values to check for incomparable neighborhoods.

        \begin{claim}
            \label{claim:triple:low-high-sandwich}
            For any nodes $b\in B$ and $c\in C$,
            the neighborhoods $N_A(b)$ and $N_A(c)$
            are incomparable if and only if $h_\low(b) \le g_{AC}(c) < h_{\high}(b)$.
        \end{claim}
        \begin{claimproof}
            By \Cref{def:ordered-clusters}, the inclusion $N_A(b)\sub N_A(c)$ holds
            precisely when every $a\in N_A(b)$ satisfies
            $f_{AC}(a) \le g_{AC}(c)$.
            Then by \cref{eq:hlowhigh-def}, this inclusion is equivalent to $h_\high(b) \le g_{AC}(c)$.

            Similar reasoning shows that $N_A(c)\sub N_A(b)$
            is equivalent to $g_{AC}(c) < h_\low(b)$.

            Since $N_A(b)$
            and $N_A(c)$ are incomparable if and only if 
            neither of the inclusions $N_A(b)\sub N_A(c)$
            or $N_A(c)\sub N_A(b)$ holds,
            the desired result follows.
        \end{claimproof}

        Let $S\sub \ZZ^2$ be the set of points 
            \[S = \set{\pair{g_{AC}(c), g_{BC}(c)}\mid c\in C}.\]
        By \Cref{prop:range-search}, we can in $\tilde{O}(s)$ time
        insert all points of $S$ into a range query data structure.
        For each vertex $b\in B$, we query this data structure to determine if there exists $c\in C$ such that 
            \begin{equation}
                \label{eq:triple:comp-in-A}
                    h_\low(b) \le g_{AC}(c) < h_\high(b)
            \end{equation}
            and
            \begin{equation}
                \label{eq:triple:bc-edge}
                    f_{BC}(b)\le g_{BC}(c).
            \end{equation}

        If $c\in C$ satisfying \cref{eq:triple:comp-in-A,eq:triple:bc-edge} exists, we report the graph has an induced 4-cycle.
        If no such $c$ exists for any $b\in B$,
        we claim there is no induced 4-cycle with two nodes in $A$ and one node in each of $B$ and $C$.
        Indeed, by \Cref{claim:triple:low-high-sandwich}, the inequality from  \cref{eq:triple:comp-in-A}
        holds if and only if $N_A(b)$ and $N_A(c)$ are incomparable.
        By \Cref{def:ordered-clusters}, 
        the inequality from \cref{eq:triple:bc-edge} holds if and only if  $(b,c)$ is an edge.
        Thus by \Cref{obs:triple-comparable-neighborhood},
        our procedure correctly detects
        if the graph has an induced 4-cycle with two nodes in $A$ and one node in each of $B$ and $C$.
        This process takes $\tilde{O}(s)$ time,
        because we make one query for each for each of the $s$ vertices in $B$.

        By symmetric reasoning, we can in $\tilde{O}(s)$ time determine if the graph contains an induced 4-cycle with exactly two nodes in $B$, or exactly two nodes in $C$.
        An induced 4-cycle in the graph cannot have three nodes in a single part from $\set{A,B,C}$, because $A,B,C$ are cliques and a 4-cycle does not contain a triangle.
        Thus if we have not found an induced 4-cycle after performing the above checks, we can report that the graph contains no induced 4-cycle. 
    \end{proof}

            So far, we have seen that given a collection of clusters in a graph, we can repeatedly apply \Cref{lem:cluster-pair} to either find an induced 4-cycle on some pair of the clusters, or obtain orderings for all cluster pairs.
        In the latter case, we can then repeatedly apply \Cref{lem:cluster-triple} to determine if some triple of clusters contains an induced 4-cycle. 
        If we find an induced 4-cycle in this way, then we have successfully solved our problem.
        If we find no such induced 4-cycles, then we would like to use this lack of 4-cycles to infer  additional structural properties about edges between clusters, that then could help us check for induced 4-cycles among \emph{quadruples} of clusters.

        \Cref{lem:cluster-pair} shows that if a pair of clusters has no induced 4-cycle, then that pair is ordered. 
        We build off this structural characterization,
        and show that if a triple of clusters does not contain an induced 4-cycle, then not only is it the case that the clusters are pairwise ordered, but the orderings between them are strongly \emph{correlated}.

        More precisely, suppose we have pairwise ordered clusters $W,X,Z$ with concise orderings 
        $f_{AB}, g_{AB}$
        for each pair $(A,B)$ with  $A,B\in\set{W,X,Z}$.
        By \Cref{def:ordered-clusters},
        the neighborhood in $Z$ of a vertex from $X$ always takes the form
            \[\set{z\in Z\mid g_{XZ}(z) \ge \zeta_\tsuff}\]
        for some integer $\zeta_\tsuff$.
        Our next result shows that, among other properties,
        if we have the additional constraint that the graph on $W,X,Z$ has no induced 4-cycle,
        then the neighborhood in $Z$ of any vertex from $W$ always takes the form
        \[\set{z\in N_Z(w)\mid g_{XZ}(z) = \zeta_\low}
                            \sqcup
                                \set{z\in Z\mid
                                    g_{XZ}(z) \ge \zeta_\tsuff}
        \]
        for some integers $\zeta_\low$ and $\zeta_\tsuff$.
        In other words, 
            the ordering $g_{XZ}$, defined initially only in terms of the edges between clusters $X$ and $Z$,
        also controls adjacencies between clusters $W$ and $Z$.
        Moreover, the structure of neighborhoods from $W$ to $Z$ is 
        is almost the same as the structure of neighborhoods from $X$ to $Z$.
        The only difference is that the former is parameterized by an extra integer $\zeta_\low$, and the associated neighborhood may contain a proper subset of nodes $z\in Z$ with $g_{XZ}(z) = \zeta_\low$ (in comparison, for any neighborhood from $X$ to $Z$ and integer $\zeta$, either the neighborhood contains all nodes $z\in Z$ with $g_{XZ}(z) = \zeta$, or none of them).

    \begin{lemma}[Correlated Neighborhoods]
        \label{lem:splitconn:computation}
        Let $W,X,Z$ be pairwise ordered clusters each of size $s$, such that the graph on these clusters does not contain an induced 4-cycle. 
        Then, given concise orderings 
        $f_{AB}, g_{AB}$
        for each cluster pair $(A,B)$ for  $A,B\in\set{W,X,Z}$, we can in $\tilde{O}(s)$ time
        compute for each vertex $w\in W$ with nonempty neighborhoods $N_X(w)$ and $N_Z(w)$, a vector 
        
            \[\vec{w} = 
                \pair{\xi_\tpre, 
                    \xi_\high,
                    \zeta_\low,
                    \zeta_\tsuff}
                \in\grp{\ZZ\cup\set{-\infty,\infty}}^4\]
        with the property that if
         $\xi_\high > \zeta_\low$,
        then

            \begin{equation}
                \label{eq:neighborhood-interval-structure}
                \begin{cases}
                    N_X(w) = \set{x\in X\mid
                                    f_{XZ}(x) \le \xi_\tpre}
                            \sqcup
                                \set{x\in N_X(w)\mid
                                    f_{XZ}(x) = \xi_\high}
                        \\
                    N_Z(w) = 
                                \set{z\in N_Z(w)\mid g_{XZ}(z) = \zeta_\low}
                            \sqcup
                                \set{z\in Z\mid
                                    g_{XZ}(z) \ge \zeta_\tsuff}
                \end{cases}
            \end{equation}
        where $\xi_\tpre\in \im(f_{XZ})\cup\set{-\infty}$
        and $\zeta_\tsuff\in \im(g_{XZ})\cup\set{\infty}$.
    \end{lemma}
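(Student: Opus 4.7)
The plan is to define the four components of $\vec{w}$ directly from $w$'s neighborhoods and then argue that induced-$C_4$-freeness forces the advertised decomposition. Specifically, set $\xi_\high := \max_{x\in N_X(w)} f_{XZ}(x)$ and $\zeta_\low := \min_{z\in N_Z(w)} g_{XZ}(z)$, let $\xi_\tpre$ be the largest value in $\im(f_{XZ})\cup\set{-\infty}$ with $\set{x\in X : f_{XZ}(x) \leq \xi_\tpre} \subseteq N_X(w)$, and let $\zeta_\tsuff$ be the smallest value in $\im(g_{XZ})\cup\set{\infty}$ with $\set{z\in Z : g_{XZ}(z) \geq \zeta_\tsuff} \subseteq N_Z(w)$. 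By construction the first set on each line of \cref{eq:neighborhood-interval-structure} is contained in the corresponding neighborhood, so the substantive content is to rule out any $x\in N_X(w)$ with $\xi_\tpre < f_{XZ}(x) < \xi_\high$, and symmetrically any $z\in N_Z(w)$ with $\zeta_\low < g_{XZ}(z) < \zeta_\tsuff$, under the hypothesis $\xi_\high > \zeta_\low$.

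The structural proof extracts dichotomies from \Cref{obs:triple-comparable-neighborhood}. Since $G[W\sqcup X\sqcup Z]$ is induced-$C_4$-free, for each edge $(w,z)\in W\times Z$ the neighborhoods $N_X(w)$ and $N_X(z) = \set{x : f_{XZ}(x) \leq g_{XZ}(z)}$ must be comparable, yielding the dichotomy $g_{XZ}(z) \leq \xi_\tpre$ or $g_{XZ}(z) \geq \xi_\high$ for every $z\in N_Z(w)$. Applied to a minimizer $z^*$ witnessing $\zeta_\low$, the hypothesis $\zeta_\low < \xi_\high$ forces $\zeta_\low \leq \xi_\tpre$; the symmetric argument via edges $(w,x)\in W\times X$ produces the dichotomy $f_{XZ}(x) \leq \zeta_\low$ or $f_{XZ}(x) \geq \zeta_\tsuff$ for every $x\in N_X(w)$, and applied to a maximizer $x^*$ of $\xi_\high$ also gives $\zeta_\tsuff \leq \xi_\high$. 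The heart of the proof will be ruling out $x\in N_X(w)$ with $\xi_\tpre < f_{XZ}(x) < \xi_\high$, which I expect to be the main obstacle. My plan here is to exploit conciseness: since $f_{XZ}(x) < f_{XZ}(x^*)$, conciseness upgrades the automatic inclusion $N_Z(x) \supseteq N_Z(x^*)$ to $N_Z(x)\supsetneq N_Z(x^*)$, producing some $z_1\in Z$ with $g_{XZ}(z_1) \in [f_{XZ}(x), \xi_\high)$. The first dichotomy applied to $z_1$ then yields $z_1\notin N_Z(w)$ (since $g_{XZ}(z_1)$ lies strictly between $\xi_\tpre$ and $\xi_\high$), and one verifies that $(z_1, z^*, w, x)$ is an induced $4$-cycle---the non-edge $z^* x$ follows from $g_{XZ}(z^*) = \zeta_\low \leq \xi_\tpre < f_{XZ}(x)$---contradicting the hypothesis. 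The symmetric construction, producing an $x_1\in X$ with $f_{XZ}(x_1)\in(\zeta_\low, g_{XZ}(z)]$ and the induced cycle $(x_1, x^*, w, z)$, rules out the bad~$z\in N_Z(w)$.

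To achieve the $\tilde O(s)$ time bound, the plan is to precompute a constant number of $2$-dimensional orthogonal range min/max data structures via \Cref{prop:range-search}. A structure built on the points $\set{\pair{f_{WX}(x), f_{XZ}(x)} : x\in X}$ lets us extract, for each $w\in W$ in polylogarithmic time, both $\xi_\high = \max\set{f_{XZ}(x) : f_{WX}(x) \leq g_{WX}(w)}$ and $M(w) := \min\set{f_{XZ}(x) : f_{WX}(x) > g_{WX}(w)}$; the value $\xi_\tpre$ is then the largest element of $\im(f_{XZ})\cup\set{-\infty}$ strictly smaller than $M(w)$, which I can obtain by a binary search over the sorted image. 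An analogous structure on $\set{\pair{g_{WZ}(z), g_{XZ}(z)} : z\in Z}$ yields $\zeta_\low$ and $\zeta_\tsuff$. Each structure takes $\tilde O(s)$ time to build and each $w\in W$ incurs only $\tilde O(1)$ queries, for a total running time of $\tilde O(s)$.
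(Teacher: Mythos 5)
Your proposal is correct in substance and rests on the same pillars as the paper's proof---the comparability criterion of \Cref{obs:triple-comparable-neighborhood} applied to edges $(w,x)\in W\times X$ and $(w,z)\in W\times Z$, conciseness of the $(X,Z)$ orderings, and $\tilde O(1)$-per-vertex orthogonal range queries---and you define $\xi_\high$ and $\zeta_\low$ exactly as the paper does. Where you genuinely differ is in how $\xi_\tpre$ and $\zeta_\tsuff$ are obtained: the paper fixes the intermediate image value $\xi_\tmed$ and splits into two cases (whether or not $w$ has a neighbor $x$ with $f_{XZ}(x)=\xi_\tmed$), proving containment claims in each case, whereas you define $\xi_\tpre$ (resp.\ $\zeta_\tsuff$) extremally as the largest fully-contained prefix (smallest fully-contained suffix) threshold and rule out any ``middle'' neighbor by exhibiting an explicit induced $4$-cycle built from a conciseness witness $z_1$ (resp.\ $x_1$). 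This is a cleaner, case-free organization of the same ideas, and your computation of $\xi_\tpre$ from $M(w)=\min\set{f_{XZ}(x) : x\in X\setminus N_X(w)}$ via range queries is valid and gives the claimed $\tilde O(s)$ bound.

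One caveat you should repair in the write-up: your dichotomies are stated slightly too strongly, because values of $g_{XZ}$ need not lie in $\im(f_{XZ})$ and vice versa. Comparability applied to an edge $(w,z)$ only yields that either $g_{XZ}(z)\ge\xi_\high$ or the largest element of $\im(f_{XZ})$ not exceeding $g_{XZ}(z)$ is at most $\xi_\tpre$; consequently the intermediate claims $\zeta_\low\le\xi_\tpre$ and $\zeta_\tsuff\le\xi_\high$ can fail verbatim (e.g., when $\zeta_\low$ falls in a gap of $\im(f_{XZ})$ just above $\xi_\tpre$, as in the paper's ``avoiding'' case). This does not break your argument, since every use only needs the image-rounded version: the non-edge $\grp{z^*,x}$ follows because $f_{XZ}(x)\le\zeta_\low$ would place the whole prefix at $f_{XZ}(x)$ inside $N_X(w)$ and force $\xi_\tpre\ge f_{XZ}(x)$; likewise $z_1\notin N_Z(w)$, and the bound $g_{XZ}(z)<\xi_\high$ for a bad $z$, follow from comparability together with the extremal definitions of $\xi_\tpre,\zeta_\tsuff$. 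Finally, if the entire prefix up to $\xi_\high$ lies in $N_X(w)$, your $\xi_\tpre$ equals $\xi_\high$ and the union in \cref{eq:neighborhood-interval-structure} is no longer disjoint; capping $\xi_\tpre$ strictly below $\xi_\high$ (as the paper effectively does) removes this cosmetic issue.
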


    \begin{proof}
        First, 
        for each vertex $w\in W$ with nonempty neighborhood $N_X(w)$,
        we compute the largest integer $\xi_{\high} = \xi_{\high}(w)$ such that $w$ is adjacent to a node $x\in X$ with $f_{XZ}(x) = \xi_{\high}$.

        To do this, let $S\sub\ZZ^2$ be the set of points 
            \begin{equation}
            \label{eq:block-boundary-set}
            S = \set{\pair{f_{XZ}(x), g_{WX}(x)} \mid x\in X}.
            \end{equation}
        By \Cref{prop:range-search}, we can in $\tilde{O}(s)$ time insert all points of $S$ into a range query data structure.
        For each $w\in W$, we then make $O(\log s)$ queries to this structure by binary searching over
        the range of $f_{XZ}$,
        to find the largest integer $\xi_{\high}$ for which there exists $x\in X$ with
            \begin{equation}
            \label{eq:splitconn:lowhigh}
              f_{XZ}(x) = \xi_{\high}
            \end{equation}
            and
            \begin{equation}
            \label{eq:splitconn:wx-edge}
                f_{WX}(w) \le g_{WX}(x).
            \end{equation}

        By \Cref{def:ordered-clusters}, the inequality from \cref{eq:splitconn:wx-edge} holds precisely when $x\in N_X(w)$.
        So this procedure correctly identifies the maximum value
        $\xi_\high= \xi_\high(w)$ such that $N_X(w)$ has 
        a node $x$ satisfying $f_{XZ}(x) = \xi_\high$.
        Moreover, this takes $\tilde{O}(s)$ time overall because we make $O(\log s)$ queries for each of the $s$ nodes in $W$.
    
        By similar reasoning,
        we compute for each vertex $w\in W$ with nonempty neighborhood $N_Z(w)$ the smallest integer 
        $\zeta_\low = \zeta_\low(w)$ such that
        $w$ is adjacent to a node $z\in Z$
        with $g_{XZ}(z) = \zeta_\low$,
        spending only $\tilde{O}(s)$ time overall.

        Now, take arbitrary $w\in W$ such that $N_X(w)$ and $N_Z(w)$ are both nonempty.
        If 
            \[\xi_\high(w)\le \zeta_\low(w)\]
        then we set $\vec{w} = \pair{-\infty,\xi_\high(w),\zeta_\low(w),\infty}$.

        Otherwise, we have 
            \[\xi_\high(w) > \zeta_\low(w).\]

        In this case, we infer additional structure concerning $N_X(w)$ and $N_Z(w)$ using the assumption that the graph induced on the clusters $W,X,Z$ has no induced 4-cycle.

        \begin{claim}[Comparable Neighborhoods]
            \label{claim:triple-contrapositive}
            For any edge $(w,x)\in W\times X$, the neighborhoods $N_Z(w)$ and $N_Z(x)$ are comparable.
            Similarly, for any edge $(w, z)\in W\times Z$, the neighborhoods $N_X(w)$ and $N_X(z)$ are comparable. 
        \end{claim}
        \begin{claimproof}
            This follows immediately by combining \Cref{obs:triple-comparable-neighborhood} with the assumption that there is no induced 4-cycle on the cluster triple $(W,X,Z)$.
        \end{claimproof}

        In what follows, 
        fix $w\in W$,
        and
        abbreviate $\xi_\high = \xi_\high(w)$ and $\zeta_\low = \zeta_\low(w)$.

        By the definition of the index $\zeta_\low$, there exists a vertex $z\in N_Z(w)$
        with $g_{XZ}(z) = \zeta_\low$.
        Applying \Cref{claim:triple-contrapositive} to the adjacent nodes $w\in W$ and $z\in Z$, we get that $N_X(w)$ and $N_X(z)$ are comparable.
        However, by \Cref{def:ordered-clusters} we have 
            \[N_X(z) = \set{x\in X\mid f_{XZ}(x) \le \zeta_\low}.\]
        By assumption, $N_X(w)$ contains a vertex $x$
        with $f_{XZ}(x) = \xi_\high > \zeta_\low$.
        So in fact we must have 
            \begin{equation}
            \label{eq:neighborhood-has-Xprefix}
            \set{x\in X\mid f_{XZ}(x) \le \zeta_\low} = N_X(z)\subset N_X(w).
            \end{equation}

        Similarly, by definition of $\xi_\high$, there exists $x\in N_X(w)$
        with $f_{XZ}(x) = \xi_\high$.
        Applying \Cref{claim:triple-contrapositive} to the adjacent nodes $w\in W$ and $x\in X$, we get that $N_Z(w)$ and $N_Z(x)$ are comparable.
        However, by \Cref{def:ordered-clusters} we have 
            \[N_Z(x) = \set{z\in Z\mid g_{XZ}(z) \ge \xi_\high}.\]
        By assumption, $N_Z(w)$ contains a vertex $z$ with and $g_{XZ}(z) = \zeta_\low < \xi_\high$.
        So we must have 
            \begin{equation}
                \label{eq:neighborhood-has-Zsuffix}
            \set{z\in Z\mid g_{XZ}(z) \ge \xi_\high} = N_Z(x)\subset N_Z(w).
            \end{equation}

        Let $\xi_{\tmed}$ be the smallest integer in $\im(f_{XZ})$ 
        that is greater than $\zeta_{\low}$.
        This value is well-defined since $\zeta_\low < \xi_{\high}$.
        To further characterize the neighborhoods of $w$ in $X$ and $Z$,
        we perform casework based off whether $w$ has a neighbor $x\in X$ with $f_{XZ}(x) = \xi_{\tmed}$.

        \medskip
        \noindent \stepfont{Case 1: Avoiding Intermediate Values}
        
            Suppose first that $N_X(w)$
            does not contain any $x\in X$ with $f_{XZ}(x) = \xi_{\tmed}$.

            We prove that this assumption constrains the possible values
            $g_{XZ}$ takes on for $z\in N_Z(w)$.

            \begin{claim}[Avoiding $Z$ Values]
                \label{claim:split:empty-neighborhoods:Z}
                In \stepfont{case 1}, the vertex $w$ is not adjacent to any node $z\in Z$ satisfying the inequality $\zeta_{\low} <  g_{XZ}(z) < \xi_{\high}$.
            \end{claim}
            \begin{claimproof}
                Suppose to the contrary 
                that there exists $z\in N_Z(w)$
                such that $\zeta_{\low} < g_{XZ}(z) < \xi_{\high}$.
                
                Set $\zeta_\tmed = g_{XZ}(z)$.
                Applying \Cref{claim:triple-contrapositive} to the adjacent nodes $w\in W$ and $z\in Z$, we get that $N_X(w)$ and $N_X(z)$ are comparable. 
                However, by \Cref{def:ordered-clusters} we have 
                    \begin{equation}
                    \label{style:XZ-zeta-tmed}
                    N_X(z) = \set{x\in X\mid f_{XZ}(x) \le \zeta_\tmed}.
                    \end{equation}
                    
                Since the orderings $f_{XZ}$ and $g_{XZ}$
                are concise,
                if we sort the images of $f_{XZ}$ and $g_{XZ}$
                into a single list, the outputs of $f_{XZ}$ and $g_{XZ}$ must alternate.
                In particular, since $\zeta_\tmed$ and $\xi_\tmed$ are the smallest outputs of $g_{XZ}$ and $f_{XZ}$ greater than $\zeta_\low$ respectively, and $\zeta_\low$ is an output of $g_{XZ}$,
                we must have 
                    \begin{equation}
                    \label{style:xi-zeta-tmed-concise}
                    \zeta_\low < \xi_{\tmed} \le \zeta_{\tmed}.
                    \end{equation}

                By the case assumption, $N_X(w)$ has no vertices in
                $x\in X$ with $f_{XZ}(x) = \xi_{\tmed}$.
                Combining this with 
                \cref{style:XZ-zeta-tmed,style:xi-zeta-tmed-concise}
                and the fact that $\xi_{\tmed}$ is in the image of $f_{XZ}$,
                we deduce that $N_X(z)$ contains a vertex not  in $N_X(w)$.
                On the other hand, by assumption $N_X(w)$ has a node $x$
                satisfying $f_{XZ}(x) = \xi_{\high}$.
                Since $\xi_{\high} > \zeta_{\tmed}$,
                by \cref{style:XZ-zeta-tmed} this node
                cannot appear in $N_X(z)$.
                Thus $N_X(w)$ and $N_X(z)$ are incomparable.
                This contradicts \Cref{claim:triple-contrapositive}.
                Thus our initial assumption was wrong and the desired result holds. 
            \end{claimproof}

            \begin{claim}[Avoiding $X$ Values]
                \label{claim:split:empty-neighborhoods:X}
                In \stepfont{case 1}, 
                the node $w$ is not adjacent to any node $x\in X$
                satisfying the inequality
                $\zeta_\low < f_{XZ}(x) < \xi_\high$.
            \end{claim}
            \begin{claimproof}
                This follows by symmetric reasoning to the proof of \Cref{claim:split:empty-neighborhoods:Z}.
            \end{claimproof}

        In this case, we define $\xi_\tpre = \zeta_\low$ and $\zeta_\tsuff = \xi_\high$, and set 
        \[
        \vec{w} = 
                \pair{\xi_\tpre, 
                    \xi_\high,
                    \zeta_\low,
                    \zeta_\tsuff}.
        \]

        From the definitions of $\xi_\high$ and $\zeta_\low$,
        we know that the neighborhood $N_X(w)$ only contains $x\in X$ with $f_{XZ}(x) \le \xi_\high$,
        and $N_Z(w)$ only contains $z\in Z$ with $g_{XZ}(z) \ge \zeta_\low$.
        By \cref{eq:neighborhood-has-Xprefix,eq:neighborhood-has-Zsuffix},
        we know that $N_X(w)$ contains all  
        $x\in X$
        with $f_{XZ}(x) \le \xi_\tpre$, and $N_Z(w)$ contains all $z\in Z$
        with $g_{XZ}(z) \ge \zeta_\tsuff$.
        Combining these observations together with
        \Cref{claim:split:empty-neighborhoods:Z,claim:split:empty-neighborhoods:X},
        we see that \cref{eq:neighborhood-interval-structure} holds for our choice of $\vec{w}$.

        \medskip
        \noindent \stepfont{Case 2: Connected Neighborhoods}
        
            Suppose instead that $N_X(w)$
            has a vertex $x\in X$ with $f_{XZ}(x) = \xi_\tmed$.

            We prove that this assumption forces 
             $N_Z(w)$ to contain many addition vertices in $Z$.

            \begin{claim}[Capturing $Z$ Values]
                \label{claim:conn:Zneighbors}
                In \stepfont{case 2}, we have 
                $\set{z\in Z\mid g_{XZ}(z) > \zeta_\low}\subset N_Z(w)$.
            \end{claim}
            \begin{claimproof}
                By the case assumption, there exists a vertex $x\in N_X(w)$
                with $f_{XZ}(x) = \xi_\tmed$.

                By \Cref{def:ordered-clusters} we have 
                    \[N_Z(x) = \set{z\in Z\mid g_{XZ}(z) \ge \xi_{\tmed}}.\]
                By definition of $\zeta_\low$,
                the neighborhood $N_Z(w)$
                has a node $z$ with $g_{XZ}(z) = \zeta_\low$.
                applying \Cref{claim:triple-contrapositive} to the adjacent nodes $w\in W$ and $x\in X$, we get that $N_Z(w)$ is comparable to $N_Z(x)$.
                Since $\xi_{\tmed}$ is the smallest integer larger than $\zeta_\low$ in the image of $f_{XZ}$,
                combining these observations together
                with the above equation 
                implies that 
                    \[\set{z\in Z\mid g_{XZ}(z) > \zeta_\low} = N_Z(x)\subset N_Z(w)\]
                as claimed. 
            \end{claimproof}

            \begin{claim}[Capturing $X$ Values]
                \label{claim:conn:Xneighbors}
                In \stepfont{case 2}, we have 
                $\set{x\in X\mid f_{XZ}(x) < \xi_\high} \subset N_X(w)$.
            \end{claim}
            \begin{claimproof}
                This follows by symmetric reasoning to the proof of 
                \Cref{claim:conn:Zneighbors}.
            \end{claimproof}

        In this case, we 
        define $\xi_\tpre$ to be the largest integer less than $\xi_\high$ in the image of $f_{XZ}$
        and $\zeta_\tsuff$
        to be the smallest integer greater than $\zeta_\low$ in the image of $g_{XZ}$, and then set 
        \[
        \vec{w} = 
                \pair{\xi_\tpre, 
                    \xi_\high,
                    \zeta_\low,
                    \zeta_\tsuff}.
        \]
            By combining the definitions of $\xi_\high$ and $\zeta_\low$ with \Cref{claim:conn:Zneighbors,claim:conn:Xneighbors}, we see that \cref{eq:neighborhood-interval-structure} holds for this choice of $\vec{w}$.

            At this point we have defined for every vertex $w\in W$ such that $N_X(w)$ and $N_Z(w)$ are nonempty, a vector $\vec{w}$ satisfying the conditions of the lemma statement. 
            We have also proved that in $\tilde{O}(s)$ time
            we can compute 
            the second and third coordinates $\xi_\high(w)$ and $\zeta_\low(w)$ for all of these vectors.
            It remains to show how we compute the 
            first and final coordinates of each vector.

            If $\xi_\high(w)\le \zeta_\low(w)$, then we already said we set the first coordinate $\xi_\tpre(w) = -\infty$
            and the third coordinate $\zeta_\tsuff(w) = \infty$.
            This takes $O(s)$ time for all vertices $w$ in this case.

            Otherwise, $\xi_\high(w) > \zeta_\low(w)$.
            In this situation, \cref{eq:neighborhood-interval-structure} 
            shows that the first coordinate 
            of $\vec{w}$ should be equal to the largest integer $\xi_\tpre$ less than $\xi_\high$ such that $w$ 
            is adjacent to a node
            $x\in X$
            with $f_{XZ}(x) = \xi_\tpre$.
            To compute this value, let $S\sub \ZZ^2$ be the set of points
            defined in \cref{eq:block-boundary-set}.
            At the beginning of this proof,
            we already inserted the points of $S$ into a range query data structure, following \Cref{prop:range-search}.
            For each $w\in W$,
            we then make $O(\log s)$ queries to this structure by binary searching over the range of $f_{XZ}$, restricted to outputs less than $\xi_\high$,
            to find the largest integer $\xi_{\tpre} < \xi_\high$ for which there exists $x\in X$ such that 
                \[f_{XZ}(x) = \xi_{\tpre}\]
            and
                \[f_{WX}(w) \le g_{WX}(x).\]
            If the structure reports that no vertex $x\in X$ satisfies the above conditions for any $\xi_{\tpre}$ in the range of $f_{XZ}$ with $ \xi_{\tpre} < \xi_{\high}$,
            then we set $\xi_{\tpre} = -\infty$.

            By \Cref{def:ordered-clusters}, the above inequality holds precisely when $x\in N_X(w)$.
            Thus, by the discussion in the previous paragraph, this procedure correctly identifies the value $\xi_\tpre$ in the first entry of $\vec{w}$.
            This takes $\tilde{O}(s)$ time because we make $O(\log s)$ queries  for each of the $s$ nodes in $W$.        

        Similar reasoning lets us compute the final entry $\zeta_\tsuff$
        of each $\vec{w}$ in $\tilde{O}(s)$ time overall.
    \end{proof}

    \subsection{Cluster Quadruples}

    We now employ the neighborhood structure enforced by \Cref{lem:splitconn:computation} to efficiently detect whether a collection of four pairwise ordered clusters contains an induced 4-cycle.  

    \begin{lemma}[Detection on Cluster Quadruples]
        \label{lem:cluster-quadruple}
        Given a graph $H$ on four pairwise ordered clusters $A,B,C,D$ of size $s$ each, together with concise orderings for each cluster pair,
        we can determine in $\tilde{O}(s)$ time whether $H$ contains an induced 4-cycle.
    \end{lemma}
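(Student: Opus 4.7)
The plan is to dispatch induced $C_4$s contained in only two or three of the clusters first, and then focus on the main case of 4-cycles spanning all four of $A,B,C,D$. Since each cluster pair is ordered, no induced $C_4$ lies in exactly two clusters (this is the converse direction shown inside the proof of Lemma~\ref{lem:cluster-pair}). For the 3-cluster case, I invoke Lemma~\ref{lem:cluster-triple} on each of the four triples $\{A,B,C\}, \{A,B,D\}, \{A,C,D\}, \{B,C,D\}$, at a total cost of $\tilde{O}(s)$; if any invocation finds a 4-cycle we return \emph{yes}, and otherwise we may henceforth assume there is no induced $C_4$ on fewer than four clusters.

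Under this assumption, I apply Lemma~\ref{lem:splitconn:computation} to the triples $(W,X,Z) = (B,A,C)$ and $(W,X,Z) = (D,A,C)$, computing in $\tilde{O}(s)$ time vectors $\vec{b} = (k(b), K(b), L(b), \ell(b))$ for each $b \in B$ with nonempty neighborhoods in $A$ and $C$, and similarly $\vec{d}$ for each $d \in D$. (Vertices in $B$ or $D$ with an empty neighborhood in $A$ or $C$ cannot participate in a 4-cluster induced $C_4$ and are discarded.) Following the correlated-neighborhood description supplied by Lemma~\ref{lem:splitconn:computation}, I split each candidate cycle $(a,b,c,d)$ with one vertex per cluster into \emph{ordinary} ones, where $f_{AC}(a) \le \min(k(b), k(d))$ and $g_{AC}(c) \ge \max(\ell(b), \ell(d))$, and \emph{exceptional} ones, where at least one of $f_{AC}(a) \in \{K(b), K(d)\}$ or $g_{AC}(c) \in \{L(b), L(d)\}$ holds; these two families together cover every 4-cluster candidate.

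For ordinary cycles, Lemma~\ref{lem:splitconn:computation} makes the cross-cluster edges $(a,b), (b,c), (a,d), (d,c)$ automatic, and the existence of $a \in A$, $c \in C$ in the prescribed ranges with $(a,c)$ a non-edge reduces (using conciseness of $f_{AC}, g_{AC}$) to the arithmetic test $\max(\ell(b), \ell(d)) < \min(k(b), k(d))$ on the vector coordinates. Combining this with the non-edge condition $f_{BD}(b) > g_{BD}(d)$ for $(b,d)$ gives a constant-dimensional orthogonal-range predicate on the coordinates of $\vec{b}$ and $\vec{d}$, so a range-query data structure over $\{(k(d), \ell(d), g_{BD}(d)) : d \in D\}$ built via Proposition~\ref{prop:range-search} answers all $b \in B$ in $\tilde{O}(s)$ time. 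For exceptional cycles there are four symmetric subcases according to which vector entry is forced; in the representative subcase $f_{AC}(a) = K(b)$, the non-edge $(a,c)$ becomes automatic as soon as we restrict to $c$ with $g_{AC}(c) < K(b)$, and detection reduces to finding $(b,d) \in B \times D$ satisfying three inequalities: $i(b) \le g_{AD}(d)$ (a $2$-path $(b,a,d)$ with $f_{AC}(a) = K(b)$ exists), $j(b) \le g_{CD}(d)$ (a $2$-path $(b,c,d)$ with $g_{AC}(c) < K(b)$ exists), and $f_{BD}(b) > g_{BD}(d)$ (the pair $(b,d)$ is a non-edge). The auxiliary values $i(b), j(b)$ are precomputed in $\tilde{O}(s)$ time using additional range queries, after which a $3$-dimensional range-query structure over $D$ resolves the remaining pair-detection in $\tilde{O}(s)$ time. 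The three other exceptional subcases are handled by identical templates with the roles of $B$ vs.\ $D$ or $A$ vs.\ $C$ exchanged.

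The main technical obstacle I anticipate is the ordinary case: one must carefully use the conciseness of $f_{AC}, g_{AC}$ to argue that the existential statement ``there exist $a \in A$, $c \in C$ realizing an ordinary 4-cycle'' is exactly equivalent to the coordinate comparison $\max(\ell(b), \ell(d)) < \min(k(b), k(d))$, with no cycles missed and no false positives arising from integer-valued corner cases (for instance, when $\ell(b)$ or $k(b)$ takes one of the extreme sentinel values $\pm\infty$ permitted by Lemma~\ref{lem:splitconn:computation}). A secondary concern is organizing the four-fold symmetry of the exceptional case so that a single template argument---with the roles of $B$ vs.\ $D$, or of $A$ vs.\ $C$, exchanged---cleanly covers every subcase without duplicated or missed work.
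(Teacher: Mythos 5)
Your proposal is correct and follows essentially the same route as the paper's proof: reduce to the four-distinct-cluster case via Lemma~\ref{lem:cluster-triple}, invoke Lemma~\ref{lem:splitconn:computation} to get the prefix/extreme-layer structure, split candidates into ordinary and exceptional families, and resolve each with orthogonal range queries in $\tilde{O}(s)$ time. The only (immaterial) difference is orientation: you attach the vectors to the vertices of $B$ and $D$ (describing their neighborhoods in $A,C$ via $f_{AC},g_{AC}$, as in the technical overview), whereas the paper's proof attaches them to $A\sqcup C$ via the $(B,D)$-orderings; the two are dual under swapping the diagonal pairs, and your handling of the sentinel/conciseness issue matches the paper's restriction to values in $\im(f)$ and $\im(g)$.
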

    \begin{proof}
        First, run the algorithm of \Cref{lem:cluster-triple} on the cluster triples $(A,B,C)$, $(B,C,D)$, $(D,C,A)$, and $(C,A,D)$.
        This takes $\tilde{O}(s)$ time.
        If the algorithm ever returns an induced 4-cycle, we report the graph has an induced 4-cycle.
        Otherwise, if the algorithm reports none of the triples have an induced 4-cycle, we have verified that any induced 4-cycle in the graph must have exactly one node from each of the clusters $A,B,C,D$.

        For each choice of distinct clusters $X,Y\in\set{A,B,C,D}$, let $f_{XY}$ and $g_{XY}$ denote the provided 
        concise
        orderings for the pair $(X,Y)$.

        Run the algorithm from \Cref{lem:splitconn:computation} on the triples $(A,B,D)$ and $(C,B,D)$.
        This takes $\tilde{O}(s)$ time.
        Since the triples $(A,B,D)$ and $(B,C,D)$ do not have induced 4-cycles, this algorithm computes vectors $\vec{w}$ for all $w\in A\sqcup C$ with neighbors in both $B$ and $D$, that satisfy 
        the conditions
        from the statement of \Cref{lem:splitconn:computation}.

        Our goal is to determine if there exists an induced 4-cycle using exactly one node from each of $A,B,C,D$.
        If a vertex $w\in A\sqcup C$ does not have neighbors to $B$ or $D$, it cannot participate in such a 4-cycle.
        Thus, we may restrict our attention to nodes $w\in A\sqcup C$ that have neighbors in both $B$ and $D$.
        These are precisely the nodes for which we have computed vectors.

        Without loss of generality, it suffices to check if the graph contains an induced 4-cycle of the form $(a,b,c,d)\in A\times B\times C\times D$ such that 
        \begin{equation}
            \label{eq:quadruple:edge}
            \grp{a,b}, \grp{b,c}, \grp{c,d}, \grp{d,a}\text{ are  edges,} 
        \end{equation} 
            while 
        \begin{equation}
            \label{eq:quadruple:non-edge}
            \grp{a,c}\text{ and }\grp{b,d}\text{ are non-edges.}
        \end{equation}
        This is because if we can perform this check, then we can rearrange the order of the clusters $A,B,C,D$ and employ symmetric reasoning to check for any possible induced 4-cycle.

        Suppose vertex $w\in A\sqcup C$ has vector $\vec{w} = \pair{\xi_\tpre, \xi_\high, \zeta_\low, \zeta_\tsuff}$ with $\xi_\high \le \zeta_\low$.
        Then \cref{eq:neighborhood-interval-structure} together with 
        \Cref{def:ordered-clusters} shows that all nodes in $N_B(w)$ and $N_D(w)$ are adjacent to one another.
        Thus by \cref{eq:quadruple:non-edge}, the vertex $w$ cannot participate in an induced 4-cycle.

        The previous paragraph shows that for the purpose of detecting induced 4-cycles,
        we may restrict our attention to vertices $w\in A\sqcup C$
        with vectors $\vec{w} = \pair{\xi_{\tpre}, \xi_\high, \zeta_\low, \zeta_{\tsuff}}$ such that $\xi_\high > \zeta_\low$.
        By \Cref{lem:splitconn:computation},
        such vertices $w$ must satisfy \cref{eq:neighborhood-interval-structure} for $W = A$, $X = B$, and $Z = D$.
        Intuitively, 
        \cref{eq:neighborhood-interval-structure}
        shows that for any relevant node $w\in A\sqcup C$, the neighborhoods of $w$ in $B$ 
        can be decomposed into the disjoint union of a \emph{full prefix} 
        $\set{b\in B\mid f_{BD}(b) \le \xi_{\tpre}}$ consisting of all nodes in $b\in B$ with small rank with respect to the $(B,D)$ orderings,
        and an \emph{extreme layer}
        $\set{b\in N_B(w)\mid f_{BD}(b) = \xi_\high}$
        consisting of all the nodes in $N_B(w)$ with the largest possible rank.
        In a similar fashion,
        \cref{eq:neighborhood-interval-structure} also demonstrates that the neighborhood of $w$ in $D$ can 
        be decomposed into the disjoint union of a \emph{full suffix}
        $\set{d\in D\mid g_{BD}(d) \ge \zeta_\tsuff}$
        and an \emph{extreme layer}
        $\set{d\in N_D(w)\mid g_{BD}(d) = \zeta_\low}$.

        Using this structure, we seek an induced 4-cycle $(a,b,c,d)$ satisfying the conditions from \cref{eq:quadruple:edge,eq:quadruple:non-edge}.
        For candidate vertices $a\in A$ belonging to this 4-cycle, we perform this search by casework on whether the  nodes
        $b\in B$ and $d\in D$ in the 4-cycle come from the extreme layers or not. 
        For each $w\in A\sqcup C$, we write
                    \[\vec{w} = \pair{\xi_\tpre(w), \xi_\high(w), \zeta_\low(w),\zeta_\tsuff(w)}\]
                for convenience. 

            \medskip

            \noindent\stepfont{Case 1: Extreme Layers}

                In this case, we check if the graph has an induced 4-cycle with vertices $a\in A$, $b\in B$, and $d\in D$ such that at least one of $b$ or $d$ belongs to an extreme layer 
                of the neighborhood of $a$.
                Without loss of generality, suppose that $b$ is in the extreme layer of $N_B(a)$.
                That is, we seek a solution where $f_{BD}(b) = \xi_\high(a)$.
                In this case, the node $d\in D$ in the induced 4-cycle is not adjacent to $b$ if and only if $g_{BD}(d) < \xi_\high(a)$.
                
                To find an induced 4-cycle meeting these conditions, we first compute for each node $a\in A$ some thresholds $\beta_a$ and $\delta_a$ that intuitively identify the node $b$ in the extreme layer of $N_B(a)$
                and the node $d$ in $N_D(a)$ not adjacent to $b$ 
                that each have the largest possible neighborhoods in $C$.

                For convenience, define the extreme layer sets
                    \begin{equation}
                        \label{eq:extreme-Ba}
                        \extreme_B(a) = \set{b\in N_B(a)\mid f_{BD}(b) = \xi_\high(a)}
                    \end{equation}
                and the sets of relevant neighbors in $D$ that are not adjacent to nodes in extreme layers
                    \begin{equation}
                        \label{eq:safe-Da}
                        \safe_D(a) = \set{d\in N_D(a)\mid g_{BD}(d) < \xi_\high(a)}.
                    \end{equation} 
                Then we define 
                    \begin{equation}
                        \label{eq:quadruple:beta-delta-def}
                        \beta_a = \min_{b\in\extreme_B(a)} f_{BC}(b)\quad\text{and}\quad
                        \delta_a = \max_{d\in \safe_D(a)} g_{CD}(d).
                    \end{equation}

                \begin{claim}
                    \label{claim:quadruple:block-expansion}
                    We can compute $\beta_a$ and $\delta_a$ for all $a\in A$ in $\tilde{O}(s)$ time.
                \end{claim}
                \begin{claimproof}
                    Let $S\sub\ZZ^3$ be the set of points 
                        \[S = \set{\pair{g_{AB}(b), f_{BD}(b), f_{BC}(b)}\mid b\in B}.\]
                    By \Cref{prop:range-search},
                    we can in $\tilde{O}(s)$ time insert all points of $S$ into a range query data structure.
                    For each vertex $a\in A$, we 
                    binary search over the range of $f_{BC}$ and
                    make $O(\log s)$ queries to this data structure to find a vertex $b\in B$ that minimizes the value of $f_{BC}(b)$ subject to the constraints that 
                    \begin{equation}
                        \label{eq:quadruple:ab-edge}
                        f_{AB}(a)\le g_{AB}(b)
                    \end{equation}
                    and
                    \begin{equation}
                        \label{eq:quadruple:B-block-identification}
                       f_{BD}(b) = \xi_\high(a).
                    \end{equation}

                    By \Cref{def:ordered-clusters}, the inequality from \cref{eq:quadruple:ab-edge} holds if and only if $b\in N_B(a)$.
                    Hence by \cref{eq:extreme-Ba},
                    we have $b\in \extreme_B(a)$ if and only if 
                    both  \cref{eq:quadruple:ab-edge,eq:quadruple:B-block-identification} hold.
                    Then by \cref{eq:quadruple:beta-delta-def},
                    we can compute $\beta_a$ as $f_{BC}(b)$ for the vertex $b$ obtained by this procedure.
                    This takes $\tilde{O}(s)$ time because we make $O(\log s)$ queries for each of the $s$ vertices in $A$. 

                    Similar reasoning lets us compute all the $\delta_a$ values in the same time bound.
                \end{claimproof}

            Run the algorithm from \Cref{claim:quadruple:block-expansion} to compute $\beta_a$ and $\delta_a$ for all $a\in A$ in $\tilde{O}(s)$ time.

            Now, let $S\sub \ZZ^3$ be the set of points 
                \[S = \set{\pair{g_{AC}(c), g_{BC}(c), f_{CD}(c)}\mid c\in C}.\]
            By \Cref{prop:range-search}, we can in $\tilde{O}(s)$ time insert the points of $S$ into a range query data structure.
            For each vertex $a\in A$, we query this data structure to determine if there exists $c\in C$ such that
                \begin{equation}
                    \label{eq:quadruple:ac-nonedge}
                    f_{AC}(a) > g_{AC}(c)
                \end{equation}
            and
                \begin{equation}
                    \label{eq:quadruple:c-common-neighbors}
                    \beta_a \le g_{BC}(c)\quad\text{and}\quad 
                    f_{CD}(c)\le \delta_a.
                \end{equation}

            If $c\in C$ satisfying \cref{eq:quadruple:ac-nonedge,eq:quadruple:c-common-neighbors} exists, we report that the graph has an induced 4-cycle.
            If no such $c\in C$ exists, we claim there is no induced 4-cycle satisfying the case assumptions.

            This works, because
            \Cref{def:ordered-clusters} shows that the inequality from \cref{eq:quadruple:ac-nonedge} holds precisely when $\grp{a,c}$ is not an edge. 
            \Cref{def:ordered-clusters} together with the definitions of $\beta_a$ and $\delta_a$ from \cref{eq:quadruple:beta-delta-def} shows that \cref{eq:quadruple:c-common-neighbors} holds precisely when $a$ and $c$ have common neighbors in
            the extreme layer $\extreme_B(a)$ and set of non-adjacent nodes $\safe_D(a)$.
            Finally, from \Cref{def:ordered-clusters} and 
            the definitions of these sets in \cref{eq:extreme-Ba,eq:safe-Da},
            we see that no vertex in $\extreme_B(a)$ is adjacent to any vertex in $\safe_D(a)$.

            This proves that if for some $a\in A$ there exists $c\in C$ satisfying \cref{eq:quadruple:ac-nonedge,eq:quadruple:c-common-neighbors}, then we  can pick $b\in \extreme_B(a)$ and $d\in \safe_D(a)$ adjacent to both $a$ and $c$ and obtain an induced 4-cycle $(a,b,c,d)$ as claimed.
            If instead no such $c\in C$ exists, it means that for every vertex $a\in A$, no $c\in C$ not adjacent to $a$ can have common neighbors with $a$ in $\extreme_B(a)$ and $\safe_D(a)$ simultaneously, so $a$ cannot be extended to an induced 4-cycle meeting the case assumptions.

            \medskip
            
            \noindent\stepfont{Case 2: Full Prefix and Suffix}

            It remains to check if the graph contains an induced 4-cycle with nodes $a\in A$, $b\in B$, and $d\in D$ such that $b$ and $d$  come from the full prefix 
            of $N_B(a)$ and full suffix of $N_D(a)$ respectively. 
            We may furthermore assume that the vertex $c\in C$ participating in the prospective induced 4-cycle we seek has the property that $b$ and $d$ come from the full prefix and full suffix of its neighborhoods $N_B(c)$ and $N_C(d)$ respectively.
            This is because if this were not the case, then at least one of $b$ or $d$ would come from the extreme layer of a neighborhood of $c$, and we could apply symmetric reasoning to the argument in \stepfont{case 1} to find the induced 4-cycle in this scenario.

            By definition, for any $w\in A\sqcup C$, 
            the full prefix in $N_B(w)$ is
                \[\set{b\in B\mid f_{BD}(b) \le \xi_\tpre(w)}\]
            and the full suffix in $N_D(w)$ is 
                \[\set{d\in D\mid g_{BD}(d)\ge \zeta_\tsuff(w)}.\]

            We may restrict our attention in this case to $w\in A\sqcup C$ with 
            \[\xi_\tpre(w)\in \im(f_{BD})\quad\text{and}\quad
            \zeta_\tsuff(w)\in \im(g_{BD}).\]
            This is because if this did not hold,
             by \Cref{lem:splitconn:computation}
            we would have $\xi_{\tpre}(w) = -\infty$ or $\zeta_\tsuff = \infty$,
            which would force the prefix or suffix of the relevant neighborhoods of $w$ to be empty,
            so that no solution could exist in this case involving $w$.

            Fix vertices $a\in A$ and $c\in C$.
            In order for $b\in B$ to participate in an induced 4-cycle with $a$ and $c$ in the current case, we need 
                \begin{equation}
                \label{eq:case3-pbound}
                f_{BD}(b)\le \min(\xi_\tpre(a), \xi_\tpre(c))
                \end{equation}
            since this condition is equivalent to saying $b$ is a common neighbor of $a$ and $c$ that belongs to the full prefix portions of the neighborhoods of both nodes. 
            Similarly, in order for $d\in D$ to participate in an induced 4-cycle with $a$ and $c$ in the current case,
            we need  
                \begin{equation}
                \label{eq:case3-qbound}
                g_{BD}(d)\ge \max(\zeta_\tsuff(a),\zeta_\tsuff(c))
                \end{equation}
            since this inequality is equivalent to saying $d$ is a common neighbor of $a$ and $c$ appearing in the full suffixes of the neighborhoods of both nodes.

            For $b$ and $d$ to participate in the same induced 4-cycle, by \cref{eq:quadruple:non-edge} we need $\grp{b,d}$ to not be an edge.
            By \Cref{def:ordered-clusters}, this happens if and only if $f_{BD}(b) > g_{BD}(d)$.
            Combining this with \cref{eq:case3-pbound,eq:case3-qbound}, we see that it is possible to select $b\in B$ and $d\in D$ which are common neighbors of vertices $a$ and $c$, appearing respectively in the prefixes and suffixes of the  neighborhoods of these nodes, precisely when 
                \begin{equation}
                    \label{eq:case3-interval-inequality}
                    \min(\xi_\tpre(a), \xi_\tpre(c)) > \max(\zeta_\tsuff(a),\zeta_\tsuff(c)).
                \end{equation}
            Note that here we used the fact that $\xi_\tpre(w)\in\im(f_{BD})$ and $\zeta_\tsuff(w)\in\im(g_{BD})$ for $w\in \set{a,c}$.

            Provided this inequality holds, the only additional constraint we need for it to be possible to extend $a$ and $c$ to an induced 4-cycle is that $\grp{a,c}$ is not an edge. 

            To that end, define $\tilde{C}\sub C$ to be the subset of  nodes $c\in C$ with the property that 
                \begin{equation}
                    \label{eq:case3:restricted-C}
                    \xi_\tpre(c) > \zeta_\tsuff(c).
                \end{equation}
            Now let $S\sub \ZZ^3$ be the set of points 
                \[S = \set{\pair{g_{AC}(c), \xi_\tpre(c), \zeta_\tsuff(c)}\mid c\in \tilde{C}}.\]
            By \Cref{prop:range-search},
            we can in $\tilde{O}(s)$ time insert the points of $S$ into a range query data structure.

            Let $\tilde{A}\sub A$ be the subset of nodes $a\in A$ with the property that 
                \begin{equation}
                    \label{eq:case3-restricted-A}
                    \xi_\tpre(a) > \zeta_\tsuff(a).
                \end{equation}
            For each vertex $a\in \tilde{A}$, we query the data structure to determine if there exists $c\in\tilde{C}$ with
                \begin{equation}
                    \label{eq:case3:ac-non-edge}
                    f_{AC}(a) > g_{AC}(c)
                \end{equation}
            and 
                \begin{equation}
                    \label{eq:case3:ij-comparison}
                    \xi_\tpre(a) > \zeta_\tsuff(c)\quad\text{and}\quad
                    \xi_\tpre(c) > \zeta_\tsuff(a).
                \end{equation}

            If we find such a $c$, we report that the graph has an induced 4-cycle.
            If for all $a\in\tilde{A}$ we find no  $c\in\tilde{C}$ meeting these conditions, we claim there is no induced 4-cycle in this case.

            Indeed, \Cref{def:ordered-clusters} shows that the inequality from \cref{eq:case3:ac-non-edge} holds precisely when $\grp{a,c}$ is not an edge.
            The inequalities from \cref{eq:case3:restricted-C,eq:case3-restricted-A,eq:case3:ij-comparison} together are equivalent to the inequality from \cref{eq:case3-interval-inequality}, which we already proved holds if and only if $a$ and $c$ have non-adjacent common neighbors $b\in B$ and $d\in D$ belonging to their full prefixes and suffixes respectively.
            Thus if some check succeeds in our queries to the data structure, the graph has an induced 4-cycle.
            If instead no check succeeds, then for every choice of non-adjacent $a\in A$ and $c\in C$, the vertices $a$ and $c$ cannot be extended to an induced 4-cycle using vertices $b\in B$ and $d\in D$ which come from their neighborhoods' full prefixes and suffixes respectively. 
            This approach takes $\tilde{O}(s)$ time overall because we make a query for each of the at most $s$ vertices in $\tilde{A}$.

            This completes the case analysis, and shows that in every situation we can determine whether the given quadruple of clusters has an induced 4-cycle in $\tilde{O}(s)$ time.
        \end{proof}

        In our final algorithm for induced 4-cycle detection,
        we will apply \Cref{lem:cluster-triple,lem:cluster-quadruple} to  identify 4-cycles  whose nodes appear in relatively large clusters. 
        To help detect induced 4-cycles where instead some nodes appear in small clusters, the following observation is helpful. 

        \begin{observation}[Neighborhood Size Characterization]
            \label{obs:neighborhood-size}
            Let $X$ be a cluster,
            and let
             $u,v,w\not\in X$ be distinct vertices such that 
             $(u,v,w)$ forms an induced 2-path.
             If the graph has no induced 4-cycle with two or more nodes in $X$,
             then there is an induced 4-cycle of the form $(x,u,v,w)$ for a vertex $x\in X$
            if and only if $\deg_X(v) < \codeg_X(u,w)$.
        \end{observation}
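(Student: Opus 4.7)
The plan is to rephrase the existence of the target induced $C_4$ as a set-theoretic condition and then handle the two implications via (a) pigeonhole for the easy direction and (b) a comparability argument analogous to \Cref{obs:triple-comparable-neighborhood} for the harder direction. The key reformulation is that a quadruple $(x,u,v,w)$ with $x \in X$ is an induced $4$-cycle if and only if $x \in (N_X(u) \cap N_X(w)) \setminus N_X(v)$: the edges $\grp{x,u}$ and $\grp{w,x}$ correspond to $x \in N_X(u) \cap N_X(w)$, the non-edge $\grp{x,v}$ corresponds to $x \notin N_X(v)$, and the remaining edges $\grp{u,v}, \grp{v,w}$ and non-edge $\grp{u,w}$ among $u,v,w$ are guaranteed by the induced $2$-path hypothesis.

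For the backward direction, if $\deg_X(v) < \codeg_X(u,w)$ then $|N_X(v)| < |N_X(u) \cap N_X(w)|$, so pigeonhole produces some $x \in (N_X(u) \cap N_X(w)) \setminus N_X(v)$, and by the reformulation $(x,u,v,w)$ is the desired induced $4$-cycle.

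For the forward direction, I would first establish a comparability lemma under the no-$C_4$-with-two-nodes-in-$X$ hypothesis: for any edge $\grp{y,y'}$ with $y,y' \notin X$, the neighborhoods $N_X(y)$ and $N_X(y')$ must be comparable. The argument mirrors the proof of \Cref{obs:triple-comparable-neighborhood}: incomparability would furnish witnesses $p \in N_X(y) \setminus N_X(y')$ and $q \in N_X(y') \setminus N_X(y)$, and combined with the cluster edge $\grp{p,q}$ (present because $X$ is a clique) this produces an induced $4$-cycle $(p, y, y', q)$ whose two cluster nodes $p, q \in X$ lie on consecutive positions, contradicting the hypothesis. Applying this lemma to the edges $\grp{u,v}$ and $\grp{v,w}$ of the induced $2$-path, $N_X(u)$ is comparable to $N_X(v)$ and $N_X(v)$ is comparable to $N_X(w)$. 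Given any $x \in (N_X(u) \cap N_X(w)) \setminus N_X(v)$, the fact that $x \in N_X(u) \setminus N_X(v)$ forces the first comparability to resolve as $N_X(v) \subseteq N_X(u)$, and symmetrically $N_X(v) \subseteq N_X(w)$. Hence $N_X(v) \subseteq N_X(u) \cap N_X(w)$ with strict containment witnessed by $x$, yielding $\deg_X(v) < \codeg_X(u,w)$.

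The only subtlety is the comparability lemma in the forward direction; once it is in hand the size comparison follows immediately from the pigeonhole reasoning used in the backward direction. No serious obstacles are expected.
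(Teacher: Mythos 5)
Your proof is correct and takes essentially the same approach as the paper: both directions ultimately rest on the comparability of $N_X(u)$ with $N_X(v)$ and of $N_X(v)$ with $N_X(w)$, which the paper obtains by invoking \Cref{obs:triple-comparable-neighborhood} on the clusters $\set{u}$, $\set{v}$, $X$ and which you reprove directly with the same incomparability-witness argument. Your pigeonhole argument for the ``degree inequality implies 4-cycle'' direction is a minor simplification (the paper uses comparability there too), but the substance is identical.
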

        \begin{proof}
            By assumption, the graph $H$ does not contain an induced 4-cycle using nodes $u$, $v$, and exactly two nodes in $X$.
            Thus, by applying \Cref{obs:triple-comparable-neighborhood} to the clusters $\set{u}$, $\set{v}$, and $X$,
            we get that the neighborhoods $N_X(u)$ and $N_X(v)$ are comparable because $\grp{u,v}$ is an edge.
            Similar reasoning shows that 
            since $\grp{v,w}$ is an edge,
            the neighborhoods $N_X(w)$ and $N_X(v)$ are comparable.
            This then implies that the sets $N_X(u)\cap N_X(w)$ and $N_X(v)$ are comparable.
        
            If $\deg_X(v) < \codeg_X(u,w)$, 
            the comparability condition implies we have 
                \[N_X(v) \subset \grp{N_X(u)\cap N_X(w)}.\]
            Thus, there exists a node $x\in X$ adjacent to both $u$ and $w$ but not $v$.
            In this case $\grp{x, u,v,w}$ forms an induced 4-cycle.

            Conversely, if $\deg_X(v) \ge \codeg_X(u,w)$, the comparability condition implies we have 
                \[\grp{N_X(u)\cap N_X(w)} \subseteq N_X(v).\]
            In this case, any common neighbor in $X$ of $u$ and $w$ is adjacent to $v$, and so $X\cup\set{u,v,w}$ cannot have an induced 4-cycle that uses exactly one node from $X$. 
            But by assumption, $H$ has no induced 4-cycle using two or more nodes from $X$ either.
            
            Thus $H$ does not have an induced 4-cycle, as claimed.
        \end{proof}

    \Cref{obs:neighborhood-size} shows that 
    computing sizes of common neighborhoods in clusters 
    can help with detecting induced 4-cycles in graphs.
    The following result leverages the adjacency structure of ordered cluster pairs to efficiently compute this information.

                \begin{lemma}[Common Neighborhoods in Clusters]
                \label{lem:common-neighbor-size-in-clusters}
                Let $H$ be a graph on pairwise ordered clusters $W,X,Z$ of sizes $r,s,t$ respectively.
                Given $H$
                together with orderings for each of its cluster pairs,
                we can in $\tilde{O}(r + st)$ time compute $\codeg_W(x,z)$
                for all pairs of vertices $(x,z)\in X\times Z$.
        \end{lemma}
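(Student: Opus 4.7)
The plan is to express each common-neighborhood count $\codeg_W(x,z)$ as an orthogonal range counting query, so that the bulk of the work becomes building a single range-counting data structure on $W$ and then issuing one query per pair $(x,z) \in X \times Z$.

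Concretely, using the orderings $f_{WX}, g_{WX}$ for the pair $(W,X)$ and $f_{WZ}, g_{WZ}$ for the pair $(W,Z)$ given by \Cref{def:ordered-clusters}, a vertex $w \in W$ is adjacent to $x \in X$ if and only if $f_{WX}(w) \leq g_{WX}(x)$, and similarly $w$ is adjacent to $z \in Z$ if and only if $f_{WZ}(w) \leq g_{WZ}(z)$. Consequently, for any $(x,z) \in X \times Z$,
\begin{equation*}
    \codeg_W(x,z) \;=\; \bigl|\{ w \in W : f_{WX}(w) \leq g_{WX}(x) \text{ and } f_{WZ}(w) \leq g_{WZ}(z) \}\bigr|.
\end{equation*}
This is precisely a $2$-dimensional orthogonal range counting query on the point set $S = \{(f_{WX}(w), f_{WZ}(w)) : w \in W\} \subseteq \ZZ^2$ with the axis-aligned box $(-\infty, g_{WX}(x)] \times (-\infty, g_{WZ}(z)]$.

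So the algorithm is as follows. First, compute the point set $S$ in $O(r)$ time and, by \Cref{prop:range-search}, preprocess it into a $2$-dimensional orthogonal range query data structure in $\tilde O(r)$ time. Then, for each of the $st$ pairs $(x,z) \in X \times Z$, issue one query for the box above, each taking $\tilde O(1)$ time and returning the desired value $\codeg_W(x,z)$. The total running time is $\tilde O(r) + st \cdot \tilde O(1) = \tilde O(r + st)$, as claimed. There is no real obstacle here; the only thing to verify is the translation between edges and ordering inequalities, which is immediate from \Cref{def:ordered-clusters}, and correctness of the query then follows by observing that the relevant box contains exactly the points in $S$ corresponding to common neighbors of $x$ and $z$.
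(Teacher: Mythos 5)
Your proposal is correct and matches the paper's own proof essentially verbatim: both build a $2$-dimensional orthogonal range counting structure on the points $(f_{WX}(w), f_{WZ}(w))$ for $w \in W$ via \Cref{prop:range-search}, and answer each pair $(x,z)$ with one box query $(-\infty, g_{WX}(x)] \times (-\infty, g_{WZ}(z)]$, giving $\tilde O(r + st)$ total time. No issues.
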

        \begin{proof}
            For $Y\in\set{X,Z}$, let 
            $f_{WY}$ and $g_{WY}$ denote the provided orderings for $W$ and $Y$ respectively.
            Let $S\sub \ZZ^2$ be the set of points 
                \[S = \set{\pair{f_{WX}(w), f_{WZ}(w)}\mid w\in W}.\]
            By \Cref{prop:range-search},
            we can in $\tilde{O}(r)$ time insert all points of $S$ into a range query data structure.
            For each $(x,z)\in X\times Z$, we make a query to the data structure to count the number of vertices $w\in W$ satisfying 
                \[f_{WX}(w)\le g_{WX}(x)\quad\text{and}\quad
                  f_{WZ}(w)\le g_{WZ}(z).\]
            By \Cref{def:ordered-clusters}, a vertex $w\in W$ satisfies the above two inequalities if and only if $w$ is adjacent to both $x$ and $z$.
            Consequently, the count returned by the data structure is precisely $\codeg_W(x,z)$.
            Since we make $st$ queries, this algorithm takes $\tilde{O}(r+st)$ time as claimed. 
        \end{proof}

\section{Cluster Decomposition} 
\label{sec:cluster-decomp}
In this section we
show how to decompose any graph avoiding induced 4-cycles 
into a collection of cliques we call \emph{clusters}.
In \cref{subsec:cluster-decomp-large} we introduce a decomposition that extracts large cliques from the graph until the graph outside the clusters is sparse,
based on the approach of~\cite{GyarfasHubenkoSolymosi2002}. 
In \cref{subsec:cluster-decomp-levels} we extend this decomposition to handle clusters of various different sizes.

\subsection{Decomposition into Large Clusters} \label{subsec:cluster-decomp-large}
Our approach is based on \cite[Proof of Theorem~1]{GyarfasHubenkoSolymosi2002},
and presented in \Cref{alg:clique-extraction}.
Roughly, the algorithm works by passing down to a subgraph $\tilde{G}$ of large minimum degree, finding a maximal independent set $I$ in this subgraph,
and then 
considering the sets of common neighbors in $\tilde{G}$ from vertices $x,y\in I$. The following observation is immediate:

\begin{observation} \label{obs:neighborhood-clique}
If the graph $G = (V, E)$ has
no induced 4-cycle, then for all non-adjacent pairs of vertices $(x,y)$, the common neighborhood $N(x) \cap N(y)$ is a clique.
\end{observation}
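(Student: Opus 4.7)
The plan is to prove the contrapositive via direct contradiction, using exactly the argument sketched in the technical overview under Structural Insight~1. Concretely, I will assume that $G$ has no induced $C_4$, fix any non-adjacent pair $x, y \in V$, and argue that the set $N(x) \cap N(y)$ cannot contain a non-edge.

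Suppose for the sake of contradiction that $N(x) \cap N(y)$ is not a clique. Then by definition there exist distinct vertices $z, w \in N(x) \cap N(y)$ such that $\{z, w\}$ is not an edge of $G$. The key step is to observe that the 4-tuple $(x, z, y, w)$ then satisfies all the edge and non-edge constraints required to be an induced $4$-cycle: the edges $\{x, z\}, \{z, y\}, \{y, w\}, \{w, x\}$ are all present because $z, w$ are common neighbors of $x$ and $y$, while the diagonals $\{x, y\}$ and $\{z, w\}$ are non-edges by hypothesis. This contradicts the assumption that $G$ is induced-$C_4$-free, completing the proof.

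There is no real obstacle here; the argument is a one-line application of the definition of an induced $4$-cycle, and no auxiliary lemma from the preliminaries is required. The only care needed is to make sure the four vertices $x, y, z, w$ are genuinely distinct, which follows because $z, w \in N(x) \cap N(y)$ forces $z, w \notin \{x, y\}$ (a vertex is not its own neighbor), and $z \neq w$ by the choice of the non-edge $\{z, w\}$.
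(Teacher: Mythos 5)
Your proof is correct and is essentially identical to the paper's own argument: the paper also observes (in contrapositive form) that a non-edge $\{z,w\}$ inside $N(x)\cap N(y)$ for a non-adjacent pair $(x,y)$ yields the induced $4$-cycle $(x,z,y,w)$. Your additional remark about the distinctness of the four vertices is a harmless extra detail not spelled out in the paper.
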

\begin{proof}
    We prove the contrapositive.
   If $N(x) \cap N(y)$  is not a clique
   for some non-edge $(x,y)$,
   then this common neighborhood contains non-adjacent nodes $u$ and $v$, and thus $(x, u, y, v)$ forms an induced 4-cycle,
   as desired.
\end{proof}

By \Cref{obs:neighborhood-clique},
if we ever identify a large common neighborhood $N(x)\cap N(y)$ in $G$ for a non-edge $(x,y)$, then either we have found a large clique, or we can report an induced 4-cycle.

Otherwise, no common neighborhood is large.
By maximality of $I$ however,
every vertex in $\tilde{G}$ belongs to $I$ or is adjacent to some node in $I$.
For each $x\in I$, we consider the set $U(x)$ of vertices in $\tilde{G}$ whose unique neighbor in $I$ is $x$.
If all the common neighborhoods from $I$ are small,
then some $U(x)$ must be large, because $I$, its common neighborhoods, and the $U(x)$
sets collectively cover the vertices in $\tilde{G}$.
If this large $U(x)$ set is a clique, we can again return it. 
If $U(x)$ is not a clique, then we show that we can swap the non-edge in it with $x$ to replace $I$ with a larger maximal independent set. 
We then run repeat this whole procedure with the new independent set $I$, and argue that this augmentation step cannot occur too many times, so that the overall algorithm is efficient. Formally, we prove the following. 

\begin{lemma}[Clique Extraction in Dense Graphs] \label{lem:clique-extraction}
Let $G = (V, E)$ be a graph with average degree~$d$. There is a deterministic $O(n^2)$-time algorithm that either detects an induced 4-cycle in $G$, or finds a clique $X \subseteq V$ of size at least $\Omega(d^2 / n)$.
\end{lemma}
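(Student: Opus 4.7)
The plan is to implement the structural argument of Gy\'arf\'as, Hubenko, and Solymosi sketched just above the lemma, while handling two algorithmic issues: (i) computing a maximum independent set is \NP{}-hard, so we maintain instead a maximal independent set that is also stable under simple $2$-swaps; and (ii) whenever a structural prediction (cliqueness of some $U(x)$ or of some common neighborhood) fails, we directly produce an induced $C_4$ witness.

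First, I would pass to an induced subgraph $\tilde{G}$ of minimum degree at least $d/4$ by iteratively deleting vertices of current degree below $d/4$. A standard charging argument shows $\tilde{G}$ retains at least $nd/4$ edges, so it still has average degree $\Omega(d)$. Second, I would greedily build a maximal independent set $I$ in $\tilde{G}$ and then perform 2-swap improvements: whenever some $U(x)$ contains a non-edge $\set{u,v}$, replace $x$ by $\set{u,v}$. Each swap strictly increases $|I|$, so there are at most $n$ of them, and afterwards every $U(x)$ is a clique by construction. Third, for each distinct pair $x, y \in I$ I would scan $N_{\tilde{G}}(x) \cap N_{\tilde{G}}(y)$ for a non-edge; if one exists, the contrapositive of \Cref{obs:neighborhood-clique} immediately gives an induced $C_4$ through $x, y$ and we terminate.

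Otherwise every $N(x) \cap N(y)$ is a clique as well, and it remains to show that some $|U(x)|$ or some $|N(x) \cap N(y)|$ has size $\Omega(d^2/n)$. Write $a_v = |N_{\tilde{G}}(v) \cap I|$, $S = \sum_v a_v = \sum_{x \in I} \deg_{\tilde{G}}(x) \geq |I|\,d/4$, and $T = \sum_{\set{x,y} \subseteq I} |N(x) \cap N(y)|$. The pointwise bound $a \leq 2\binom{a}{2}$ for $a \geq 2$ gives $S \leq \sum_x |U(x)| + 2T$. I split based on whether $|I| \geq 8\tilde{n}/(d+8)$: in that regime Cauchy-Schwarz on $\sum_v a_v^2 \geq S^2/(\tilde{n}-|I|)$ together with $\sum_v a_v^2 = S + 2T$ and $S \geq 2(\tilde{n}-|I|)$ yields $T \geq S^2/(4(\tilde{n}-|I|)) = \Omega(|I|^2 d^2/\tilde{n})$, so $|N(x)\cap N(y)| \geq 2T/|I|^2 = \Omega(d^2/n)$ for some pair. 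In the complementary regime $|I| < 8\tilde{n}/(d+8)$ is small enough that if $\max_x |U(x)| < d/8$ then the displayed inequality forces $T \geq |I|\,d/16$, whence $|N(x)\cap N(y)| \geq d/(8|I|) = \Omega(d^2/n)$; otherwise $\max_x |U(x)| \geq d/8 = \Omega(d) \geq \Omega(d^2/n)$ already gives the desired clique.

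The main obstacle is achieving the claimed $O(n^2)$ runtime: the 2-swap local search must be implemented with amortized $O(n)$ work per swap by incrementally maintaining the $a_v$ counters and the $U(x)$ lists, and step three must avoid explicit enumeration over all $\binom{|I|}{2}$ common neighborhoods---one natural approach is to iterate over vertices $v \in \tilde{V}$ with $a_v \geq 2$, probe pairs $(x, y) \subseteq N_{\tilde{G}}(v) \cap I$ against the adjacency matrix, and abort at the first non-edge discovered, charging the total work against the budget $\sum_v \binom{a_v}{2}$ that is controlled by the same averaging used above.
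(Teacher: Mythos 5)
Your structural argument is sound and is essentially the same counting as the paper's: pass to a min-degree subgraph, maintain a maximal independent set $I$ that is stable under the $2$-swap $(I\setminus\{x\})\cup\{u,v\}$, use \Cref{obs:neighborhood-clique} to turn any non-clique $U(x)$ or common neighborhood into an induced $C_4$, and average to show some $|U(x)|$ or some $|N(x)\cap N(y)|$ has size $\Omega(d^2/n)$. The gap is entirely in the claimed $O(n^2)$ running time, which you flag as "the main obstacle" but do not actually resolve. Two steps break the budget. First, you run the local search to completion so that "every $U(x)$ is a clique by construction", allowing up to $n$ swaps; each swap must be preceded by detecting a non-edge inside some (possibly linear-size) $U(x)$, and after each swap the sets $U(y)$ and the maximality of $I$ change non-monotonically, so no amortization over $n$ swaps is available for these cliqueness re-checks. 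Second, your certification phase scans $N(x)\cap N(y)$ for a non-edge for \emph{every} pair $x,y\in I$. Charging this against $\sum_v \binom{a_v}{2}$ fails twice: verifying that a common neighborhood of size $s$ is a clique costs $\Theta(s^2)$ adjacency probes (and the all-cliques case is exactly the one you must certify), so the work is of order $\sum_{\{x,y\}}\binom{|N(x)\cap N(y)|}{2}$, not $\sum_{\{x,y\}}|N(x)\cap N(y)|$; and even the latter budget is not $O(n^2)$ in general --- with $|I|=\Theta(n)$ and $d=\Theta(\sqrt n)$ one can have $\sum_v\binom{a_v}{2}=\Theta(nd|I|)=\Theta(n^{5/2})$, and your regime with $|I|\ge 8\tilde n/(d+8)$ explicitly permits such large $|I|$.

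The paper's algorithm (\cref{alg:clique-extraction}) is designed precisely to dodge both issues, and this is the idea your proposal is missing. It never certifies that all $U(x)$ or all common neighborhoods are cliques: since a clique of size $\Delta=\Theta(d^2/n)$ suffices, in each iteration it tests only a single $\Delta$-sized subset $S\subseteq U(x)$ of one large $U(x)$ (cost $O(\Delta^2)$, returning $S$ if it is a clique, swapping otherwise), or it identifies a \emph{single} pair $x,y$ with $|N(x)\cap N(y)|\ge\Delta$ and checks only that one set, terminating immediately either way. Moreover, the swap loop is guarded by $|I|<4|\tilde V|/d$, so there are only $O(n/d)$ swap iterations (not $n$), and once $|I|\ge 4|\tilde V|/d$ the algorithm restricts to a subset $S\subseteq I$ of size exactly $4|\tilde V|/d$ and finds a heavy pair by Bonferroni among only $O(|S|^2)=O(n)$ pairs, each codegree computed in $O(n)$ time. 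To repair your proposal you would need to (i) cap the number of swaps in this way rather than stabilizing all $U(x)$, and (ii) replace the all-pairs scan by the identification and verification of one heavy pair; as written, the algorithm does not run in $O(n^2)$ time.
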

\begin{proof}
If $d \le 4\sqrt{n}$, we can simply return a single node as a clique of the desired size.

Otherwise, $d > 4\sqrt{n}$.
In this case, we run the algorithm outlined in \Cref{alg:clique-extraction}.
We first prove that this procedure has the desired behavior, and then afterwards bound its runtime. 

\begin{algorithm}[t]
  \caption{Clique Extraction in Dense Graphs} \label{alg:clique-extraction}
  \begin{algorithmic}[1]
    \Input{A graph $G = (V, E)$ on $n$ vertices.}
    \Output{Either an induced 4-cycle in $G$, or 
    a large clique in $G$.}

    \smallskip
    \setlength\itemsep{\medskipamount}

    \State Let~\smash{$d = \frac{1}{n} \cdot\sum_{v \in V} \deg(v)$} and \smash{$\Delta = d^2/(16 n)$}. \label{alg:clique-extraction:line:params}
    \State Repeatedly remove vertices from $G$ with \smash{$\deg(v) \leq d/2$}. Let $\tilde G$ denote the remaining graph with vertex set $\tilde V$. \label{alg:clique-extraction:line:min-degree}
    \State Greedily construct a maximal independent set $I$ in $\tilde G$. \label{alg:clique-extraction:line:mis}
    \While{\smash{$|I| < 4 |\tilde V|/d$}} \label{alg:clique-extraction:line:loop}
      \State For all $x\in I$, compute $U(x) = \{v\in \tilde{V} \setminus I : N_I(v) = \set{x}\}$. \label{alg:clique-extraction:line:unique}
      \If{there exists $x \in I$ with \smash{$|U(x)| \geq (d/8) - 1$}} \label{alg:clique-extraction:line:test-unique}
        \State\parbox[t]{\linewidth-\algorithmicindent-\algorithmicindent}{Select $S\sub U(x)$ with $|S| = \Delta$. If $S$ is a clique, return it. Otherwise, find a pair $(u,v)$ of non-adjacent nodes in $S$ and update the independent set $I \gets (I \setminus\set{x}) \cup \set{u,v}$. Then extend $I$ greedily to a maximal independent set in $\tilde G$.} \label{alg:clique-extraction:line:check-unique}
      \Else
        \State\parbox[t]{\linewidth-\algorithmicindent-\algorithmicindent}{Find distinct nodes $x, y\in I$ with $\codeg_{\tilde V}(x,y) \geq \Delta$. Compute $Z = N(x)\cap N(y)$. If~$Z$ is a clique, return it. Otherwise, report that $G$ has an induced 4-cycle.} \label{alg:clique-extraction:line:check-pairwise}
      \EndIf
    \EndWhile 
    \State Select $S \sub I$ with~\smash{$|S| = (4|\tilde V|)/d$}, and find distinct nodes $x,y\in S$ with $\codeg_{\tilde V}(x,y) \geq \Delta$. Compute $Z = N(x)\cap N(y)$. If $Z$ is a clique, return it. Otherwise, report that $G$ has an induced 4-cycle. \label{alg:clique-extraction:line:check-final}
  \end{algorithmic}
\end{algorithm}

\begin{claim}[Large Minimum Degree]
    \label{claim:algcliqueextsraction:mindegree}
The graph $\tilde G$ constructed in \cref*{alg:clique-extraction:line:min-degree}
of \Cref{alg:clique-extraction}
is non-empty and satisfies $\deg(v) > d/2$ for all $v \in \tilde V$.
\end{claim}
\begin{claimproof}
As we repeatedly remove nodes with degree $\deg(v) \leq d/2$, all remaining nodes $v$ must have the property that $\deg(v) > d/2$. 
Moreover, with each vertex removal we delete at most $d/2$ edges from the graph and therefore reduce the sum of degrees by at most $d$. 
As we also remove one node per step, the average degree remains at least $d$ at each step. Hence, the remaining graph~\smash{$\tilde G$} is nonempty.
\end{claimproof}

\begin{claim}[Independent Set]
    \label{claim:algcliqueextraction:indset}
The set $I$ is a maximal independent set in $\tilde G$
throughout \Cref{alg:clique-extraction}.
\end{claim}
\begin{claimproof}
By definition, $I$ is a maximal independent set of $\tilde{G}$ when it is first initialized in \cref*{alg:clique-extraction:line:mis} of \Cref{alg:clique-extraction}. 
Afterwards, the set $I$ can only change in \cref*{alg:clique-extraction:line:check-unique}. In this step we only update $I$ if we have identified distinct non-adjacent nodes $u,v \in U(x)$. However,~$U(x)$ consists only of vertices outside $I$ that are adjacent to $x$ and not to any other node in $I$. This implies that $u$ and $v$ are distinct from $x$, and that since $I$ is an independent set, so is $(I\setminus\set{x})\cup\set{u,v}$. Finally, we greedily extend $I$ 
in \cref*{alg:clique-extraction:line:mis}
to ensure the maximality condition still holds for $I$.
\end{claimproof}

\begin{claim}[Correctness of \cref*{alg:clique-extraction:line:check-pairwise}]
\label{claim:algcliqueextsraction:line9}
If 
\Cref{alg:clique-extraction} ever
reaches \cref*{alg:clique-extraction:line:check-pairwise}, then there  indeed exist distinct nodes $x, y \in I$ with $\codeg_{\tilde V}(x, y) \geq \Delta$. Moreover, if the algorithm reports an induced 4-cycle in \cref*{alg:clique-extraction:line:check-pairwise}, then $G$ has an induced 4-cycle.
\end{claim}
\begin{claimproof}
We only reach \cref*{alg:clique-extraction:line:check-pairwise} if \smash{$|I| < 4 |\tilde V|/d$} and \smash{$|U(x)| < (d/8) - 1$} for all $x \in I$. 
By \Cref{claim:algcliqueextraction:indset}, $I$ is a maximal independent set in $\tilde{G}$ and thus all nodes in $\tilde V \setminus I$ have a neighbor in $I$. Recall that $U(x)$ is the set of vertices in $\tilde V \setminus I$ whose \emph{unique} neighbor in $I$ is $x$. Therefore, at least
\begin{equation*}
  |\tilde V \setminus I| - |I| \cdot \left(\frac{d}{8} - 1\right) > |\tilde V| - |I| \cdot \frac{d}{8} > \frac{|\tilde V|}{2}
\end{equation*}
of the vertices in $\tilde V$ each have at least \emph{two} neighbors in $I$. 

By averaging, this implies there exists distinct nodes $x, y \in I$ such that
\begin{equation*}
  \codeg_{\tilde V}(x, y) = |N_{\tilde V}(x) \cap N_{\tilde V}(y)| > \frac{|\tilde V|}{2 \binom{|I|}{2}} \ge \frac{|\tilde V|}{|I|^2} \geq \frac{d^2}{16 |\tilde V|} \geq \frac{d^2}{16 n} = \Delta.
\end{equation*}
Finally, we report an induced 4-cycle in \cref*{alg:clique-extraction:line:check-pairwise}
only if $Z = N_{\tilde R}(x) \cap N_{\tilde R}(y)$ is not a clique. In this case,
\Cref{obs:neighborhood-clique} implies that $G$ contains an induced 4-cycle as desired.
\end{claimproof}

\begin{claim}[Correctness of \cref*{alg:clique-extraction:line:check-final}]
    \label{claim:algcliqueextsraction:line10}
If \Cref{alg:clique-extraction} ever reaches \cref*{alg:clique-extraction:line:check-final}, then there indeed exist distinct nodes $x, y \in S$ with $\codeg_{\tilde V}(x, y) \geq \Delta$. Moreover, if the algorithm reports an induced 4-cycle in \cref*{alg:clique-extraction:line:check-final}, then $G$ indeed contains an induced 4-cycle.
\end{claim}
\begin{claimproof}
The algorithm reaches \cref*{alg:clique-extraction:line:check-final} only if \smash{$|I| \geq (4 |\tilde V|)/d$}. 
As in \Cref{alg:clique-extraction}, let $S \subseteq I$ be an arbitrary set of size \smash{$|S| = (4 |\tilde V|)/d$}.
By Bonferroni's inequality (\cref{prop:bonferroni}), we have
\begin{align*}
  |\tilde V|
  &\geq \abs{\bigcup_{x \in S} N_{\tilde V}(x)} \\
  &\geq \sum_{x \in S} |N_{\tilde V}(x)| - \sum_{\substack{x, y \in S\\x \neq y}} |N_{\tilde V}(x) \cap N_{\tilde V}(y)| \\
  &= \sum_{x \in S} \deg_{\tilde V}(x) - \sum_{\substack{x, y \in S\\x \neq y}} \codeg_{\tilde V}(x, y).
\end{align*}
Rearranging this inequality, and recalling that $\deg_{\tilde V}(x) \geq d/2$ for all $x\in\tilde{V}$, we get that 
\begin{equation*}
  \sum_{\substack{x, y \in S\\x \neq y}} \codeg_{\tilde V}(x, y) \geq \sum_{x \in S} \deg_{\tilde V}(x) - |\tilde V| \geq \frac{4 |\tilde V|}{d} \cdot \frac{d}{2} - |\tilde V| = |\tilde V|.
\end{equation*}
Thus by averaging, there exist distinct nodes $x, y \in S$ with
\begin{equation*}
  \codeg_{\tilde V}(x, y) \geq \frac{|\tilde V|}{\binom{|S|}{2}} \geq \frac{2|\tilde{V}|}{|S|^2}= \frac{d^2}{8 |\tilde V|} \geq \frac{d^2}{8 |V|} > \Delta.
\end{equation*}
Finally,  \cref*{alg:clique-extraction:line:check-final} only reports an induced 4-cycle if the set $Z = N_{\tilde V}(x) \cap N_{\tilde V}(y)$ is not a clique. 
If $Z$ is not a clique, then $G$ contains an induced 4-cycle by \Cref{obs:neighborhood-clique}. 
\end{claimproof}

By 
\Cref{claim:algcliqueextsraction:line9,claim:algcliqueextsraction:line10} and the check in \cref*{alg:clique-extraction:line:check-unique} of \Cref{alg:clique-extraction}, 
whenever
 \Cref{alg:clique-extraction} reports a clique (possibly in \cref*{alg:clique-extraction:line:check-unique,alg:clique-extraction:line:check-pairwise,alg:clique-extraction:line:check-final}) we have explicitly verified that it is indeed a clique on at least~$\Delta$ vertices. 
Similarly,
by \Cref{claim:algcliqueextsraction:line9,claim:algcliqueextsraction:line10}, whenever the algorithm reports an induced 4-cycle (possibly in \cref*{alg:clique-extraction:line:check-pairwise,alg:clique-extraction:line:check-final}) the input graph does indeed have an induced 4-cycle.
This proves that the algorithm is correct as claimed. 

It remains to prove that \Cref{alg:clique-extraction} can be implemented to run in $O(n^2)$ time. 

\cref*{alg:clique-extraction:line:params} only sets parameters,
so takes $O(1)$ time. 

\cref*{alg:clique-extraction:line:min-degree}
can be implemented  by sorting the vertices of the initial graph by degree,
and then repeatedly deleting the vertex of minimum degree while this value is at most $d/2$, and updating degrees of vertices after each deletion. 
Since each deletion and degree update takes time proportional to the number of vertices and edges deleted,
and we never delete the same vertex or edge twice,
this takes at most $O(n^2)$ time overall.

\cref*{alg:clique-extraction:line:mis} can be implemented in $O(n^2)$ time by scanning just once through the vertices and edges of the graph and greedily including vertices to build up the independent set $I$. 

The update rule for $I$ in \cref*{alg:clique-extraction:line:check-unique} ensures that each iteration of the while loop that does not return a clique or report an induced 4-cycle increases the size of $I$ by at least one. The condition in \cref*{alg:clique-extraction:line:loop}
that \smash{$|I| < (4|\tilde V|)/d$} thus implies that there are at most $O(|\tilde V| / d) \le O(n / d)$ loop iterations.

In each iteration of the loop, we first compute $U(x)$
for all $x \in I$ in \cref*{alg:clique-extraction:line:unique}.
We compute these sets by scanning over all vertices $x\in I$, and recording for each node $v\in \tilde{V} \setminus I$ adjacent to $x$ the name of the vertex $x$ in a list $L(v)$ associated with $v$. We then scan over the nodes $v \in \tilde{V} \setminus I$, and for each $v$ where the list $L(v)$ consists of a single vertex $x \in I$, we include $v$ in $U(x)$.
We can also record the sizes of the $U(x)$ sets at this time,
and check if there exists $x\in I$ satisfying the inequality from \cref*{alg:clique-extraction:line:test-unique}.
All of these steps together take at most $O(dn)$ time per iteration because 
we spend time proportional to the number of edges $O(dn)$. 
Since there are at most $O(n/d)$ loop iterations,
this takes at most $O(n^2)$ time overall. 

In \cref*{alg:clique-extraction:line:check-unique},
we may check if a set $S$ of size $\Delta$ is a clique. 
This takes at most $O(\Delta^2) \le O(d^4 / n^2)$ time per iteration and thus at most  
\[O((n/d)\cdot (d^4/n^2)) \le O(d^3 / n) \le O(d^2) \le O(n^2)\] time overall.
Here we used the assumption from the beginning of this proof that $d \ge \Omega(\sqrt{n})$.

If $S$ is not a clique, we then extend $I$ to a maximal independent set.
We do this in $O(|I| \, n)$ time by testing all pairs of nodes inside and outside the independent set.
In every iteration of this step, before $I$ is extended to become maximal, we have $|I|\le (4|\tilde{V}|/d)$ because of the
the condition from 
\cref*{alg:clique-extraction:line:loop} and 
the fact that the update $I\leftarrow (I\setminus\set{x})\cup\set{u,v}$
from \cref*{alg:clique-extraction:line:check-unique} increases the size of $I$ by exactly one. 
While we have this size bound \smash{$|I| \leq (4 |\tilde V|)/d$}, the $O(|I| \, n)$ runtime
is at most $O(n^2 / d)$, so this step takes at most  
\[O(n^3 / d^2) \le O(n^2)\] 
time across all iterations. 
In the final iteration, it may happen that~\smash{$|I| > 4 |\tilde R|/d$} in which case we bound the running time of \cref*{alg:clique-extraction:line:check-unique} by $O(n^2)$.

We execute \cref*{alg:clique-extraction:line:check-pairwise} only once (as the algorithm terminates right after). This step involves testing if a set $Z$ of size at most $n$ is a clique, and thus takes at most $O(n^2)$ time.

Finally, in \cref*{alg:clique-extraction:line:check-final} we compute $\codeg_{\tilde V}(x, y)$ for $O(|S|^2) = O(n^2 / d^2) = O(n)$ pairs~$(x, y)$. Each computation takes $O(n)$ time, so in total we spend at most $O(n^2)$ time. Afterwards we verify if the set $Z$ is a clique, which also takes time $O(n^2)$. This completes the running time analysis and thus the proof of \cref{lem:clique-extraction}.
\end{proof}

We now repeatedly apply \Cref{lem:clique-extraction} to decompose the input graph into a collection of large clusters and a single sparse remainder. 

\lemlargeclusterdecomp*
\begin{proof}
We initialize $\mathcal X \gets \emptyset$ and $R \gets V$. While $G[R]$ has at least $n^{3/2} \Delta^{1/2}$ edges, we apply the algorithm from \cref{lem:clique-extraction} on the graph $G[R]$. 
If the algorithm identifies an induced 4-cycle, 
we immediately stop and report this. Otherwise, \cref{lem:clique-extraction} returns a clique $X \subseteq R$ of size at least $\Omega(d^2 / |R|)$, where $d$ is the average degree in $G[R]$. 
Since
\begin{equation*}
  d \geq \frac{n^{3/2} \Delta^{1/2}}{|R|},
\end{equation*}
it follows that
\begin{equation*}
  |X| \geq \Omega\left(\frac{n^3 \Delta}{|R|^3}\right) = \Omega(\Delta).
\end{equation*}
We take any subset $\tilde{X} \subseteq X$ of size $\Theta(\Delta)$,
then update $\mathcal X \gets \mathcal X \cup \set{X}$ and $R \gets R \setminus \tilde{X}$, and repeat.

It is immediate that when the algorithm terminates, it either correctly reports an induced 4-cycle (by \cref{alg:clique-extraction}), or returns a collection $\mathcal X$ of 
disjoint
cliques of size $\Theta(\Delta)$ and a remainder satisfying that $G[R]$ has less than $n^{3/2} \Delta^{1/2}$ edges.

To bound the running time, note that the total number of iterations is at most $O(n / \Delta)$
because the cliques we extract are of size $\Theta(\Delta)$ and disjoint. 
Each application of \cref{alg:clique-extraction} takes time $O(n^2)$ and the remaining updates take time $O(n)$, so the total runtime is $O(n^3 / \Delta)$ as claimed.
\end{proof}

\subsection{Decomposition into Levels of Clusters} \label{subsec:cluster-decomp-levels}

\begin{lemma}[Low-Level Cluster Decomposition] \label{lem:cluster-decomp-low}
Let $G = (V, E)$ be the input graph. Let $\Delta \geq 1$. In  $\tilde{O}(n^3 / \Delta + n^{5/2} \Delta^{1/2})$ time we can either detect an induced 4-cycle in $G$, or compute a decomposition
\begin{equation*}
  V = \left(\bigsqcup_{X \in \mathcal X} X\right) \sqcup R,
\end{equation*}
where each $X \in \mathcal X$ is a clique of size $\Theta(\Delta)$, 
such that
\begin{equation*}
  |N_R(x) \cap N_R(y)| \leq O(\Delta)
\end{equation*}
for all non-edge pairs of distinct vertices $(x,y)$. Moreover, the algorithm returns $N_R(x) \cap N_R(y)$ for
all such non-edges $(x,y)$.
\end{lemma}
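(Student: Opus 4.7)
The plan is to strengthen \Cref{lem:cluster-decomp-large} by further extracting cliques from the remainder until every non-edge has a bounded common neighborhood in $R$. The key structural fact driving this is \Cref{obs:neighborhood-clique}: if the graph has no induced $C_4$, then for every non-edge $(x, y)$ the set $N_R(x) \cap N_R(y)$ must be a clique in $G[R]$. So any offending pair with $|N_R(x) \cap N_R(y)| > c\Delta$ either yields a large clique to extract from $R$, or witnesses an induced $C_4$ that we can immediately report.

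First, I would apply \Cref{lem:cluster-decomp-large} with parameter $\Delta$ to produce an initial decomposition $\mathcal{X}_0, R_0$ with $|E(G[R_0])| \leq O(n^{3/2}\Delta^{1/2})$ in $O(n^3/\Delta)$ time, returning an induced $C_4$ immediately if detected. Then, while some non-edge $(x, y)$ satisfies $|N_R(x) \cap N_R(y)| > c\Delta$ for a suitable constant $c$, I extract a $\Theta(\Delta)$-sized subset of the (verified) clique $N_R(x) \cap N_R(y)$ into $\mathcal{X}$. Since each such extraction reduces $|R|$ by $\Theta(\Delta)$, this loop terminates in at most $O(n/\Delta)$ rounds. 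To find a suitable non-edge in each round, I would reinvoke the Gy\'{a}rf\'{a}s--Hubenko--Solymosi framework (\Cref{thm:4cycle-free-large-clique}, and algorithmically \Cref{lem:clique-extraction}) on $G[R]$: whenever $G[R]$ admits a clique of size $\geq c\Delta$, the underlying maximal-independent-set argument locates a non-edge pair with many common neighbors.

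The last step is to compute and output $N_R(x) \cap N_R(y)$ for every non-edge $(x, y) \in V \times V$. The natural approach is to iterate over each candidate middle vertex $z \in R$ and enumerate pairs of its neighbors, accumulating contributions to the appropriate common-neighborhood sets. After the extraction phase every such common neighborhood has size at most $O(\Delta)$, so the total output size is at most $O(n^2 \Delta) \leq O(n^{5/2}\Delta^{1/2})$, which fits the prescribed budget.

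The main obstacle will be attaining the $\tilde{O}(n^{5/2}\Delta^{1/2})$ bound for this last step, since the direct enumeration costs $\sum_{z \in R}\deg_G(z)^2$, which may be $\Omega(n^3)$ in the worst case because vertices of $\mathcal{X}$ can have large degree into $R$. For pairs with both endpoints in $R$, the sparsity of $G[R]$ already suffices: $\sum_{z \in R} \deg_R(z)^2 \leq n \cdot 2 m_R = O(n^{5/2}\Delta^{1/2})$. For pairs involving at least one vertex from the cliques in $\mathcal{X}$, I expect the argument to require a finer case split, perhaps a degree-bucketing of vertices in $R$ by their degree into $\mathcal{X}$, combined with an output-sensitive enumeration that exploits the post-extraction guarantee that each $N_R(x) \cap N_R(y)$ is a clique of size $O(\Delta)$.
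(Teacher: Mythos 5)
Your first two steps match the paper (apply \Cref{lem:cluster-decomp-large}, then repeatedly extract any large common neighborhood, which by \Cref{obs:neighborhood-clique} is either a clique or yields an induced $C_4$), but the mechanism you propose for \emph{locating} offending non-edges does not work. Re-invoking the Gy\'{a}rf\'{a}s--Hubenko--Solymosi argument (\Cref{lem:clique-extraction}) on $G[R]$ only produces a clique of size $\Omega(d^2/n)$ where $d$ is the average degree of $G[R]$; after the first step $G[R]$ is already sparse, so this gives nothing, and yet non-edges $(x,y)$ with $|N_R(x)\cap N_R(y)|\gg\Delta$ can still exist --- most importantly non-edges with one or both endpoints \emph{inside the extracted cliques} $\mathcal X$, which the GHS argument on $G[R]$ never even inspects (and even for $x,y\in R$, a sparse $G[R]$ can contain a single pair with a huge common neighborhood that GHS is not guaranteed to find). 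The lemma demands the $O(\Delta)$ bound for \emph{all} non-edge pairs in $V$, so your loop's termination condition does not certify the conclusion; in the paper, detection of offending pairs is not a separate search but falls out of explicitly computing the sets $N_R(x,y)$ and monitoring their sizes (with pointer/priority-queue bookkeeping as vertices leave $R$).

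The second gap is the one you flag yourself, and it is the crux of the lemma rather than a finishing detail: computing $N_R(x,y)$ for non-edges with both endpoints in cliques of $\mathcal X$ within the budget. Knowing a posteriori that each set is a clique of size $O(\Delta)$ does not give you an enumeration that only touches the output: the naive scan over 2-paths $x\text{--}z\text{--}y$ with $z\in R$ and $x,y$ in cliques costs $\sum_{z\in R}\deg_{V\setminus R}(z)^2$, which can be $\Omega(n^3)$ because most such 2-paths close into triangles and contribute nothing. The paper's solution is a win/win you are missing: for each pair $X,Y\in\mathcal X$ first run \Cref{lem:cluster-pair}, which either reports an induced $C_4$ or returns orderings $f_{XY},g_{XY}$; then, after sorting each $N_Y(z)$ ($z\in R$) by $g_{XY}$, one can enumerate for every edge $(x,z)\in X\times R$ exactly the $y\in N_Y(z)$ with $f_{XY}(x)>g_{XY}(y)$, i.e.\ only the non-adjacent partners, making the enumeration genuinely output-sensitive. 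The time is then controlled by an amortized argument: iterations whose total output stays below $|X||Y|\Delta$ cost $O(n^2\Delta)$ overall, while any other iteration exhibits a set of size $\geq\Delta$ and hence either halts with a $C_4$ or removes $\geq\Delta$ vertices from $R$, so there are at most $O(n/\Delta)$ of them. Without the ordering structure (or an equivalent idea), your proposed ``degree-bucketing plus output-sensitive enumeration'' has no way to skip the adjacent pairs, and the proof does not go through.
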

\begin{proof}
We write $N_R(x, y) = N_R(x) \cap N_R(y)$, where we naturally adopt the convention that the order of $x$ and $y$ does not matter (i.e., $N_R(x, y) = N_R(y, x)$). 
The algorithm is described in \cref{alg:cluster-decomp-low}. In the following we analyze its correctness and running time.

\begin{algorithm}[t]
  \caption{Low-Level Cluster Decomposition} \label{alg:cluster-decomp-low}
  \begin{algorithmic}[1]
    \Input{A graph $G = (V, E)$ on $n$ vertices, and a subset of vertices $R \subseteq V$.}
    \Output{Either an induced 4-cycle in $G$, or a collection $\mathcal X$ of cliques along with sets $N_R(x, y)$.}

    \smallskip
    \setlength\itemsep{\medskipamount}

    \State Run \cref{lem:cluster-decomp-large} on $G$. If we detect an induced 4-cycle in $G$, we stop and report the induced 4-cycle. Otherwise, let $V = (\bigsqcup_{X \in \mathcal X} X) \sqcup R$ denote the resulting partition. \label{alg:cluster-decomp-low:line:decomp-large}
    \State Initialize $\mathcal Z \gets \emptyset$ \label{alg:cluster-decomp-low:line:init-Z}
    \State Compute $N_R(x, y) = N_R(x) \cap N_R(y)$ for all distinct $x \in R,\, y \in V$ as follows: Enumerate all adjacent $x, z \in R$ and enumerate all $y \in N(z)$. For each such triple insert $z$ into $N_R(x, y)$. \label{alg:cluster-decomp-low:line:enumerate-incident}
    \State\parbox[t]{\linewidth}{While there is a non-edge $(x,y)\in R\times V$ 
    such that $|N_R(x, y)| \geq \Delta$, test if $Z \gets N_R(x, y)$ is a clique. If it is, update $R \gets R \setminus Z$ and $\mathcal Z \gets \mathcal Z \cup \set{Z}$. If not, report an induced 4-cycle.}\smallskip\label{alg:cluster-decomp-low:line:eliminate-incident}
    \ForEach{$X, Y \in \mathcal X$ with $X \neq Y$} \label{alg:cluster-decomp-low:line:loop}
      \State\parbox[t]{\linewidth-\algorithmicindent}{Apply \cref{lem:cluster-pair} on the pair of cliques $(X, Y)$. If it reports an induced 4-cycle, we stop and report the induced 4-cycle. Otherwise, we get orderings $f_{XY}, g_{XY}$ for $(X,Y)$.}\smallskip\label{alg:cluster-decomp-low:line:ordering}
      \State For each $z \in R$, sort the set $N_Y(z)$ by $g_{XY}$.\label{alg:cluster-decomp-low:line:sort}
      \State\parbox[t]{\linewidth-\algorithmicindent}{Compute $N_R(x, y)$ for all non-adjacent $x \in X,\, y \in Y$ as follows: Enumerate all adjacent $x \in X$ and $z \in R$, and then enumerate all $y \in N_Y(z)$ with $f_{XY}(x) > g_{XY}(y)$. For each such triple insert $z$ into $N_R(x, y)$.}\smallskip\label{alg:cluster-decomp-low:line:enumerate-nonincident}
      \State\parbox[t]{\linewidth-\algorithmicindent}{While there are non-adjacent distinct nodes $x \in X,\, y \in Y$ satisfying that $|N_R(x, y)| \geq \Delta$, test if $Z \gets N_R(x, y)$ is a clique. If yes, update $R \gets R \setminus Z$ and $\mathcal Z \gets \mathcal Z \cup \set{Z}$. If no, report an induced 4-cycle.}\smallskip\label{alg:cluster-decomp-low:line:eliminate-nonincident}
    \EndForEach
    \State Return the cliques in $\mathcal X \cup \mathcal Z$, and the sets $N_R(x, y)$ for all non-edges $(x, y)$.\label{alg:cluster-decomp-low:line:return}
  \end{algorithmic}
\end{algorithm}

\paragraph{Correctness.}
The first observation is that whenever the algorithm reports an induced 4-cycle, then~$G$ indeed contains an induced 4-cycle. In \cref*{alg:cluster-decomp-low:line:decomp-large} this is due to \cref{lem:cluster-decomp-large}, in \cref*{alg:cluster-decomp-low:line:ordering} this is due to \cref*{lem:cluster-pair}, and in \cref*{alg:cluster-decomp-low:line:eliminate-incident,alg:cluster-decomp-low:line:eliminate-nonincident} this is due to \Cref{obs:neighborhood-clique}. For the remaining analysis we assume that the algorithm does not report an induced 4-cycle.

It is straightforward to verify that all sets in $\mathcal X$ and $\mathcal Z$ are cliques of the desired size $\Omega(\Delta)$. For~$\mathcal X$ this is by \cref{lem:cluster-decomp-large}, and for~$\mathcal Z$ 
% it follows by checking that in 
this is because in
all cases where we insert some set $Z$ into $\mathcal Z$ (namely, \cref*{alg:cluster-decomp-low:line:eliminate-incident,alg:cluster-decomp-low:line:eliminate-nonincident}) we have explicitly tested that $Z$ is a clique with ~\makebox{$|Z| \geq \Delta$}. 
Strictly speaking,
\Cref{alg:cluster-decomp-low} as presented
does not ensure that the cliques have size $\Theta(\Delta)$, but this can be ensured in a postprocessing step where we subdivide cliques that exceed size $2 \Delta$.

Next, we argue that the algorithm correctly returns the sets $N_R(x, y) = N_R(x) \cap N_R(y)$ for all non-edge pairs $(x, y)$, and that each such set has size at most $\Delta$ when the algorithm terminates. We distinguish two cases:
\begin{itemize}
  \item\emph{$x \in R$ or $y \in R$:} Without loss of generality assume that $x \in R$. Then we compute $N_R(x, y)$ correctly in \cref*{alg:cluster-decomp-low:line:enumerate-incident}. Moreover, in \cref*{alg:cluster-decomp-low:line:eliminate-incident} we distinguish two cases for $N_R(x, y)$: If $N_R(x, y) < \Delta$, then the claim is immediate. Otherwise, if $N_R(x, y) \geq \Delta$, then we remove $N_R(x, y)$ from $R$. In particular, after this update we have that $N_R(x, y) = \emptyset$ and the claim holds.
  \item\emph{$x, y \not\in R$:} In this case $x$ and $y$ must appear in some cliques $x \in X,\, y \in Y$ with $X, Y \in \mathcal X$. Moreover, these two cliques must be distinct as otherwise $x, y$ would be adjacent. Focus on the loop iteration that considers pair $(X, Y)$. In \cref*{alg:cluster-decomp-low:line:enumerate-nonincident} we enumerate all $(\tilde x, \tilde z, \tilde y) \in X \times R \times Y$ such that $\grp{\tilde x, \tilde z}$ and $\grp{\tilde z, \tilde y}$ are edges,
   such that $f_{XY}(x) > g_{XY}(y)$. By \cref{def:ordered-clusters}, this last condition is equivalent to $(x, y)$
   being a non-edge. 
  Therefore, in \cref*{alg:cluster-decomp-low:line:enumerate-nonincident} we enumerate all induced 2-paths $(\tilde x, \tilde z, \tilde y)$ and for each insert $\tilde z$ into $N_R(\tilde x, \tilde y)$.
  It follows that the set $N_R(x, y)$ is constructed correctly. Then in \cref*{alg:cluster-decomp-low:line:eliminate-nonincident} we again test if $|N_R(x, y)| < \Delta$ or $|N_R(x, y)| \geq \Delta$. In the former case the claim is immediate. In the latter case, the algorithm removes all nodes in $N_R(x, y)$ from $R$ and so afterwards $N_R(x, y) = \emptyset$,
  so the claim holds in this case too.
\end{itemize}

\paragraph{Implementation Detail: Maintaining \boldmath$N_R(x, y)$.}
During its execution \cref{alg:cluster-decomp-low} keeps deleting nodes from $R$, and this affects the previously computed sets $N_R(x, y) = N_R(x) \cap N_R(y)$. To efficiently deal with these deletions we additionally maintain pointers from each node $z \in R$ to all sets $N_R(x, y)$ containing $z$. That is, whenever we include $z$ into some set $N_R(x, y)$ we additionally spend $O(1)$ time to prepare this pointer. Then, when $z$ is removed from $R$ we traverse all the sets that $z$ is pointing to and remove $z$ from these sets. Additionally, we  maintain the sizes of the sets $N_R(x, y)$ in a priority queue so that we can efficiently decide in time $O(\log n)$ if there is a set of size at least $\Delta$ (in \cref*{alg:cluster-decomp-low:line:eliminate-incident,alg:cluster-decomp-low:line:eliminate-nonincident}).

Note that the total time spent on maintaining these additional data structures is proportional (up to logarithmic factors) to the total size of the sets $N_R(x, y)$ and thus proportional to the time to construct the sets $N_R(x, y)$. For this reason we will neglect the time to update the sets $N_R(x, y)$ in the following runtime analysis.

\paragraph{Running Time.}
We finally analyze the running time of \cref*{alg:cluster-decomp-low}. The initial call to \cref*{lem:cluster-decomp-large} in \cref*{alg:cluster-decomp-low:line:decomp-large} takes time $\tilde O(n^3 / \Delta)$. Initializing $\mathcal Z$ in \cref*{alg:cluster-decomp-low:line:init-Z} is in constant time.

In \cref*{alg:cluster-decomp-low:line:enumerate-incident} we
% essentially 
enumerate all 2-paths with at least one edge in $R$.
By \Cref{lem:cluster-decomp-large} the graph
~$G[R]$ contains at most $O(n^{3/2} \Delta^{1/2})$ edges, so this step takes at most $O(n^{5/2} \Delta^{1/2})$ time. 
Then, in \cref*{alg:cluster-decomp-low:line:eliminate-incident} we repeatedly take one of the previously computed sets $Z = N_R(x, y)$ and test if it is a clique. This is implemented naively in time $O(|Z|^2)$. However, as afterwards we remove $Z$ from $R$, each pair of nodes is involved in at most clique test and so this step takes at most $O(n^2)$ time overall.

Now focus on the loop in \cref*{alg:cluster-decomp-low:line:loop}. Recall that each clique $X \in \mathcal X$ has size at least $\Omega(\Delta)$ and thus $|\mathcal X| \leq O(n / \Delta)$. Hence, there are at most $O((n / \Delta)^2)$ iterations of the loop. Focus on a fixed iteration of this loop, examining the cliques $X, Y$.

In \cref*{alg:cluster-decomp-low:line:ordering} we first run \cref{lem:cluster-pair} to compute the ordering of $(X, Y)$ in time $O(\Delta^2)$ per iteration and thus time $O(n^2)$ in total. Then, in \cref*{alg:cluster-decomp-low:line:sort}, we sort all sets $N_Y(z)$ for $z \in R$ in time $\tilde O(|R| \cdot |Y|) = \tilde O(n \Delta)$ per iteration and time $\tilde O(n^3 / \Delta)$ in total. \cref*{alg:cluster-decomp-low:line:enumerate-nonincident} is more interesting: We compute the sets $N_R(x, y)$ in time proportional to the total size of the computed sets,
\begin{equation*}
  \Sigma := \sum_{\substack{x \in X, y \in Y\\\grp{x, y} \not\in E}} |N_R(x, y)|.
\end{equation*}
We distinguish two cases: If $|\Sigma| \leq |X|  |Y| \cdot \Delta$, then we call this iteration \emph{good}, otherwise we call it \emph{bad}. On the one hand, the total running time of \cref*{alg:cluster-decomp-low:line:enumerate-nonincident} across all good iterations is bounded by $O((n / \Delta)^2 \cdot \Delta^3) = O(n^2 \Delta)$.
On the other hand, by averaging,
in each bad iteration there must be at least one non-edge $(x,y)\in X\times Y$ 
with $|N_R(x, y)| \geq \Delta$. In \cref*{alg:cluster-decomp-low:line:eliminate-nonincident} we will therefore find at least one non-edge $(x, y)$ with $|N_R(x, y)| \geq \Delta$. In this case, the algorithm stops immediately (if it detects an induced 4-cycle), or  we remove at least $\Delta$ nodes from $R$. 
The latter event clearly happens at most $|R| / \Delta$ times, and thus the total number of bad iterations is at most  $O(n / \Delta)$. In each such iteration, \cref*{alg:cluster-decomp-low:line:enumerate-nonincident} takes time $O(|X| \cdot |Y| \cdot |R|) = O(n \Delta^2)$ in the worst case, so the running time of \cref*{alg:cluster-decomp-low:line:enumerate-nonincident} across all bad iterations is at most  $O((n / \Delta) \cdot n \Delta^2) \le O(n^2 \Delta)$. Finally, in \cref*{alg:cluster-decomp-low:line:eliminate-nonincident} we then repeatedly test if some sufficiently large sets $N_R(x, y)$ are cliques. By the same argument as for \cref*{alg:cluster-decomp-low:line:eliminate-incident}, the total time spent on this step across all iterations of the outer loop is $O(n^2)$.
\end{proof}

\begin{theorem}[Layered Cluster Decomposition] \label{thm:clique-decomp}
Let $G = (V, E)$ be a graph, and let $L$ and $H$ be integers with \makebox{$1 \leq L \leq H = \lfloor\log n\rfloor$}. 
There is an $\tilde{O}(n^2\cdot 2^L + n^3/2^L)$-time algorithm 
that \mbox{either reports} an induced 4-cycle in $G$,
or returns 
\begin{itemize}
    \item 
    a vertex partition $V = V_L \sqcup \dots \sqcup V_H$,
    \item 
    collections of disjoint vertex subsets $\mathcal{X}_L, \dots, \mathcal{X}_H$, and 
    \item 
    the sets $N_\ell(x, y) = N_{V_\ell}(x) \cap N_{V_\ell}(y)$ for all $\ell\in\set{L+1, \dots, H}$ and non-edges $\grp{x,y}\not\in E$, 
\end{itemize}
such that we have
\begin{description}
  \item[\textit{\emph{Levels of Clusters:}}] all  parts $V_L, \dots, V_H$ can be further decomposed into \smash{$V_\ell = \bigsqcup_{X \in \mathcal X_\ell} X$}, where each vertex subset $X \in \mathcal X_\ell$ is a clique in $G$ of size $\Theta(n / 2^\ell)$; and 
  \item[\textit{\emph{Bounded Common Neighborhoods:}}]
    $|N_\ell(x, y)| \leq O(n / 2^\ell)$
  for all $L < \ell \leq H$ and  $\grp{x, y} \not\in E$.
\end{description}
\end{theorem}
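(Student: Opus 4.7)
The plan is to apply \Cref{lem:cluster-decomp-low} iteratively with a geometrically decreasing parameter $\Delta_\ell = n/2^\ell$. I initialize $R \gets V$, and for each $\ell = L, L+1, \dots, H$ in order, I invoke \Cref{lem:cluster-decomp-low} on $G[R]$ with parameter $\Delta_\ell$. Each call either detects an induced $C_4$ (which I forward and halt), or yields cliques $\mathcal{X}_\ell$ of size $\Theta(n/2^\ell)$ that form $V_\ell$, a new remainder $R_\ell$, and the sets $N_{R_\ell}(x,y)$ of size $O(n/2^\ell)$ for all non-edges. I then update $R \gets R_\ell$. After the final iteration at level $H$, any residual vertices in $R$ are appended to $\mathcal{X}_H$ as singleton cliques, which is valid since $n/2^H = \Theta(1)$.

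For the bounded-common-neighborhood condition at any level $\ell > L$: because $V_\ell$ is extracted as a subset of $R_{\ell-1}$, we have $N_\ell(x,y) = N_{V_\ell}(x) \cap N_{V_\ell}(y) \subseteq N_{R_{\ell-1}}(x,y)$, whose size is at most $O(n/2^{\ell-1}) = O(n/2^\ell)$ by the guarantee of \Cref{lem:cluster-decomp-low} applied at level $\ell-1$. The explicit sets $N_\ell(x,y)$ required by the theorem are recoverable as the set differences $N_{R_{\ell-1}}(x,y) \setminus N_{R_\ell}(x,y)$ from the outputs of two consecutive calls. Disjointness of the $V_\ell$ and coverage $V = V_L \sqcup \dots \sqcup V_H$ both follow because each call removes extracted vertices from $R$ before the next level starts.

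The hardest part is the runtime analysis. A naive summation of the per-level cost $\tilde O(|R_{\ell-1}|^3/\Delta_\ell + |R_{\ell-1}|^{5/2}\Delta_\ell^{1/2})$ from \Cref{lem:cluster-decomp-low} gives $\tilde O(n^3)$, which exceeds the target budget. Closing this gap will require a more careful accounting: first, the cluster-decomp-large subroutine inside \Cref{lem:cluster-decomp-low} is dominated by the coarsest level $L$, contributing $\tilde O(n^3/\Delta_L) = \tilde O(n^2 \cdot 2^L)$, and at finer levels it runs on the shrinking remainders $R_{\ell-1}$; second, the extraction and maintenance of the $N_R$ sets across all higher levels can be amortized against the total initial mass $\sum_{(x,y)} |N_{R_L}(x,y)| \le n^2 \cdot O(n/2^L) = O(n^3/2^L)$, since each vertex is removed from $R$ at most once and therefore deleted from each of its associated $N_R$ sets at most once. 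The main technical work will be to interleave the clique-extraction phase with the amortized $N_R$ updates so that these two contributions fit inside the stated $\tilde O(n^2 \cdot 2^L + n^3/2^L)$ budget.
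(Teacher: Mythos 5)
Your high-level plan differs from the paper's: the paper calls \Cref{lem:cluster-decomp-low} exactly once, at the coarsest level $\Delta = n/2^L$, and then builds all finer levels $\ell > L$ by a simple loop over \emph{all} non-edges of $G$ that extracts any maintained set $N_R(x)\cap N_R(y)$ exceeding $n/2^\ell$ (it is either a clique or witnesses an induced $C_4$), with the sets maintained under vertex deletions in amortized time. Your scheme of re-invoking \Cref{lem:cluster-decomp-low} on $G[R_{\ell-1}]$ at every level has two genuine gaps. First, correctness: the theorem requires $|N_{V_\ell}(x)\cap N_{V_\ell}(y)| \le O(n/2^\ell)$ and the explicit sets $N_\ell(x,y)$ for \emph{every} non-edge of $G$, including pairs where $x$ or $y$ lies in a previously extracted clique $V_{L},\dots,V_{\ell-1}$ (this generality is really used downstream, e.g.\ in \Cref{lem:type-xhh,lem:type-hhx,lem:type-hxhx,lem:type-hxlx}, where the endpoints live in arbitrary levels). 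But the call at level $\ell-1$ is made on $G[R_{\ell-2}]$, so its guarantee $|N_{R_{\ell-1}}(x,y)| \le O(n/2^{\ell-1})$ only covers non-edges with both endpoints in $R_{\ell-2}$. For a non-edge with, say, $x \in V_L$, your containment chain only yields $|N_{V_\ell}(x,y)| \le |N_{R_L}(x,y)| \le O(n/2^L)$, which is far weaker than the required $O(n/2^\ell)$; nothing in your construction prevents such a pair from having a large common neighborhood inside $V_\ell$. Patching this forces you to scan all non-edges of $G$ at each level and extract large common neighborhoods there too --- which is precisely the paper's algorithm.

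Second, the runtime plan does not close the gap you yourself identify. The dominant obstruction is not the maintenance of the $N_R$ sets (your amortization idea for those is fine and mirrors the paper) but the repeated calls to \Cref{lem:cluster-decomp-large} inside \Cref{lem:cluster-decomp-low}: its cost is governed by the \emph{vertex} count of the remainder, and the remainders are only edge-sparse, not vertex-small, so $|R_{\ell-1}|$ can stay $\Theta(n)$. At level $\ell$ the extraction loop can run for up to $\Theta(|R_{\ell-1}|/\Delta_\ell) = \Theta(2^\ell)$ iterations, each a $\Theta(n^2)$-time call to \Cref{lem:clique-extraction}; at $\ell = H$ this is $\Theta(n)$ extractions of constant-size cliques costing $\Theta(n^3)$ in total, so the claim that this subroutine is ``dominated by the coarsest level'' is unsupported and false in the worst case. (Similar $\tilde O(n^3/\Delta_\ell) = \tilde O(n^2 2^\ell)$ terms arise from the per-pair sorting step inside \Cref{lem:cluster-decomp-low} if it is re-run at fine levels.) The paper avoids all of this because after the single coarse call, each finer level costs only $O(n^2)$ for the non-edge scan plus clique-verification charged to disjoint cliques, with the set maintenance amortized against the total initial mass $\tilde O(n^{5/2}\Delta^{1/2})$; your proposal as written does not achieve the stated $\tilde O(n^2\cdot 2^L + n^3/2^L)$ bound, and the natural repairs collapse it into the paper's construction.
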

\begin{proof}
We start by describing the algorithm, outlined in \cref{alg:cluster-decomp}, and explain why it is correct.
We first apply the algorithm from \cref{lem:cluster-decomp-low} 
with parameter $\Delta = n / 2^L$ to decompose the graph into a vertex part $V_L$ that is the disjoint union of large cliques from a family~$\mathcal X_L$, plus some vertices from a remainder set $R$. In addition, \cref{lem:cluster-decomp-low} computes the sets of common neighbors $N_R(x) \cap N_R(y)$ for all non-adjacent nodes $x, y$. We will maintain these sets throughout in a data structure that allows for efficient updates when we remove nodes from $R$ in the future (implemented as in the proof of \cref{lem:cluster-decomp-low}).

\begin{algorithm}[t]
\caption{Layered Cluster Decomposition} \label{alg:cluster-decomp}
\begin{algorithmic}[1]
  \Input{A graph $G = (V, E)$, and integer parameters $L,H$ with $1 \leq L \leq H = \lfloor\log n\rfloor$}
  \Output{Either an induced 4-cycle in $G$, or a decomposition as specified in \cref{thm:clique-decomp}}

  \smallskip
  \setlength\itemsep{\medskipamount}

  \State Run \cref{lem:cluster-decomp-low} on $G$ with parameter $\Delta = n / 2^L$. If the algorithm detects an induced 4-cycle, we stop and report it. Otherwise, the algorithm returns a set of cliques $\mathcal X_L$, a remainder set~$R$ and the $N_R(x) \cap N_R(y)$ for all non-adjacent distinct nodes $x, y \in V$. We keep these sets updated when we remove vertices from $R$ in the steps below. Set \smash{$V_L \gets \bigsqcup_{X \in \mathcal X_L} X$}. \label{alg:cluster-decomp:line:large}
  \ForEach{$\ell \gets L+1, \dots, H-1$} \label{alg:cluster-decomp:line:loop-level}
    \State Let $V_\ell \gets \emptyset$ and $\mathcal X_\ell \gets \emptyset$. \label{alg:cluster-decomp:line:init-part}
    \ForEach{non-edge $\grp{x, y} \not\in E$} \label{alg:cluster-decomp:line:loop-non-edges}
      \State\parbox[t]{\linewidth-\algorithmicindent-\algorithmicindent}{Let $X = N_R(x) \cap N_R(y)$. If $|X| > n / 2^\ell$, then verify that $X$ is a clique in $G$, and in this case update $R \gets R \setminus X$ and $V_\ell \gets V_\ell \cup X$ and $\mathcal X_\ell \gets \mathcal X_\ell \cup \set{X}$. Otherwise, report that~$G$ contains an induced 4-cycle.} \label{alg:cluster-decomp:line:extract}
    \EndForEach
  \EndForEach
  \State Set $V_H \gets R$ and take $\mathcal X_H = \set{\set{x} : x \in V_H}$.
\end{algorithmic}
\end{algorithm}

We then iterate over $\ell \gets L+1, \dots, H-1$. 
For each choice of $\ell$,
we try out all non-edges $\grp{x, y}$ and check if the set $X = N_R(x) \cap N_R(y)$ has size more than $n / 2^\ell$. 
If so, we check if $X$ is a clique.
If it is a clique, we remove $X$ from $R$ and include it as a set in $\cal{X}_\ell$ and its vertices in $V_\ell$. 
If $X$ is not a clique, we report that there is an induced 4-cycle in the graph.
This final reporting step is correct by \Cref{obs:neighborhood-clique}.

Once these iterations are complete, we take $V_H$ to be the set of remaining nodes in $R$, and let~$\mathcal X_H$ be the trivial partition of $V_H$ into cliques of size one.

At this point, we have computed all sets $V_\ell$.
By scanning through the vertices in these sets,
we can record for each vertex $v$ in the graph the unique index $\ell$ such that $v\in V_\ell$. Now, let~$\tilde{R}$ denote the initial set $R$ returned in \cref*{alg:cluster-decomp:line:large} of \Cref{alg:cluster-decomp}.
In that step, we will have computed $N_{\tilde{R}}(x)\cap N_{\tilde{R}}(y)$ for all non-edges $\grp{x,y}\not\in E$. By definition, for each $\ell$ the sets $N_\ell(x, y) =  N_{V_\ell}(x)\cap N_{V_\ell}(y)$ can be equivalently written as 
\[N_\ell(x,y) = V_\ell\cap \grp{N_{\tilde{R}}(x)\cap N_{\tilde{R}}(y)}.\] 
So having computed and saved the $N_{\tilde{R}}(x,y)$ sets,
we can compute all of the $N_\ell(x,y)$ sets by scanning through the the vertices $v$ in the $N_{\tilde{R}}(x,y)$ sets, for each $v$ checking which part $V_\ell$ it belongs to, and including $v$ in $N_\ell(x,y)$ (with efficient data structures as in the proof of \Cref{lem:cluster-decomp-low}).
This completes the description of the algorithm.

\paragraph{Correctness.}
We now explain why the algorithm is correct.
First, we have already proved above that \Cref{alg:cluster-decomp} reports an induced 4-cycle only when $G$ contains an induced 4-cycle.

Second, we claim each set $\mathcal{X}_\ell$ 
only contains cliques of size at least $n / 2^\ell$.
This holds for $\ell=L$ by 
setting $\Delta = n/2^L$ for our application
of \Cref{lem:cluster-decomp-low} in
\cref*{alg:cluster-decomp:line:large},
% \Cref{alg:cluster-decomp},
and holds for $\ell > L$ by our rule for adding cliques in 
\cref*{alg:cluster-decomp:line:extract}.
% of \Cref{alg:cluster-decomp}. 
Also, although not explicitly written in \Cref{alg:cluster-decomp}, if a clique in $\cal{X}_\ell$ has size greater than $n/2^{\ell-1}$,
then we split it into several cliques of size at most $n/2^{\ell-1}$ and at least $n/2^\ell$.
This extra post-processing step ensures that all cliques in $\cal{X}_\ell$ have size $\Theta(n/2^\ell)$.
  
Third, we claim that  $|N_\ell(x,y)|\le n/2^{\ell-1}$ for all $L < \ell \leq H$ and all non-edges $\grp{x, y}$.
Indeed, if this were not the case for some index $\ell$ and non-edge $\grp{x,y}$,
then in the $(\ell-1)^{\text{st}}$ iteration of \cref*{alg:cluster-decomp:line:extract} of \Cref{alg:cluster-decomp}
we would have extracted 
$N_{R}(x)\cap N_R(y) \supseteq N_\ell(x,y)$ and included it in $\mathcal{X}_{\ell-1}$.
In particular, all the vertices in $N_\ell(x,y)$ would be deleted from $R$ before iteration $\ell$,
contradicting the assumption that $|N_\ell(x,y)| > n/2^{\ell-1}$.

\paragraph{Running Time.}
We finally analyze the algorithm's running time. The initial call to \cref{lem:cluster-decomp-low} in \cref*{alg:cluster-decomp:line:large}
of \Cref{alg:cluster-decomp}
takes  $\tilde O(n^3 / \Delta + n^{5/2} \Delta^{1/2}) = \tilde O(n^2 \cdot 2^L + n^3 / 2^{L/2})$ time. In each of the $O(\log n)$ iterations of the loop in \cref*{alg:cluster-decomp:line:loop-level}, we enumerate $O(n^2)$ non-edges in \cref*{alg:cluster-decomp:line:loop-non-edges}. 
For each non-edge we query the size of~$N_R(x) \cap N_R(y)$ in constant time (by maintaining an appropriate counter), and then possibly test if $N_R(x) \cap N_R(y)$ is a clique. 
If this set is a clique, we remove it from $R$ and never again need to check if its edges belong to a clique.
If this set is not a clique, we halt.
Thus, this last clique checking and extraction procedure takes at most $O(n^2)$ time overall, because we only ever extract disjoint cliques from an $n$-node graph. 
\end{proof}

\section{Induced 4-Cycle Detection} \label{sec:full}

In this section, we present our algorithm for induced 4-cycle detection.
Our algorithm begins by applying the algorithm from  \cref{thm:clique-decomp} with parameters $L = \lfloor\frac12 \log n\rfloor$ and $H = 
\lfloor \log n\rfloor$.
The algorithm reports an induced 4-cycle, in which case we are done,
or obtains a partition $V = V_L \sqcup \dots \sqcup V_H$ where each part~$V_\ell$ is the disjoint union of cliques of size $\Theta(n / 2^\ell)$ from a collection $\mathcal X_\ell$,
along with some additional data.
In \Cref{subsec:cluster-levels}, we assume we are given the data of such a decomposition, and apply the results from \Cref{sec:cluster} to design various algorithms for finding induced 4-cycles depending on which levels $V_\ell$ its nodes come from.
In \Cref{subsec:final-alg} we combine all of these algorithms together to prove \Cref{thm:deterministic}.

\subsection{Casework on Cluster Levels}
    \label{subsec:cluster-levels}

    Let $G = (V,E)$ be the input graph.
    Set parameters $L = \lfloor (1/2)\log n\rfloor$ and $H = \lfloor \log n\rfloor$.
    Throughout this subsection, we assume we have a vertex partition $V = V_L\sqcup\dots \sqcup V_H$, families of clusters $X_\ell$ for $\ell\in\set{L, \dots, H}$, and access to common neighborhoods
    $N_\ell(x,y) = N_{V_\ell}(x)\cap N_{V_\ell}(y)$ for all $\ell\in\set{L+1, \dots, H}$ and non-edges $\grp{x,y}\not\in E$ 
    that satisfy the \emph{Levels of Clusters} and \emph{Bounded Common Neighborhoods} conditions from \Cref{thm:clique-decomp}.
    We refer to the cliques appearing in the $\mathcal{X}_\ell$ collections as \emph{clusters}.

    Recall that we 
    represent induced 4-cycles as tuples $(a,b,c,d)$ of their vertices,
    ordered such that $\grp{a,b}$, $\grp{b,c}$, $\grp{c,d}$, $\grp{d,a}$ are edges, and $\grp{a,c}$, $\grp{b,d}$ are not edges in $G$. 
    We say an induced 4-cycle is \emph{$k$-clustered} if its 
    vertices come from $k$ distinct clusters.

    No induced 4-cycle can be 0-clustered, because the clusters partition the vertices of $G$,
    and 
    no induced 4-cycle can be 1-clustered,
    since if four vertices lie in a single cluster they form a four-clique. 
    Thus, each induced 4-cycle is $k$-clustered for some $k\in\set{2,3,4}$.

    The following result lets us detect induced 4-cycles which are 2-clustered.
    If we do not find any such 4-cycles,
    we are able to impose orderings on the inter-cluster edges as described in \Cref{def:ordered-clusters}.

        \begin{lemma}[2-Clustered Detection] \label{lem:2-clustered}
        There is an $O(n^2)$-time algorithm that detects a 2-clustered induced 4-cycle in $G$ if any exist, and otherwise returns concise orderings for all pairs of clusters.
        \end{lemma}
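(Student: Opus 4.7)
The plan is to enumerate every unordered pair of distinct clusters $(X, Y)$, with $X, Y$ drawn from $\bigcup_{\ell=L}^{H} \mathcal X_\ell$, and apply \cref{lem:cluster-pair} to the induced subgraph $G[X \sqcup Y]$. If any invocation reports an induced $C_4$, we return it immediately; otherwise, we collect the concise orderings $(f_{XY}, g_{XY})$ produced by \cref{lem:cluster-pair} and output them for every cluster pair.

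Correctness will follow from the observation that any 2-clustered induced $C_4$ must have exactly two vertices in each of its two clusters. Indeed, a cluster is a clique in $G$ and an induced $C_4$ contains no triangle, so no cluster can host three or more vertices of the cycle. Hence, if a 2-clustered induced $C_4$ exists in $G$, then some pair $(X, Y)$ of distinct clusters satisfies that $G[X \sqcup Y]$ contains an induced $C_4$, which \cref{lem:cluster-pair} will detect. Conversely, if no pair triggers a 4-cycle report, then \cref{lem:cluster-pair} will have certified concise orderings for every cluster pair, as required.

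For the running time, \cref{lem:cluster-pair} processes the pair $(X, Y)$ in time $O(|X| \cdot |Y|)$. Since the clusters across all levels partition $V$, we have $\sum_X |X| = n$, and so the total work is bounded by
\begin{equation*}
\sum_{\substack{X \neq Y}} |X| \cdot |Y| \;\leq\; \Bigl(\sum_X |X|\Bigr)^{\!2} \;=\; n^2,
\end{equation*}
giving an overall runtime of $O(n^2)$ as desired.

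The main subtlety is merely verifying that the summation over cluster pairs does not blow up beyond $n^2$, which is resolved cleanly by the fact that the clusters form a partition of $V$. All other conceptual work has already been absorbed by \cref{lem:cluster-pair}, so this lemma is in essence a straightforward application of it.
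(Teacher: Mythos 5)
Your proof is correct and matches the paper's argument essentially verbatim: enumerate all pairs of distinct clusters, invoke the cluster-pair detection lemma on each, and bound the total cost by $\bigl(\sum_X |X|\bigr)^2 \le n^2$ using disjointness of the clusters. No gaps.
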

        \begin{proof}
        Let
            \[\mathcal{X} = \bigsqcup_{\ell=L}^{H} \mathcal{X}_\ell\]
        be the collection of all clusters in our decomposition.
        We go over all pairs $(X,Y)\in\mathcal{X}$ of distinct clusters,
        and for each run the $O(|X||Y|)$ time algorithm from \Cref{lem:cluster-pair}.
        
        If any call to this algorithm detects an induced 4-cycle, we 
        can report this.
        Otherwise, if no call to \Cref{lem:cluster-pair} detects an induced 4-cycle,
        we have certified that the clusters in $\cal{X}$ are pairwise ordered,
        and obtained orderings for each cluster pair.

        The total runtime of this algorithm is asymptotically at most 
            \[\sum_{X,Y\in\mathcal{X}} |X||Y| = \grp{\sum_{X\in\mathcal{X}} |X|}^2 \le n^2\]
        where we have used the fact that the clusters in $\cal{X}$ are disjoint. 
        \end{proof}

    If \Cref{lem:2-clustered} fails to find an induced 4-cycle,
    we may now assume that the clusters in our graph are pairwise ordered.
    We will use this additional structure to seek induced 4-cycles that are $k$-clustered for $k\in\set{3,4}$
    using more sophisticated algorithms.

    \subsubsection{Cycles Among Three Clusters}
    \label{subsubsec:3-clustered}
We represent 3-clustered induced 4-cycles
as tuples $(v_1, \tilde{v}_1, v_2, v_3)$
such that $v_1, \tilde{v}_1$ belong to the same cluster, and $v_2$ and $v_3$ belong to two other distinct clusters.
We say such an induced 4-cycle has \emph{type} $\vec{t} = \pair{t_1, t_2, t_3}$
if $v_1, \tilde{v}_1\in V_{t_1}$, and $v_i\in V_{t_{i}}$ for $i\in\set{2,3}$.

To organize our casework, we informally associate each type $\vec{t}$ with labels from $\set{\t{L}, \t{H}, \starup}^3$ encoding the relative sizes of the $t_i$ coordinates of $\vec{t}$.
Intuitively, if $\vec{s}\in \set{\t{L}, \t{H}, \starup}^3$ is associated with the type $\vec{t}$, then for each $i\in [3]$,
 $\vec{s}[i] = \t{L}$ means that $t_i$ is ``low'' (close in value to $L$),
 $\vec{s}[i] = \t{H}$ means that $t_i$ is ``high'' (close in value to $H$),
and $\vec{s}[i] = (\starup)$ does provide any information about $t_i$.
We now present several different algorithms for detecting 3-clustered induced 4-cycles, 
parameterized by the types of these cycles.

\begin{lemma}[LLL Types] \label{lem:type-lll}
Fix $\vec{t} = \pair{t_1, t_2, t_3}$.
Given orderings between all cluster pairs in $G$,
we can determine in $\tilde O(n \cdot 2^{t_1 + t_2 + t_3 - \min(t_1, t_2, t_3)})$ time 
whether $G$ contains an induced 4-cycle of type $\vec{t}$.
\end{lemma}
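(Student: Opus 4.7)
The plan is to iterate over all triples of clusters coming from the three specified levels, and for each triple run (a specialization of) the algorithm from \Cref{lem:cluster-triple} whose running time scales with the largest cluster size in that triple. Concretely, by \Cref{thm:clique-decomp} every cluster $X \in \mathcal{X}_\ell$ has size $\Theta(n / 2^\ell)$, so $|\mathcal{X}_\ell| = O(2^\ell)$ and the number of triples $(A, B, C) \in \mathcal{X}_{t_1} \times \mathcal{X}_{t_2} \times \mathcal{X}_{t_3}$ of pairwise distinct clusters is $O(2^{t_1 + t_2 + t_3})$.

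For each such triple, I would run a variant of \Cref{lem:cluster-triple} tailored to finding induced $4$-cycles with exactly two nodes in $A$, one node in $B$, and one node in $C$ (i.e., only the portion of the proof of \Cref{lem:cluster-triple} handling the ``two nodes in $A$'' case). Since \Cref{lem:2-clustered} has already produced orderings between every pair of clusters, we may feed the orderings of the pairs $(A, B), (A, C), (B, C)$ into the algorithm. Inspecting the proof of \Cref{lem:cluster-triple}, the relevant subroutine performs two range-query phases: first it builds a two-dimensional range structure on $O(|A|)$ points and issues $O(|B|)$ binary-searched queries to compute the thresholds $h_\low(b)$ and $h_\high(b)$; then it builds a two-dimensional structure on $O(|C|)$ points and issues $O(|B|)$ queries to test for a pair $(b, c)$ witnessing incomparable neighborhoods in $A$. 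Both phases take $\tilde{O}(|A| + |B| + |C|)$ time even when the clusters have different sizes, so the per-triple cost is $\tilde{O}(n / 2^{\min(t_1, t_2, t_3)})$.

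Multiplying the per-triple cost by the number of triples yields the total running time
\[
O(2^{t_1 + t_2 + t_3}) \cdot \tilde{O}(n / 2^{\min(t_1, t_2, t_3)}) = \tilde{O}(n \cdot 2^{t_1 + t_2 + t_3 - \min(t_1, t_2, t_3)}),
\]
matching the target bound. The main obstacle, such as it is, is verifying that \Cref{lem:cluster-triple} transfers cleanly from the equal-size setting to clusters of unequal sizes: a quick reread of the proof shows that the common size $s$ is only used as an upper bound on $|A|, |B|, |C|$ in the runtime accounting, so the extension is essentially free. A secondary bookkeeping point is that when some of $t_1, t_2, t_3$ coincide one must avoid triples with repeated clusters, but this only changes the triple count by a constant factor.
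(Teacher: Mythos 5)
Your proposal is correct and follows essentially the same route as the paper: enumerate the $O(2^{t_1+t_2+t_3})$ triples of clusters from $\mathcal X_{t_1}\times\mathcal X_{t_2}\times\mathcal X_{t_3}$ and invoke the cluster-triple detection routine (\Cref{lem:cluster-triple}), whose cost is $\tilde O(|X_1|+|X_2|+|X_3|)\le\tilde O(n/2^{\min(t_1,t_2,t_3)})$ per triple, giving the claimed bound. Your remarks about unequal cluster sizes and avoiding repeated clusters are exactly the (minor) points the paper glosses over, and they are handled correctly.
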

\begin{proof}
Try out all clusters $X_1 \in \mathcal X_{t_1},\, X_2 \in \mathcal X_{t_2},\, X_3 \in \mathcal X_{t_3}$. 
Because each $\mathcal{X}_\ell$ consists of disjoint clusters of size $\Theta(n/2^\ell)$,
there are at most $\Theta(2^{t_1+t_2+t_3})$ such triples.
We can check if $G$ has an induced 4-cycle with two nodes in $X_1$ and one node in each of $X_2$ and $X_3$ in 
\[\tilde O(|X_1| + |X_2| + |X_3|) \le \tilde{O}(n/2^{\min(t_1,t_2,t_3)})\] time by \Cref{lem:cluster-triple}. 
Thus the total runtime is at most 
\[\tilde O(2^{t_1+t_2+t_3} \cdot n / 2^{\min\grp{t_1, t_2, t_3}}) \le \tilde O(n \cdot 2^{t_1 + t_2 + t_3 - \min(t_1, t_2, t_3)})\]
as claimed.
\end{proof}

\begin{lemma}[{$\starup$}HH Types] \label{lem:type-xhh}
Fix $\vec{t}= \pair{t_1, t_2, t_3}$ with $t_2, t_3 \ge L+1$. 
Given orderings between all cluster pairs in $G$,
we can determine if $G$ has an induced 4-cycle of type $\vec{t}$ in $ O(n^3 / 2^{\min(t_2,t_3)})$ time.
\end{lemma}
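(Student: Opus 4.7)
The plan is to enumerate induced $2$-paths of the correct shape using the bounded common neighborhoods from the layered cluster decomposition, and then apply \Cref{obs:neighborhood-size} to test whether each $2$-path extends to a $3$-clustered induced $4$-cycle of type $\vec t$. The reason we can afford this is precisely that $t_2, t_3 \ge L+1$: the sets $N_{t_2}(x,y)$ and $N_{t_3}(x,y)$ are already precomputed and of size at most $O(n/2^{t_2})$ and $O(n/2^{t_3})$ for every non-edge $(x,y)$, so the number of $2$-paths we enumerate is controlled by the level.

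Concretely, I would run two symmetric passes. In the first, iterate over all non-edges $(u,w) \in V_{t_1} \times V_{t_3}$ and then over all $v \in N_{t_2}(u,w)$, enumerating $O(n^3 / 2^{t_2})$ induced $2$-paths $(u,v,w)$. Let $X_1 \in \mathcal X_{t_1}$ be the cluster containing $u$; any induced $4$-cycle of type $\vec t$ extending $(u,v,w)$ must have its remaining $V_{t_1}$-vertex in the clique $X = X_1 \setminus \{u\}$, so I apply \Cref{obs:neighborhood-size} to $X$ and the $2$-path $(u,v,w)$, after discarding the degenerate cases $v \in X_1$ or $w \in X_1$ (which can arise only if $t_1$ equals $t_2$ or $t_3$). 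Using that $X_1$ is a clique, that $u$ is adjacent to $v$, and that $u$ is not adjacent to $w$, the hypothesis $\deg_X(v) < \codeg_X(u,w)$ simplifies cleanly to $\deg_{X_1}(v) \le \deg_{X_1}(w)$. All the degrees $\deg_{X_1}(z)$ for $X_1 \in \mathcal X_{t_1}$ and $z \in V$ are precomputed once in $O(n^2)$ time by scanning edges, so each test is $O(1)$. The second pass is symmetric: iterate over non-edges $(u,v) \in V_{t_1} \times V_{t_2}$ and $w \in N_{t_3}(u,v)$, use the $2$-path $(u,w,v)$, and check the flipped inequality $\deg_{X_1}(w) \le \deg_{X_1}(v)$. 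This pass handles $O(n^3 / 2^{t_3})$ triples, so the two passes together run in $O(n^3 / 2^{\min(t_2, t_3)})$ time, matching the claimed bound.

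Completeness holds because if a type-$\vec t$ induced $4$-cycle $(v_1, u, v, w)$ exists then the integers $\deg_{X_1}(v)$ and $\deg_{X_1}(w)$ satisfy at least one of $\deg_{X_1}(v) \le \deg_{X_1}(w)$ and $\deg_{X_1}(w) \le \deg_{X_1}(v)$, and the corresponding pass catches it when the associated triple is enumerated. The subtle point---and where I expect the hard part to lie---is justifying that the forward direction of \Cref{obs:neighborhood-size} is in fact \emph{unconditional}: the inequality $\deg_X(v) < \codeg_X(u,w)$ alone implies $|N_X(u) \cap N_X(w) \setminus N_X(v)| \ge |N_X(u) \cap N_X(w)| - |N_X(v)| > 0$, which is already enough to produce the fourth vertex of the $4$-cycle without using any ``no induced $C_4$ with two nodes in $X$'' hypothesis. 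This matters here, because that hypothesis amounts to assuming the non-existence of exactly the $4$-cycles we are trying to detect and would therefore be circular; fortunately only the forward direction is needed.
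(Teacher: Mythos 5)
Your proposal is correct, and it completes the two-cluster-anchored $2$-paths by a different mechanism than the paper. The paper's proof also enumerates the $O(n^3/2^{t_2})+O(n^3/2^{t_3})$ induced $2$-paths with middle vertex in $V_{t_2}$ (resp.\ $V_{t_3}$) via the precomputed sets $N_\ell(x,y)$, but then, for every pair $(v_2,v_3)\in V_{t_2}\times V_{t_3}$, it records the two collections $\mathcal Y(v_2,v_3),\mathcal Z(v_2,v_3)\subseteq\mathcal X_{t_1}$ of clusters reachable by an induced $2$-path from either side and simply tests whether these collections intersect; no degree counting and no appeal to \Cref{obs:neighborhood-size} is made in this case (the paper reserves that observation for the four-cluster L{$\starup$}H{$\starup$} case, where it must assume no $3$-clustered $C_4$ to get the full equivalence). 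You instead anchor each $2$-path at its $V_{t_1}$-endpoint $u$ and decide extendability inside $u$'s cluster by the comparison $\deg_{X_1}(v)\le\deg_{X_1}(w)$; your derivation of this simplification (using $N_X(u)=X$ for $X=X_1\setminus\{u\}$, $u$ adjacent to $v$, $u$ non-adjacent to $w$, and $v,w\notin X_1$) is right, and your key point is also right: the direction ``$\deg_X(v)<\codeg_X(u,w)$ implies a $C_4$'' is a pure counting statement needing no $C_4$-freeness hypothesis, so there is no circularity, while completeness follows because the target cycle $(v_1,\tilde v_1,v_2,v_3)$ is seen by both passes (via its two non-edges $\grp{\tilde v_1,v_3}$ and $\grp{v_1,v_2}$) and the totality of $\le$ guarantees one of the two mirrored tests fires. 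One cosmetic remark shared with the paper's own argument: a triggered test certifies some induced $C_4$ in $G$, not necessarily one of type exactly $\vec t$ in degenerate level-coincidence cases, which is harmless for the overall detection algorithm. What your route buys is slightly lighter bookkeeping (a single $O(n^2)$-time degree table instead of per-pair cluster sets) and an argument that is visibly unconditional; what the paper's route buys is that the intersection test is orientation-free, so it needs only one enumeration per side and no case analysis about which inequality fires. Both run in $O(n^3/2^{\min(t_2,t_3)})$ time.
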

\begin{proof}
By the assumption from the first paragraph of 
\Cref{subsec:cluster-levels},
we have access to the common neighborhoods in $V_\ell$ from all non-edges $\grp{x,y}$,
for all $\ell\in\set{L+1, \dots, H}$.
By scanning through these common neighborhoods,
we can compute for all nodes $v_2\in V_{t_2}$ and $v_3\in V_{t_3}$,
the collection $\mathcal{Y}(v_2,v_3)\sub \mathcal{X}_{t_1}$ of clusters $X$ in $\mathcal{X}_{t_1}$ such that there exists a node $v_1\in X$ 
so that $(v_1, v_2, v_3)$ is an induced 2-path in $G$.
By similar reasoning, we can compute for all vertices $v_2\in V_{t_2}$ and $v_3\in V_{t_3}$ the collection $\mathcal Z(v_2, v_3) \subseteq \mathcal X_{t_1}$ of clusters $X \in \mathcal X_{t_1}$ such that $X$ has a node $\tilde{v}_1$ such that $(v_2, v_3, \tilde{v}_1)$ is an induced 2-path.

Since any two nodes in a cluster are adjacent, 
$G$ has an induced 4-cycle of type $\vec{t}$ if and only if there exist vertices $v_2 \in V_{t_2}$ and $v_3 \in V_{t_3}$ with $\mathcal Y(v_2, v_3) \cap \mathcal Z(v_2, v_3) \neq \emptyset$. 
Having constructed these sets, we can check if they have empty intersection or not in time linear in the sizes of these sets.
The sum of the sizes of the $\mathcal{Y}(v_2, v_3)$ sets is at most the number of induced 2-paths with middle node in $V_{t_2}$.
By the \emph{Bounded Common Neighborhoods} condition of \Cref{thm:clique-decomp}, each of the at most $n^2$ non-edges in $G$ can be extended to at most $O(n/2^{t_2})$ induced 2-paths with middle node in $V_{t_2}$.
Thus the sum of the sizes of the $\mathcal{Y}(v_2, v_3)$ sets is at most   $O(n^2\cdot n/2^{t_2}) \le O(n^3/2^{t_2})$.
Symmetric reasoning shows that the  sum of the sizes of the $\mathcal{Z}(v_2, v_3)$ sets is at most $O(n^3/2^{t_3})$.
These bounds on the number of induced 2-paths with middle nodes in $V_{t_2}$ and $V_{t_3}$ also upper bound the time needed to construct the $\mathcal{Y}(v_2, v_3)$ and $\mathcal{Z}(v_2, v_3)$ sets in the first place.

Thus the overall runtime of the algorithm is at most
    \[O(n^3/2^{t_2} + n^3/2^{t_3}) \le O(n^3/2^{\min(t_2,t_3)})\]
as claimed. 
\end{proof}

\begin{lemma}[HH{$\starup$} Types] \label{lem:type-hhx}
Fix $\vec{t} = \pair{t_1, t_2, t_3}$ with $t_1, t_2 \ge L+1$. 
Given orderings between all cluster pairs in $G$,
we can determine if $G$ has an induced 4-cycle of type $\vec{t}$ in time $O(n^4 / 2^{t_1 + t_2})$.
\end{lemma}
\begin{proof}
We try out all non-edges $\grp{v_1, v_3}\in V_{t_1}\times V_{t_3}$.
For each such choice of $v_1, v_3$,
we enumerate all common neighbors $v_2 \in N_{t_2}(v_1, v_3)$ and $\tilde{v}_1 \in N_{t_1}(v_1, v_3)$, and test if $(\tilde{v}_1, v_1, v_2, v_3)$ is an induced 4-cycle. 
Any induced 4-cycle in $G$ of type $\vec{t}$ must be of this form,
so this algorithm will find such a cycle if it exists.

Since $t_1, t_2\ge L+1$,
the \emph{Bounded Common Neighborhoods} condition of \Cref{thm:clique-decomp}
ensures that
for each choice of $v_1$ and $v_3$, we try out at most $O(n/2^{t_1})$ choices of $\tilde{v}_1$ and $O(n/2^{t_2})$ choices of $v_2$.
Thus this algorithm takes at most 
    \[O(n^2\cdot (n/2^{t_1}) \cdot (n/2^{t_2})) \le O(n^4/2^{t_1+t_2})\]
time as claimed.
\end{proof}

\begin{lemma}[H{$\starup$}H Types] \label{lem:type-hhx-symmetric}
Fix $\vec{t} = \pair{t_1, t_2, t_3}$ with $t_1, t_3 \ge L+1$. 
Given orderings between all cluster pairs in $G$,
we can determine if $G$ has an induced 4-cycle of type $\vec{t}$ in time $O(n^4 / 2^{t_1 + t_3})$.
\end{lemma}
\begin{proof}
    Follows by symmetric reasoning to the proof of \Cref{lem:type-hhx}.
\end{proof}

We now combine \Cref{lem:type-lll,lem:type-xhh,lem:type-hhx,lem:type-hhx-symmetric} to detect induced 4-cycles that are 3-clustered. 

\begin{lemma}[3-Clustered Detection]
    \label{lem:3-clustered}
Given orderings between all cluster pairs in $G$,
there is an $\tilde{O}(n^{5/2})$ time algorithm that determines if $G$ contains a 
3-clustered induced 4-cycle.
\end{lemma}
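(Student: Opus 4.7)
The plan is to enumerate all triples $\vec{t} = (t_1, t_2, t_3)$ with $L \leq t_i \leq H$, and for each such type invoke whichever of Lemmas~\ref{lem:type-lll}, \ref{lem:type-xhh}, \ref{lem:type-hhx}, \ref{lem:type-hhx-symmetric} yields time $\tilde{O}(n^{5/2})$ for detecting induced 4-cycles of that type. Since every 3-clustered induced $C_4$ has some type and there are only $O(\log^3 n) = \tilde{O}(1)$ possible types, this suffices. Throughout the analysis I rely on the bounds $2^L = \Theta(n^{1/2})$ and $2^H = \Theta(n)$.

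I organize the casework by how many coordinates of $\vec{t}$ equal $L$. If all three are $L$, Lemma~\ref{lem:type-lll} gives time $\tilde{O}(n \cdot 2^{2L}) = \tilde{O}(n^2)$. If exactly two equal $L$ and the third equals some $c$ with $L < c \leq H$, Lemma~\ref{lem:type-lll} still suffices: the exponent $t_1 + t_2 + t_3 - \min(t_1, t_2, t_3)$ is exactly $L + c \leq L + H$, so the running time is $\tilde{O}(n \cdot 2^{L+H}) = \tilde{O}(n^{5/2})$. If no coordinate equals $L$, then in particular $t_2, t_3 > L$, so Lemma~\ref{lem:type-xhh} applies and yields $O(n^3 / 2^{\min(t_2, t_3)}) \leq O(n^3 / 2^{L+1}) = O(n^{5/2})$.

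The interesting case is when exactly one coordinate equals $L$. If $t_1 = L$ while $t_2, t_3 > L$, then Lemma~\ref{lem:type-xhh} applies just as above, giving $O(n^{5/2})$. If $t_2 = L$ while $t_1, t_3 > L$, Lemma~\ref{lem:type-xhh} is no longer available, but Lemma~\ref{lem:type-hhx-symmetric} applies and yields $O(n^4 / 2^{t_1 + t_3})$; on its own this can be as bad as $\Theta(n^3)$ when $t_1 + t_3$ is small, so I balance it against Lemma~\ref{lem:type-lll}, which here gives $\tilde{O}(n \cdot 2^{t_1 + t_3})$ since the minimum coordinate is $t_2 = L$. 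Using the trivial bound $\min(A, B) \leq \sqrt{AB}$ with $A = n \cdot 2^{t_1+t_3}$ and $B = n^4 / 2^{t_1+t_3}$ gives $\min(A, B) \leq \sqrt{n^5} = n^{5/2}$, so taking the faster of the two algorithms yields $\tilde{O}(n^{5/2})$ regardless of $t_1 + t_3$. The subcase $t_3 = L$ is symmetric, using Lemma~\ref{lem:type-hhx} in place of Lemma~\ref{lem:type-hhx-symmetric}.

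The main subtlety is this last balancing argument: it is precisely here that the choice $L = \lfloor \tfrac{1}{2} \log n \rfloor$ is forced, as this is the threshold at which the structural algorithm of Lemma~\ref{lem:type-lll} and the sparsity-based algorithms of Lemmas~\ref{lem:type-hhx} and \ref{lem:type-hhx-symmetric} meet at $\tilde{O}(n^{5/2})$ in their worst case. Combining the $\tilde{O}(n^{5/2})$ bound across the $O(\log^3 n)$ types gives the claimed total running time.
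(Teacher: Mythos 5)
Your proposal is correct and follows essentially the same route as the paper's proof: enumerate the $O(\log^3 n)$ types, use Lemma~\ref{lem:type-xhh} whenever both single-node levels exceed $L$, and otherwise balance Lemma~\ref{lem:type-lll} against Lemma~\ref{lem:type-hhx} or \ref{lem:type-hhx-symmetric}, which is exactly the paper's threshold argument at $t_1+t_3 \lessgtr \tfrac{3}{2}\log n$ phrased via $\min(A,B)\le\sqrt{AB}$. The only cosmetic difference is that the paper assumes $t_2\le t_3$ without loss of generality instead of treating the two symmetric subcases separately, and your side remark that this balancing alone forces $L=\lfloor\tfrac12\log n\rfloor$ is slightly off (that choice is driven by the other cases and the decomposition cost), but neither point affects correctness.
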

\begin{proof}

Try out all $O((\log n)^3)$ possible types $\vec{t} = \pair{t_1,t_2,t_3}\in\set{L,\dots, H}^3$.
For each choice $\vec{t}$, we seek a 3-clustered, induced 4-cycle in $G$ with 
type $\vec{t}$.

Fix $\vec{t} = \pair{t_1,t_2,t_3}$.
Without loss of generality, suppose that $t_2\le t_3$.

Consider the following three cases:

\begin{enumerate}
  \item\emph{If $t_2\ge L+1$:} 
    In this case, we also have $t_3\ge t_2\ge L+1$.
    Thus
    we can apply the algorithm from 
    \cref{lem:type-xhh}
    to detect an induced 4-cycle of type $\vec{t}$ in 
        \[O(n^3/2^{\min(t_2,t_3)}) \le O(n^3/2^L) \le O(n^{5/2})\]
    time.
    
  \item\emph{If $t_1 + t_3 > (3/2)\log n$:} 
    Since $L \leq (1/2)\log n$ and $H \leq \log n$, 
    in this case we have $t_1, t_3 \ge L+1$. 
    Thus, we can apply \cref{lem:type-hhx-symmetric} 
    to detect an induced 4-cycle of type $\vec{t}$ in 
    \[O(n^4/2^{t_1+t_3}) \le O(n^4/2^{(3/2)\log n}) \le O(n^{5/2})\]
    time.
    
  \item\emph{If $t_1 + t_2 + t_3 - \min(t_1, t_2, t_3) \leq (3/2) \log n$:} In this case, the algorithm from \cref{lem:type-lll}
  detects an induced 4-cycle of type $\vec{t}$ in
    \[\tilde{O}(n \cdot 2^{t_1 + t_2 + t_3 - \min(t_1, t_2, t_3)}) \le
    \tilde{O}(n\cdot 2^{(3/2)\log n}) \le \tilde{O}(n^{5/2})\]
    time.
\end{enumerate}

We claim that every type $\vec{t}$ falls into one of the three cases above.
Indeed, if a type does not satisfy case 1 above,
then we have $t_2 = L$.
This then forces $\min(t_1, t_2, t_3)= t_2$,
so 
\[t_1 + t_2 + t_3 - \min(t_1,t_2,t_3) = t_1+t_3.\]
If the above sum is at most $(3/2)(\log n)$, we satisfy case 3.
If instead the above sum is greater than $(3/2)(\log n)$, we satisfy case 2. 
Thus for each of the $\poly(\log n)$ choices of $\vec{t}$ we can check if $G$ has an induced 4-cycle of type $\vec{t}$ in $\tilde{O}(n^{5/2})$ time,
which proves the desired result. 
\end{proof}

We now move on to detecting 4-clustered induced 4-cycles. 

    \subsubsection{Cycles Among Four Clusters}
    
We say 4-clustered, induced 4-cycle $(v_1, v_2, v_3, v_4)$ has type 
$\vec{t} = \pair{t_1, t_2, t_3, t_4}$ if 
$v_i\in V_{t_i}$ for each index $i\in [4]$.
Note that since the cycle is 4-clustered,
 the $t_i$ are all distinct. 
We present various algorithms for detecting 4-clustered, induced 4-cycles with prescribed types.
To organize our casework,
we informally associate each type $\vec{t}$ with a label in  $\set{\t{L}, \t{H}, \starup}^4$, analogous to the labeling in the previous \Cref{subsubsec:3-clustered} subsection. 

The following is the 4-clustered analogue of \Cref{lem:type-lll}.

\begin{lemma}[Type LLLL] \label{lem:type-llll}
Fix a type $\vec{t} = \pair{t_1,t_2,t_3,t_4}$.
Suppose $G$ contains no 3-clustered, induced 4-cycle.
Then given concise orderings between all cluster pairs,
we can determine whether $G$ has an induced 4-cycle of type $\vec{t}$
in $\tilde O(n \cdot 2^{t_1 + t_2 + t_3 + t_4 - \min\grp{t_1, t_2, t_3, t_4}})$  time.
\end{lemma}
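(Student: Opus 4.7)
The plan is to mirror the approach of \Cref{lem:type-lll}, but replace the three-cluster subroutine with the four-cluster one. Specifically, I would enumerate every quadruple of clusters $(X_1, X_2, X_3, X_4) \in \mathcal X_{t_1} \times \mathcal X_{t_2} \times \mathcal X_{t_3} \times \mathcal X_{t_4}$, and for each such quadruple invoke (an adapted version of) \Cref{lem:cluster-quadruple} to search for an induced $C_4$ with one node in each $X_i$. Since the clusters in $\mathcal X_\ell$ are disjoint cliques of size $\Theta(n/2^\ell)$, we have $|\mathcal X_\ell| \leq O(2^\ell)$, so the total number of quadruples examined is $O(2^{t_1+t_2+t_3+t_4})$. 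Because a 4-clustered induced $C_4$ of type $\vec{t}$ must place exactly one vertex in each of four distinct clusters drawn from $\mathcal X_{t_1}, \mathcal X_{t_2}, \mathcal X_{t_3}, \mathcal X_{t_4}$, this enumeration covers every candidate.

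For each quadruple, I would aim for a runtime of $\tilde O(\max_i |X_i|) = \tilde O(n / 2^{\min(t_1,t_2,t_3,t_4)})$. Multiplied by the enumeration count this gives the claimed total runtime of $\tilde O(n \cdot 2^{t_1+t_2+t_3+t_4 - \min_i t_i})$. The global hypothesis that $G$ contains no $3$-clustered induced $C_4$ is exactly what is needed to invoke \Cref{lem:splitconn:computation} inside the proof of \Cref{lem:cluster-quadruple} for each relevant triple $(X_i, X_j, X_k)$; in particular, the internal triple-detection calls from that proof are guaranteed to never report a $4$-cycle, and could even be skipped outright given our assumption.

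The main obstacle is that \Cref{lem:cluster-quadruple} is stated for four clusters of \emph{equal} size $s$, whereas here the cluster sizes $\Theta(n/2^{t_i})$ may differ by polynomial factors. I expect this to be a routine generalization: the subroutine is built entirely from \Cref{lem:cluster-pair}, \Cref{lem:cluster-triple}, \Cref{lem:splitconn:computation}, and orthogonal range queries (\Cref{prop:range-search}), all of whose costs scale (up to logarithmic factors) with the total number of points involved, which is $\sum_i |X_i| = \Theta(\max_i |X_i|)$ up to a constant. Replaying the case analysis of \Cref{lem:cluster-quadruple} (the \stepfont{Extreme Layers} and \stepfont{Full Prefix and Suffix} cases) with unequal cluster sizes should carry through almost verbatim, since the structural conclusions of \Cref{lem:splitconn:computation} are not sensitive to the sizes of the clusters. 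The only bookkeeping needed is to re-verify that each range-query data structure is built over at most $O(\max_i |X_i|)$ points and is queried $\tilde O(\max_i |X_i|)$ times, which should follow directly from the structure of each step.
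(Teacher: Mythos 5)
Your proposal takes essentially the same approach as the paper: enumerate all $O(2^{t_1+t_2+t_3+t_4})$ quadruples of clusters from $\mathcal X_{t_1}\times\mathcal X_{t_2}\times\mathcal X_{t_3}\times\mathcal X_{t_4}$ and run the four-cluster subroutine of \Cref{lem:cluster-quadruple} on each, at cost $\tilde O(|X_1|+|X_2|+|X_3|+|X_4|) \le \tilde O(n/2^{\min(t_1,t_2,t_3,t_4)})$ per quadruple, giving the claimed bound. Your remark about adapting \Cref{lem:cluster-quadruple} to unequal cluster sizes addresses a detail the paper itself glosses over (it invokes the lemma, stated for equal sizes $s$, with running time proportional to the sum of the cluster sizes), so there is no gap.
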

\begin{proof}
Try out all clusters $X_1 \in \mathcal X_{t_1},\, X_2 \in \mathcal X_{t_2},\, X_3 \in \mathcal X_{t_3},\, X_4 \in \mathcal X_{t_4}$. 
Because each $\mathcal{X}_\ell$ consists of disjoint clusters of size $\Theta(n/2^\ell)$,
there are at most $O(2^{t_1+t_2+t_3+t_4})$ choices for these clusters.
We can check if $G$ has an induced 4-cycle with one node in each of the $X_{t_i}$ in 
    \[\tilde{O}(|X_1| + |X_2| + |X_3| + |X_4|) \le \tilde{O}(n/2^{\min(t_1,t_2,t_3,t_4)})\]
time by \Cref{lem:cluster-quadruple}.
Thus the total runtime is at most 
    \[\tilde{O}(2^{t_1+t_2+t_3+t_4}\cdot (n/2^{\min(t_1,t_2,t_3,t_4)}) ) \le \tilde O(n \cdot 2^{t_1 + t_2 + t_3 + t_4 - \min\grp{t_1, t_2, t_3, t_4}})\]
as claimed.
\end{proof}

Next, we prove a 4-clustered analogue of \Cref{lem:type-hhx}.

\begin{lemma}[Type H{$\starup$}H{$\starup$}] \label{lem:type-hxhx}
Let $t = \pair{t_1, t_2, t_3, t_4}$ be a type with $t_1, t_3\ge L+1$. 
Given orderings between all cluster pairs,
we can determine if $G$ has an induced 4-cycle
of type $\vec{t}$ in time $O(n^4 / 2^{t_1 + t_3})$. 
\end{lemma}
\begin{proof}
We try out all non-edges $\grp{v_2, v_4}\in V_{t_2}\times V_{t_4}$.
For each such choice of $v_2,v_4$, we enumerate all common neighbors $v_1 \in N_{t_1}(v_2, v_4)$ and $v_3 \in N_{t_3}(v_2, v_4)$
and test if $(v_1,v_2,v_3,v_4)$ forms an induced 4-cycle.
Any induced 4-cycle of type $\vec{t}$ must be of this form,
so the algorithm will find such a cycle if it exists.

Since $t_1, t_3\ge L+1$,
the \emph{Bounded Common Neighborhoods} condition of \Cref{thm:clique-decomp} ensures that for each choice of $v_2,v_4$, we try out at most $O(n/2^{t_1})$ choices of $v_1$ and $O(n/2^{t_3})$ choices of $v_3$.
Thus this algorithm takes at most 
    \[O(n^2\cdot (n/2^{t_1})\cdot (n/2^{t_3})) \le O(n^4/2^{t_1+t_3})\]
time as desired.
\end{proof}

\begin{lemma}[Type {$\starup$}H{$\starup$}H] \label{lem:type-hxhx-symmetric}
Let $t = \pair{t_1, t_2, t_3, t_4}$ be a type with $t_2, t_4\ge L+1$. 
Given orderings between all cluster pairs,
we can determine if $G$ has an induced 4-cycle
of type $\vec{t}$ in time $O(n^4 / 2^{t_2 + t_4})$. 
\end{lemma}
\begin{proof}
    Follows by symmetric reasoning to the proof of \Cref{lem:type-hxhx}.
\end{proof}

Our final helper algorithm does not have an analogue in the 3-clustered case. 

\begin{lemma}[Type L{$\starup$}H{$\starup$}] \label{lem:type-hxlx}
Let $\vec{t} = \pair{t_1, t_2, t_3, t_4}$ be a type with $t_3\ge L+1$. 
Suppose $G$ contains no 3-clustered, induced 4-cycle.
Then given orderings between all cluster pairs, we can determine whether $G$ has an induced 4-cycle of type $\vec{t}$ in \smash{$\tilde O(n^3 / 2^{t_3 - t_1} + n^2 \cdot 2^{t_1} + n \cdot 2^{t_2 + t_4})$} time.
\end{lemma}
\begin{proof}
Our algorithm works in three steps.

\begin{enumerate}
  \item\emph{\emph{Step 1:} Computing $\codeg_{X_1}(v_2, v_4)$ for all clusters $X_1 \in \mathcal X_{t_1}$ and $(v_2, v_4) \in V_{t_2} \times V_{t_4}$:}\newline
  We try out all triples of clusters $X_1 \in \mathcal X_{t_1}, X_2 \in \mathcal X_{t_2}, X_4 \in \mathcal X_{t_4}$. 
  For each triple we apply the
    \[\tilde{O}(|X_1| + |X_2||X_4|) \le \tilde{O}(n/2^{t_1} + n^2/2^{t_2+t_4})\]
time algorithm from \Cref{lem:common-neighbor-size-in-clusters} to 
compute $\codeg_{X_1}(v_2,v_4)$ for all $(v_2,v_4)\in V_{t_2}\times V_{t_4}$.

    Since each $\mathcal{X}_{\ell}$
    consists of disjoint clusters of size $\Theta(n/2^\ell)$,
    we run the above procedure for at most $2^{t_1+t_2+t_4}$ triples of clusters.
    Hence
    this step takes at most 
        \[\tilde{O}\grp{2^{t_1+t_2+t_4}\cdot (n/2^{t_1} + n^2/2^{t_2+t_4})} \le 
        \tilde{O}(n^2\cdot 2^{t_1} + n\cdot 2^{t_2+t_4})\]
    time. 
  
  \item\emph{\emph{Step 2:} Computing $\deg_{X_1}(v_3)$ for all clusters $X_1 \in \mathcal X_{t_1}$ and all nodes $v_3 \in V_{t_3}$:}\newline
  We  compute these degrees by scanning through the neighborhoods of each vertex $v_3\in V_{t_3}$.
  Anytime we find a neighbor of $v_3$ in a cluster $X_1\in\mathcal{X}_{t_1}$,
  we increment a counter corresponding to the pair $(v_3, X_1)$.
  This takes at most $O(n^2)$ time overall,
  because we encounter each edge in the graph at most two times. 
  
  \item\emph{\emph{Step 3:} Detecting induced 4-cycles:}\newline
  We try out all clusters $X_1\in\mathcal{X}_{t_1}$.
  Since each $\mathcal{X}_\ell$ consists of disjoint clusters of size $\Theta(n/2^\ell)$,
  there are at most $2^{t_1}$ clusters $X_1$ we try out.
  For each $X_1$, we
  enumerate all of the non-edges 
  $(v_2,v_4)\in V_{t_2}\times V_{t_4}$.
  There are at most $n^2$ such choices for $v_2$ and $v_4$.
  For each choice of $X_1$, $v_2$, and $v_4$,
  we go over 
   the common neighbors $v_3 \in N_{t_3}(v_2, v_4)$.
  The \emph{Bounded Common Neighborhoods} condition of 
  \Cref{thm:clique-decomp} ensures that we try out at most $O(n/2^{t_3})$ nodes $v_3$ in this step.  
  We check if 
    \[\deg_{X_1}(v_3) < \codeg_{X_1}(v_2, v_4).\]
If this inequality holds, we report an induced 4-cycle in $G$.
If this inequality never holds for any choice $X_1, v_2, v_4, v_3$,
then we report that $G$ has no 4-cycle of the given type. 
Since $G$ has no 3-clustered, induced 4-cycle,
this algorithm has the desired behavior by
  \Cref{obs:neighborhood-size}.

This final step takes 
\[O(2^{t_1} \cdot n^2 \cdot  n/ 2^{t_3}) \le O(n^3 / 2^{t_3 - t_1})\]
time.
\end{enumerate}
Combining the runtimes from \emph{steps 1} to \emph{3}
proves the desired result.
\end{proof}

\begin{lemma}[Type {$\starup$}L{$\starup$}H] \label{lem:type-hxlx-symmetric}
Let $\vec{t} = \pair{t_1, t_2, t_3, t_4}$ be a type with $t_4\ge L+1$. 
Suppose $G$ contains no 3-clustered, induced 4-cycle.
Then given orderings between all cluster pairs, we can determine whether $G$ has an induced 4-cycle of type $\vec{t}$ in \smash{$\tilde O(n^3 / 2^{t_4 - t_2} + n^2 \cdot 2^{t_2} + n \cdot 2^{t_1 + t_3})$} time.
\end{lemma}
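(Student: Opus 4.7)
The plan is to prove \cref{lem:type-hxlx-symmetric} by a cyclic relabeling of the argument used for \cref{lem:type-hxlx}. In an induced 4-cycle $(v_1, v_2, v_3, v_4)$, the two non-edges are $\grp{v_1, v_3}$ and $\grp{v_2, v_4}$. The proof of \cref{lem:type-hxlx} enumerated the non-edge $\grp{v_2, v_4}$ along with a cluster $X_1 \in \mathcal{X}_{t_1}$, and invoked \cref{obs:neighborhood-size} on the induced 2-path $(v_2, v_3, v_4)$ to detect the missing vertex $v_1 \in X_1$. Here I will instead enumerate the non-edge $\grp{v_1, v_3}$ along with a cluster $X_2 \in \mathcal{X}_{t_2}$, and invoke \cref{obs:neighborhood-size} on the induced 2-path $(v_1, v_4, v_3)$ to detect the missing vertex $v_2 \in X_2$. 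This is the natural symmetry obtained by rotating the indices $(1,2,3,4) \mapsto (2,3,4,1)$, and the hypothesis $t_4 \geq L+1$ plays exactly the role that $t_3 \geq L+1$ played in \cref{lem:type-hxlx}.

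Concretely, the algorithm mirrors the three-step scheme. First, I compute $\codeg_{X_2}(v_1, v_3)$ for every cluster $X_2 \in \mathcal{X}_{t_2}$ and every pair $(v_1, v_3) \in V_{t_1} \times V_{t_3}$ by applying \cref{lem:common-neighbor-size-in-clusters} on each triple $(X_1, X_2, X_3) \in \mathcal{X}_{t_1} \times \mathcal{X}_{t_2} \times \mathcal{X}_{t_3}$ in time $\tilde{O}(|X_2| + |X_1|\,|X_3|)$. Summing over the at most $O(2^{t_1 + t_2 + t_3})$ triples, this costs $\tilde{O}(n^2 \cdot 2^{t_2} + n \cdot 2^{t_1 + t_3})$ overall. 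Second, I compute $\deg_{X_2}(v_4)$ for every $X_2 \in \mathcal{X}_{t_2}$ and every $v_4 \in V_{t_4}$ in $O(n^2)$ time by scanning adjacency lists, incrementing a counter $(v_4, X_2)$ for each edge incident on $v_4$.

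Third, I enumerate all clusters $X_2 \in \mathcal{X}_{t_2}$ (at most $O(2^{t_2})$), all non-edges $\grp{v_1, v_3} \in V_{t_1} \times V_{t_3}$ (at most $n^2$), and for each such non-edge all common neighbors $v_4 \in N_{t_4}(v_1, v_3)$. Because $t_4 \geq L+1$, the \stepfont{Bounded Common Neighborhoods} guarantee from \cref{thm:clique-decomp} ensures there are at most $O(n / 2^{t_4})$ such $v_4$, so this enumeration costs $O(2^{t_2} \cdot n^2 \cdot n / 2^{t_4}) = O(n^3 / 2^{t_4 - t_2})$ time. For each quadruple $(X_2, v_1, v_3, v_4)$ I test whether $\deg_{X_2}(v_4) < \codeg_{X_2}(v_1, v_3)$, and report an induced $C_4$ exactly when this inequality holds. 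Summing the three step costs yields the runtime $\tilde{O}(n^3 / 2^{t_4 - t_2} + n^2 \cdot 2^{t_2} + n \cdot 2^{t_1 + t_3})$ claimed in the lemma.

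Correctness follows by applying \cref{obs:neighborhood-size} with the cluster $X = X_2$ and the induced 2-path $(u, v, w) = (v_1, v_4, v_3)$: the edges $\grp{v_1, v_4}, \grp{v_4, v_3}$ are present because $v_4 \in N_{t_4}(v_1, v_3)$, and $\grp{v_1, v_3}$ is a non-edge by construction, so the 2-path is indeed induced. Since $G$ has no 3-clustered induced $C_4$ (which we may assume by first running \cref{lem:3-clustered}, absorbed into the $\tilde{O}(n^{5/2})$ bound already covered by the other terms) and $X_2$ is a cluster not containing $v_1, v_3, v_4$, \cref{obs:neighborhood-size} tells us that the sought vertex $v_2 \in X_2$ closing $(v_1, v_2, v_3, v_4)$ into an induced 4-cycle of type $\vec{t}$ exists if and only if $\deg_{X_2}(v_4) < \codeg_{X_2}(v_1, v_3)$. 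There is no essential obstacle in this proof — it is a direct symmetric adaptation — and the only subtlety is taking care to orient the 2-path so that the missing vertex we look for is the one in $X_2 \subseteq V_{t_2}$.
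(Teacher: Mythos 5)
Your proposal is correct and is exactly what the paper intends: its proof of \cref{lem:type-hxlx-symmetric} is the one-line appeal to symmetry with \cref{lem:type-hxlx}, and you have simply spelled out that symmetric argument, correctly swapping the roles so that $X_2\in\mathcal{X}_{t_2}$ supplies the missing vertex, the enumerated non-edge is $\grp{v_1,v_3}$, the induced 2-path handed to \cref{obs:neighborhood-size} is $(v_1,v_4,v_3)$ with the bounded-common-neighborhood bound applied at level $t_4\ge L+1$, and the three step costs recombine to the claimed bound. The aside about rerunning \cref{lem:3-clustered} is unnecessary (the absence of 3-clustered cycles is a hypothesis of the lemma) but harmless.
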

\begin{proof}
    Follows by symmetric reasoning to the proof of \Cref{lem:type-hxlx}.
\end{proof}

We now combine \Cref{lem:type-llll,lem:type-hxhx,lem:type-hxlx} to detect induced 4-cycles that are 4-clustered.

\begin{lemma}[4-Clustered] \label{lem:4-clustered}
Suppose $G$ does not contain any 2-clustered or 3-clustered induced 4-cycles.
Then given concise orderings between all cluster pairs,
there is an $\tilde{O}(n^{17/6})$ time algorithm that determines if $G$ has a 4-clustered induced 4-cycle.
\end{lemma}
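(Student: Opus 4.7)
The plan is to iterate over all $O((\log n)^4)$ types $\vec{t} = (t_1, t_2, t_3, t_4) \in \{L, \dots, H\}^4$ and, for each, invoke whichever of Lemmas~\ref{lem:type-llll}, \ref{lem:type-hxhx}, \ref{lem:type-hxhx-symmetric}, \ref{lem:type-hxlx}, \ref{lem:type-hxlx-symmetric} is applicable and most efficient. Crucially, the dihedral symmetries of a 4-cycle leave the unordered pair of diagonals $\{\{t_1,t_3\},\{t_2,t_4\}\}$ invariant, so by cyclically relabelling positions we may assume without loss of generality that $t_1 \leq t_3$, $t_2 \leq t_4$, and $t_1 + t_3 \geq t_2 + t_4$; in particular $t_3 = \max_i t_i$. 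The hypotheses of every lemma are satisfied since we have concise orderings for all cluster pairs and $G$ contains no 3-clustered induced $C_4$.

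Under these assumptions the casework splits into three cases. First, if $t_1 + t_3 < (7/6)\log n$, then $t_2 + t_4 \leq t_1 + t_3 < (7/6)\log n$ as well, so $t_1 + t_2 + t_3 + t_4 - \min_i t_i \leq (7/3)\log n - L = (11/6)\log n$, and Lemma~\ref{lem:type-llll} gives runtime $\tilde O(n \cdot 2^{(11/6)\log n}) = \tilde O(n^{17/6})$. Second, if $t_1 + t_3 \geq (7/6)\log n$ and also $t_1 \geq L+1$, then Lemma~\ref{lem:type-hxhx} runs in $O(n^4 / 2^{t_1+t_3}) \leq O(n^{17/6})$. Third, if $t_1 + t_3 \geq (7/6)\log n$ but $t_1 = L$, then $t_3 \geq (7/6)\log n - L = (2/3)\log n \geq L+1$, and we invoke Lemma~\ref{lem:type-hxlx}; its three terms bound as $n^3 / 2^{t_3 - t_1} \leq n^3 / n^{1/6} = n^{17/6}$ since $t_3 - t_1 \geq (1/6)\log n$, then $n^2 \cdot 2^{t_1} = n^{5/2}$, and finally $n \cdot 2^{t_2 + t_4} \leq n \cdot 2^{t_1 + t_3} \leq n \cdot 2^{(3/2)\log n} = n^{5/2}$.

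The delicate case is the third: the term $n^3 / 2^{t_3 - t_1}$ becomes exactly tight when $t_1 = L = (1/2)\log n$ and $t_3 = (2/3)\log n$, which is precisely the bottleneck involving cliques of sizes $\sqrt{n}$ and $n^{1/3}$ flagged in the technical overview. Once the casework above is established, summing over the $O((\log n)^4)$ types adds only a polylogarithmic overhead, yielding the claimed $\tilde{O}(n^{17/6})$ bound for detecting a 4-clustered induced $C_4$.
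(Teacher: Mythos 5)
Your proposal is correct and follows essentially the same route as the paper: enumerate all $O((\log n)^4)$ types and dispatch each, via threshold casework, to \Cref{lem:type-llll,lem:type-hxhx,lem:type-hxlx}, whose hypotheses (concise orderings, no 3-clustered induced $C_4$) are supplied by the lemma statement. The only difference is organizational: your stronger normalization $t_1\le t_3$, $t_2\le t_4$, $t_1+t_3\ge t_2+t_4$ (valid, since the dihedral symmetries of the 4-cycle allow swapping the two vertices within each diagonal and exchanging the two diagonals) collapses the paper's seven cases---the paper assumes only $t_1=\min_i t_i$ and $t_2\le t_4$ and therefore also invokes \Cref{lem:type-hxhx-symmetric,lem:type-hxlx-symmetric}---into three, and all of your runtime bounds check out (the floor in $L=\lfloor\tfrac12\log n\rfloor$ only costs constant factors). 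One aside is false but harmless: $t_1\le t_3$, $t_2\le t_4$, $t_1+t_3\ge t_2+t_4$ do not imply $t_3=\max_i t_i$ (take $t_2$ much smaller than $t_1$ and $t_4$ slightly larger than $t_3$); since none of your three cases uses this claim---each only needs $t_3\ge t_1$ and the bound on $t_2+t_4$---the argument stands as written.
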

\begin{proof}
Try out all $O((\log n)^4)$ possible types $\vec{t} = \pair{t_1,t_2,t_3,t_4}\in \set{L, \dots, H}^4$.
For each such $\vec{t}$, we seek a 4-clustered,
induced 4-cycle in $G$ with type $\vec{t}$.

Fix $\vec{t} = \pair{t_1,t_2,t_3,t_4}$.
Without loss of generality, suppose that $t_1 = \min(t_1,t_2,t_3,t_4)$ (since we can cyclically shift vertices in an order $(v_1,v_2,v_3,v_4)$ without changing the underlying 4-cycle) 
and $t_2\le t_4$ (since we can reverse the order of the vertices without changing the underlying 4-cycle). 
Consider the following cases (where we apply cases successively, so that if we ever reach a case, we assume that the conditions in all previous cases are not met):

  \begin{enumerate}
    \item\emph{If $t_1 + t_2 + t_3 + t_4 - \min(t_1, t_2, t_3, t_4) \leq (11/6) \log n$:} 
    
    In this case, we apply \cref{lem:type-llll} to
    detect an induced 4-cycle of type $\vec{t}$ in $\tilde O(n^{17/6})$ time.
    
    \item[2(a).]\emph{If $t_1 + t_3 > (11/6)\log n$:} 

    Since $t_3\le H \le \log n$,
    in this case we have \[t_1 > (11/6)\log n - t_3 \ge (5/6)\log n \ge L+1.\]
    Since $t_1$ is the minimum entry of $\vec{t}$,
    we have $t_3 \ge t_1 \ge L+1$ as well.
    Thus we can apply \cref{lem:type-hxhx} 
    to detect an induced 4-cycle of type $\vec{t}$
     in  $O(n^4/2^{t_1+t_3}) \le O(n^{13/6})$ time.
    
    \item[2(b).]\emph{If $t_2 + t_4 > (11/6) \log n$:} 
    
    In this case, we detect an induced 4-cycle of type $\vec{t}$ in $O(n^{13/6})$
    time by applying \Cref{lem:type-hxhx-symmetric} together with similar reasoning to the proof of case 2(a) above.

    \item[3(a).]\emph{If $t_1 \leq t_3 - (1/6) \log n$:}

    We have $t_3 - t_1 \ge (1/6)\log n$,
    and $t_1\le H - (1/6)\log n \le (5/6)\log n$.
    Since we only reach this case if the condition in case 2(b) is not met,
    we must also have $t_2 + t_4 \leq (11/6) \log n$.

    So in this case we apply 
    \cref{lem:type-hxlx} to detect an induced 4-cycle of type $\vec{t}$ in 
        \[\tilde{O}(n^3 / 2^{t_3 - t_1} + n^2 \cdot 2^{t_1} + n \cdot 2^{t_2 + t_4})
        \le \tilde{O}(n^{17/6})\]
    time.

    \item[3(b).]\emph{If $t_2 \leq t_4 - (1/6)\log n$:} 
    
    In this case, we detect an induced 4-cycle of type $\vec{t}$ in $\tilde{O}(n^{17/16})$ time by applying \Cref{lem:type-hxlx-symmetric} together with similar reasoning to the proof of case 3(a) above.
    
    \item[4(a).]\emph{If $t_1 + t_3 > (7/6) \log n$:} 
    
    Since we only reach this case if case 3(a)'s condition is not met,
    we have 
     $t_3 \leq t_1 + (1/6)\log n$.
     This implies that 
        \[t_1  = (1/2)\cdot [(t_1 + t_3) + (t_1-t_3)] > (1/2)\cdot [(7/6 - 1/6)\log n] = (1/2)(\log n) \ge L\]
    so $t_1\ge L+1$. 
    Since $t_1$ is the minimum entry of $\vec{t}$,
    we also get $t_3\ge t_1 \ge L+1$.
    
    So in this case we apply \cref{lem:type-hxhx} to detect an induced 4-cycle of type $\vec{t}$ in 
        \[O(n^4/2^{t_1+t_3}) \le O(n^{17/6})\]
    time. 
    
    \item[4(b).]\emph{If $t_2 + t_4 > (7/6) \log n$:} 

    In this case we detect an induced 4-cycle of type $\vec{t}$ in $O(n^{17/6})$ time
    by applying \Cref{lem:type-hxhx-symmetric} together with similar reasoning to the proof of case 4(a) above.
  \end{enumerate}

    We claim that every type $\vec{t}$ falls into one of the cases above.
    Indeed, if a type does not 
    satisfy case
    4(b) above,
    then we have    
        \[t_2 + t_4 \leq (7/6) \log n.\]
    If the same type does not satisfy case 1 either, then
    since $t_1 = \min(t_1,t_2,t_3,t_4)$ we have 
    \[t_2 + t_3 + t_4 > (11/6) \log n.\]
    Subtracting the first inequality from the second inequality above yields 
    \[t_3 > (2/3)\log n.\]
    Now if the type does not satisfy case 3(a), we have 
        \[t_1 > t_3 - (1/6)\log n > (1/2)\log n.\]
    But now adding the last two inequalities yields
        \[t_1 + t_3 > (7/6)\log n\]
    which implies the type satisfies case 4(a).

    This implies the cases are exhaustive, and for each of the $\poly(\log n)$
    choices of $\vec{t}$
    we can check if $G$ has an induced 4-cycle of type $\vec{t}$ in
    $\tilde{O}(n^{17/6})$ time, which proves the desired result. 
\end{proof}

\subsection{Final Algorithm}
    \label{subsec:final-alg}

    We can now prove our main result.

    \deterministic*
    \begin{proof}
        Apply \Cref{thm:clique-decomp} with parameters $L = \lfloor (1/2)\log n\rfloor$ and $H = \lfloor \log n\rfloor$.
        This takes 
            \[\tilde{O}(n^2\cdot 2^L + n^3/2^L) \le \tilde{O}(n^{5/2})\]
        time.
        If the algorithm detects an induced 4-cycle, we report it.
        Otherwise,
        \Cref{thm:clique-decomp} returns a decomposition of the graph into clusters
        (cliques satisfying certain technical conditions). 

        We then apply \Cref{lem:2-clustered} to the graph with this decomposition.
        This takes $O(n^2)$ time.
        If the algorithm detects an induced 4-cycle, we report it.
        Otherwise, \Cref{lem:2-clustered} reports that the graph contains no induced 4-cycle with nodes in at most two clusters.
        Moreover, \Cref{lem:2-clustered} verifies that every pair of clusters is ordered,
        and returns consise orderings for each cluster pair witnessing this.

        We then apply \Cref{lem:3-clustered} to the graph with its decomposition and orderings.
        This takes $\tilde{O}(n^{5/2})$  time.
        If the algorithm detects an induced 4-cycle, we report it.
        Otherwise, \Cref{lem:3-clustered} verifies that the graph contains no induced 4-cycle with nodes in at most three clusters.

        Finally, we apply \Cref{lem:4-clustered} to the graph with its decomposition and orderings,
        and the guarantee that there is no induced 4-cycle using nodes from at most three clusters.
        This takes $\tilde{O}(n^{17/6}) \le \tilde{O}(n^{3-1/6})$ time.
        If the algorithm detects an induced 4-cycle,
        we report it.
        Otherwise, \Cref{lem:4-clustered} verifies that the graph contains no induced 4-cycle with nodes in at most four clusters.
        This then implies that the graph has no induced 4-cycles whatsoever, and we can report that no such cycles exist.
    \end{proof}
\section{Conclusion}
In this paper, we presented a combinatorial, deterministic, truly subcubic  algorithm for detecting induced 4-cycles.
Prior to our work,
no truly subcubic-time algorithm for induced 4-cycle detection was known that even met \emph{either} of the conditions of being combinatorial or deterministic individually.
The most natural question in light of our result is whether the complexity of detecting induced 4-cycles can be brought all the way down to an optimal $O(n^2)$ runtime bound.

\begin{problem}
    \label{prob1:gotta-go-fast}
Can induced 4-cycle detection be solved in quadratic time? 
\end{problem}

Although in this paper we focused on runtimes for subgraph detection problems in terms of the number of vertices $n$,
parameterizing by the number of \emph{edges} $m$ in the input graph is also an interesting research direction.
Obtaining  faster algorithms for detecting induced 4-cycles in sparse graphs  could potentially help accelerate some of the subroutines used in our framework (namely the ``high level'' procedures from \Cref{subsec:cluster-levels}),
which may in turn help resolve \Cref{prob1:gotta-go-fast}.

\begin{problem}
    Can induced 4-cycle detection be solved in $m$-edge graphs in $O(m^{4/3})$ time?     
\end{problem}

We note that induced 4-cycle detection requires \smash{$m^{4/3-o(1)}$} time to solve in general, assuming a hypothesis from the field of fine-grained complexity
\cite[Theorem 2.4]{DalirrooyfardVassilevska2022}.
The current fastest algorithm for detecting induced 4-cycles in $m$-edge graphs is randomized and algebraic,
and runs in $\tilde{O}(m^{(4\omega-1)/(2\omega+1)}) \le \tilde{O}(m^{1.48})$ time \cite[Corollary 4.1]{VassilevskaWangWilliamsYu2014}.
Obtaining faster combinatorial and deterministic algorithms for this task on sparse graphs is an interesting problem.

Finally, 
the overall structure of our algorithm 
differs from most other fast  subgraph detection algorithms we are aware of in the literature.
Namely, rather than reducing  the detection problem to a randomized counting procedure as in \cite{VassilevskaWangWilliamsYu2014,BlaserKomarathSreenivasaiah2018} for example,
we  decompose the graph into large cliques we call clusters,
and then employ win/win strategies to either report induced 4-cycles,
or iteratively gain more knowledge of the structure of inter-cluster edges.

Although there is some  sense in which the 
induced 4-cycle is an exceptional pattern $H$ when it comes to identifying large cliques in induced $H$-free graphs (as discussed in \cite[Proposition 1]{GyarfasHubenkoSolymosi2002}),
the Erd\H{o}s-Hajnal conjecture proposes that for \emph{every} pattern graph $H$, 
there exists a corresponding constant $\eps = \eps(H) > 0$
such that every $n$-node graph with no induced copy of $H$ 
has a clique or independent set of size $\Omega(n^\eps)$.
This structure seems qualitatively similar to the guarantees of \Cref{thm:4cycle-free-large-clique}, the starting point of our clique decomposition,
and suggests that similar win/win strategies (based off decompositions into large cliques \emph{and independent sets})
may be possible for induced $H$-detection for patterns $H$ beyond the 4-cycle.
Even if this specific strategy turns out not to be applicable for other subgraph detection problems (because the 4-cycle is such a special pattern),
investigating algorithmic and effective versions of the Erd\H{o}s-Hajnal conjecture (even for small pattern graphs) and its potential connection to other graph algorithms questions seems like a potentially fruitful research direction, in light of our work.

\begin{problem}
    Can clique decompositions
    or proven instances of the Erd\H{o}s-Hajnal conjecture help
    obtain faster combinatorial algorithms for induced $H$ detection for other pattern graphs $H$? 
\end{problem}

\bibliographystyle{alpha}
\bibliography{main}

\end{document}